\keywords{Distributed Databases, Causal Consistency, Model Checking}
\DeclareFontFamily{U}{shuffle}{}
\DeclareFontShape{U}{shuffle}{m}{n}{ <-8>shuffle7 <8->shuffle10}{}
\newcolumntype{L}[1]{>{\raggedright\let\newline\\\arraybackslash\hspace{0pt}}m{#1}}
\newcolumntype{C}[1]{>{\centering\let\newline\\\arraybackslash\hspace{0pt}}m{#1}}
\newcolumntype{R}[1]{>{\raggedleft\let\newline\\\arraybackslash\hspace{0pt}}m{#1}}
\newcommand{\tuple}[1]{\left\langle#1\right\rangle}
\renewcommand{\ALG@beginalgorithmic}{\footnotesize}
\lstdefinelanguage{JavaScript}{%
  keywords = { async, await, break, case, catch, class, const, continue, debugger, default, delete, do, each, else, export, finally, for, function, if, import, in, instanceof, let, new, of, return, switch, this, throw, try, typeof, var, void, while, with, yield },
  morecomment = [l]{//},
  morecomment = [s]{/*}{*/},
  morestring  = [b]',
  morestring  = [b]",
  sensitive   = true,
}
\lstdefinelanguage{Java10}{
  language      = Java,
  morekeywords  ={ var },
}
\theoremstyle{plain} 
\begin{document}

\title[Robustness Against Transactional Causal Consistency]{Robustness Against Transactional Causal Consistency}

\author[S.M.~Beillahi]{Sidi Mohamed Beillahi}	

\author[A.~Bouajjani]{Ahmed Bouajjani}	

\author[C.~Enea]{Constantin Enea}	

\address{Universit\'{e} de Paris, IRIF, CNRS, F-75013 Paris, France}	
\email{beillahi@irif.fr}  
\email{abou@irif.fr}  
\email{cenea@irif.fr}  

\thanks{This work is supported in part by the European Research Council (ERC) under the European Union's Horizon 2020 research and innovation programme (grant agreement No 678177).}	





\begin{abstract}
Distributed storage systems and databases are widely used by various types of applications. Transactional access to these storage systems is an important abstraction allowing application programmers to consider blocks of actions (i.e., transactions) as executing atomically. For performance reasons, the consistency models implemented by modern databases are weaker than the standard serializability model, which corresponds to the atomicity abstraction of transactions executing over a sequentially consistent memory. Causal consistency for instance is one such model that is widely used in practice.

In this paper, we investigate application-specific relationships between several variations of causal consistency and we address the issue of verifying automatically if a given transactional program is robust against causal consistency, i.e., all its behaviors when executed over an arbitrary causally consistent database are serializable. We show that programs without write-write races have the same set of behaviors under all these variations, and we show that checking robustness is polynomial time reducible to a state reachability problem in transactional programs over a sequentially consistent shared memory. A surprising corollary of the latter result is that causal consistency variations which admit incomparable sets of behaviors admit comparable sets of robust programs. This reduction also opens the door to leveraging existing methods and tools for the verification of concurrent programs (assuming sequential consistency) for reasoning about programs running over causally consistent databases. Furthermore, it allows to establish that the problem of checking robustness is decidable when the programs executed at different sites are finite-state.
\end{abstract}
\maketitle


\section{Introduction}
Distribution and replication are widely adopted in order to implement storage systems and databases offering performant and available services. The implementations of these systems must ensure consistency guarantees allowing to reason about their behaviors in an abstract and simple way. Ideally, programmers of applications using such systems would like to have strong consistency guarantees, i.e., all updates occurring anywhere in the system are seen immediately and executed in the same order by all sites. Moreover, application programmers also need an abstract mechanism such as transactions, ensuring that blocks of actions (writes and reads) of a site can be considered as executing atomically without interferences from actions of other sites. For transactional programs, the consistency model offering strong consistency is {\em serializability} \cite{DBLP:journals/jacm/Papadimitriou79b}, i.e., every computation of a program is equivalent to another one where transactions are executed serially one after another without interference. In the non-transactional case this model corresponds to {\em sequential consistency} (SC) \cite{DBLP:journals/tc/Lamport79}. However, while serializability and SC are easier to apprehend by application programmers, their enforcement (by storage systems implementors) requires the use of global synchronization between all sites, which is hard to achieve while ensuring availability and acceptable performances \cite{DBLP:journals/jacm/FischerLP85, DBLP:journals/sigact/GilbertL02}.
For this reason, modern storage systems ensure weaker consistency guarantees. 
In this paper, we are interested in studying {\em causal consistency} \cite{DBLP:journals/cacm/Lamport78}.

Causal consistency is a fundamental consistency model implemented in several production databases, e.g., AntidoteDB, CockroachDB, and MongoDB, and extensively studied in the literature~\cite{DBLP:conf/eurosys/AlmeidaLR13,DBLP:conf/cloud/DuE0Z13,DBLP:conf/sosp/LloydFKA11,DBLP:conf/nsdi/LloydFKA13,DBLP:conf/srds/PreguicaZBDBBS14}. Basically, when defined at the level of actions, it guarantees that every two  causally related actions, say $a_1$ is causally before (i.e., it has an influence on) $a_2$, are executed in that same order, i.e., $a_1$ before $a_2$, by all sites.
The sets of updates visible to different sites may differ and read actions may return values that cannot be obtained in SC executions. The definition of causal consistency can be lifted to the level of transactions, assuming that transactions are visible to a site in their entirety (i.e., all their updates are visible at the same time), and they are executed by a site in isolation without interference from other transactions. In comparison to serializability, causal consistency allows that conflicting transactions, i.e., which read or write to a common location, be executed in different orders by different sites as long as they are not causally related. 
Actually, we consider three variations of causal consistency introduced in the literature, weak causal consistency (\wcct{})~\cite{DBLP:conf/ppopp/PerrinMJ16,DBLP:conf/popl/BouajjaniEGH17}, causal memory (\scct{})~\cite{DBLP:journals/dc/AhamadNBKH95,DBLP:conf/ppopp/PerrinMJ16}, and causal convergence (\ccvt{})~\cite{DBLP:journals/ftpl/Burckhardt14}.

\begin{figure}[t]
\lstset{basicstyle=\ttfamily\scriptsize}
\begin{subfigure}{44mm}
\begin{minipage}[c]{15mm}
\begin{lstlisting}[language=Java10]
t1 [z = 1
    x = 1]
t2 [y = 1]
\end{lstlisting}
\end{minipage}
\begin{minipage}[c]{4mm}
\footnotesize{$||$}
\end{minipage}
\begin{minipage}[c]{21mm}
\begin{lstlisting}[language=Java10]
t3 [x = 2
    r1 = z] //0
t4 [r2 = y  //1
    r3 = x] //2
\end{lstlisting}
\end{minipage}
\caption{\ccvt{} but not \scct{}.}
\label{fig:Lprogs1}
\end{subfigure}
\hspace{4mm}
\begin{subfigure}{49mm}
\begin{minipage}[c]{21mm}
\begin{lstlisting}[language=Java10]
t1 [x = 1]
t2 [r1 = x] //2	
\end{lstlisting}
\end{minipage}
\begin{minipage}[c]{4mm}
\footnotesize{$||$}
\end{minipage}
\begin{minipage}[c]{21mm}
\begin{lstlisting}[language=Java10]
t3 [x = 2]
t4 [r2 = x] //1
\end{lstlisting}
\end{minipage}
\caption{\scct{} but not \ccvt{}.}
\label{fig:Lprogs2}
\end{subfigure}
\hspace{4mm}
\begin{subfigure}{44mm}
\begin{minipage}[c]{15mm}
\begin{lstlisting}[language=Java10]
t1 [x = 2]
\end{lstlisting}
\end{minipage}
\begin{minipage}[c]{4mm}
\footnotesize{$||$}
\end{minipage}
\begin{minipage}[c]{21mm}
\begin{lstlisting}[language=Java10]
t2 [x = 1]
t3 [r1 = x] //2
t4 [r2 = x] //1
\end{lstlisting}
\end{minipage}
\caption{\wcct{} but not \scct{} nor \ccvt{}.}
\label{fig:Lprogs3}
\end{subfigure}
\caption{Program computations showing the relationship between \wcct{}, \ccvt{} and \scct{}. Transactions are delimited using brackets and the transactions issued on the same site are aligned vertically. The values read in a transaction are given in comments.}
\label{fig:Lprogs}
\end{figure}

The weakest variation of causal consistency, namely \wcct{},
allows speculative executions and roll-backs of transactions which are not causally related (concurrent).
For instance, the computation in Fig. \ref{fig:Lprogs3} is only feasible under \wcct{}: the site on the right applies ${\tt t2}$ after ${\tt t1}$ before executing ${\tt t3}$ and roll-backs ${\tt t2}$ before executing ${\tt t4}$.  \ccvt{} and \scct{} offer more guarantees. \ccvt{} enforces a total \emph{arbitration} order between all transactions which defines the order in which delivered concurrent transactions are executed by \emph{every} site. This guarantees that all sites reach the same state when all transactions are delivered. \scct{} ensures that \emph{all} values read by a site can be explained by an interleaving of transactions consistent with the causal order, enforcing thus PRAM consistency \cite{MISC:tr/princeton/Lipton88} on top of \wcct{}.
Contrary to \ccvt{}, \scct{} allows that two sites diverge on the ordering of concurrent transactions, but both models do not allow roll-backs of concurrent transactions. Thus, \ccvt{} and \scct{} are incomparable in terms of computations they admit.
The computation in Fig. \ref{fig:Lprogs1} is not admitted by \scct{} because there is no interleaving of those transactions that explains the values read by the site on the right: reading $0$ from ${\tt z}$ implies that the transactions on the left must be applied after ${\tt t3}$ while reading $1$ from ${\tt y}$ implies that both ${\tt t1}$ and ${\tt t2}$ are applied before ${\tt t4}$ which contradicts reading 2 from ${\tt x}$. However, this computation is possible under \ccvt{} because ${\tt t1}$ can be delivered to the right after executing ${\tt t3}$ but arbitrated before ${\tt t3}$, which implies that the write to ${\tt x}$ in ${\tt t1}$ will be lost. The \scct{} computation in Fig.  \ref{fig:Lprogs2} is not possible under \ccvt{} because there is no arbitration order that could explain both reads from ${\tt x}$.

As a first contribution of our paper, we show that the three causal consistency models coincide for transactional programs containing no \emph{write-write races}, i.e., concurrent transactions writing on a common variable. 
We also show that if a transactional program has a write-write race under one of these models, then it must have a write-write race under any of the other two models. This property is rather counter-intuitive since \wcct{} is strictly weaker than both \ccvt{} and \scct{}, and \ccvt{} and \scct{} are incomparable (in terms of admitted behaviors).
Notice that each of the computations in Figures \ref{fig:Lprogs1}, \ref{fig:Lprogs2}, and \ref{fig:Lprogs3} contains a write-write race which explains why none of these computations is possible under all three models.

Then, we 
investigate the problem of checking {\em robustness} of application programs against causal consistency relaxations: Given a program $P$
and a causal consistency variation $X$,
we say that $P$ is robust against $X$ if the set of computations of $P$ when running under $X$ is the same as its set of computations when running under \emph{serializability}. This means that it is possible to reason about the behaviors of $P$ assuming the simpler serializability model and no additional synchronization is required when $P$ runs under $X$ such that it maintains all the properties satisfied under serializability.
Checking robustness is not trivial, it can be seen as a form of checking program equivalence. However, the equivalence to check is between two versions of the same program, obtained using two different semantics, one more permissive than the other one. The goal is to check that this permissiveness has actually no effect on the particular program under consideration. The difficulty in checking robustness is to apprehend the extra behaviors due to the reorderings introduced by the relaxed consistency model w.r.t. serializability. This requires a priori reasoning about complex order constraints between operations in arbitrarily long computations, which may need maintaining unbounded ordered structures, and make the problem of checking robustness hard or even undecidable.

We show that verifying robustness of transactional programs against causal consistency can be reduced in polynomial time to the reachability problem in concurrent programs over SC. This allows
to reason about distributed applications running on causally consistent storage systems using the existing verification technology and it implies that
 the robustness problem is decidable for finite-state programs; the problem is PSPACE-complete when the number of sites is fixed, and EXPSPACE-complete otherwise.
 This is the first result on the decidability and complexity of verifying robustness against causal consistency.
 In fact, the problem of verifying robustness has been considered in the literature for several consistency models of distributed systems, including causal consistency~\cite{DBLP:conf/concur/0002G16,DBLP:conf/popl/BrutschyD0V17,DBLP:conf/pldi/BrutschyD0V18,DBLP:journals/jacm/CeroneG18,DBLP:conf/concur/NagarJ18}.
These works provide (over- or under-)approximate analyses for checking robustness, but none of them provides precise (sound and complete) algorithmic verification methods for solving this problem, nor addresses its decidability and complexity.

The approach we adopt for tackling this verification problem is based on a precise characterization of the set of robustness violations, i.e., executions that are causally consistent but not serializable. For both \ccvt{} and \scct{}, we show that it is sufficient to search for a special type of robustness violations, that can be simulated by serial (SC) computations of an instrumentation of the original program. These computations maintain the information needed to recognize the pattern of a violation that would have occurred in the original program under a causally consistent semantics (executing the same set of operations). A surprising consequence of these results is that a program is robust against \scct{} iff it is robust against \wcct{}, and robustness against \scct{} implies robustness against \ccvt{}. This shows that the causal consistency variations we investigate can be incomparable in terms of the admitted behaviors, but comparable in terms of the robust applications they support.




\section{Causal Consistency}\label{sec:CCSpec}

\subsection{Program syntax} 

We consider a simple programming language where a program is parallel composition of \emph{processes} distinguished using a set of identifiers $\mathbb{P}$. Our simple programming language syntax is given in Fig.~\ref{Figure:syntax}. Each process is a sequence of \emph{transactions} and each transaction is a sequence of \emph{labeled instructions}. Each transaction starts with a \plog{begin} instruction and finishes with an \plog{end} instruction.
Each other instruction is either an assignment to a process-local \emph{register} from a set $\mathbb{R}$ or to a \emph{shared variable} from a set $\mathbb{V}$, or an \plog{assume} statement.
The assignments use values from a data domain $\mathbb{D}$.
An assignment to a register $\langle reg\rangle := \langle var\rangle$ is called a \emph{read} of $\langle var\rangle$ and an assignment to a shared variable $\langle var\rangle := \langle reg\text{-}expr\rangle$ is called a \emph{write} to $\langle var\rangle$ ($\langle reg\text{-}expr\rangle$ is an expression over registers).
The statement \plog{assume} $\langle bexpr\rangle$ blocks the process if the Boolean expression $\langle bexpr\rangle$ over registers is false.
Each instruction is followed by a \plog{goto} statement which defines the evolution of the program counter. Multiple instructions can be associated with the same label which allows us to write non-deterministic programs and multiple \plog{goto} statements can direct the control to the same label which allows us to mimic imperative constructs like loops and conditionals. We assume that the control cannot pass from one transaction to another without going as expected through \plog{begin} and \plog{end} instructions.

\begin{figure}[t!]
\begin{minipage}[c]{0.49\textwidth}
{\scriptsize
\setlength{\grammarindent}{5em}
\setlength{\grammarparsep}{1pt}
\begin{grammar}
<prog>  ::= \plog{program} <process>$^{*}$

<process> ::= \plog{process} <pid> \plog{regs} <reg>$^{*}$ \\ <ltxn>$^{*}$

<ltxn> ::=  <binst> <linst>$^{*}$ <einst>

<binst> ::= <label>":" \plog{begin}";" \plog{goto} <label>";"

<einst> ::= <label>":" \plog{end}";" \plog{goto} <label>";"
\end{grammar}}
\end{minipage}
\hfill
\begin{minipage}[c]{0.47\textwidth}
{\scriptsize
\setlength{\grammarindent}{5em}
\begin{grammar}
<linst> ::= <label>":" <inst>";" \plog{goto} <label>";"

<inst> ::= <reg> ":=" <var>
  \alt <var> ":=" <reg-expr>
  \alt \plog{assume} <bexpr>
\end{grammar}}
\end{minipage}
\caption{Program syntax. $a^{*}$ indicates zero or more occurrences of $a$.  $\langle pid\rangle$, $\langle reg\rangle$, $\langle label \rangle$, and  $\langle var\rangle$ represent a process identifier, a register, a label, and a shared variable respectively. $\langle reg\text{-}expr \rangle$ is an expression over registers while $\langle bexpr \rangle$ is a Boolean expression over registers.}
\label{Figure:syntax}
\end{figure}

\subsection{Program Semantics Under Causal Memory} \label{sec:operationalModel1}

Informally, the semantics of a program under causal memory is defined as follows. The shared variables are replicated across each process, each process maintaining its own local valuation of these variables. During the execution of a transaction in a process, the shared-variable writes are stored in a \emph{transaction log} which is visible only to the process executing the transaction and which is broadcasted to all the processes at the end of the transaction\footnote{For simplicity, we assume that every transaction commits. The effects of aborted transactions shouldn't be visible to any process.}. To read a shared variable $\anaddr$, a process $\apr$ first accesses its transaction log and takes the last written value on $\anaddr$, if any, and then its own valuation of the shared variables, if  $\anaddr$ was not written during the current transaction. Transaction logs are delivered to every process in an order consistent with the \emph{causal delivery} relation between transactions, i.e., the transitive closure of the union of the \emph{program order} (the order in which transactions are executed by a process), and the \emph{delivered-before} relation (a transaction $\atr_1$ is delivered-before a transaction $\atr_2$ iff the log of $\atr_1$ has been delivered at the process executing $\atr_2$ before $\atr_2$ starts). By an abuse of terminology, we call this property \emph{causal delivery}. Once a transaction log is delivered, it is immediately applied on the shared-variable valuation of the receiving process. Also, no transaction log can be delivered to a process $\apr$ while $\apr$ is executing another transaction, we call this property \emph{transaction isolation}.

Formally, a program configuration is a triple $\gsconf = (\lsconf,\msgsconf)$ where $\lsconf: \mathbb{P} \rightarrow \lstatesconf$ associates a local state in $\lstatesconf$ to each process in $\mathbb{P}$, and $\msgsconf$ is a set of messages in transit.
A local state is a tuple  $\tuple{\pcconf,\storeconf,\valconf,\txnwrsconf}$ where
$\pcconf \in\labconf$ is the program counter, i.e., the label of the next instruction to be executed, $\storeconf: \mathbb{V} \rightarrow \mathbb{D}$ is the local valuation of the shared variables, $\valconf: \mathbb{R} \rightarrow \mathbb{D}$ is the valuation of the local registers, and $\txnwrsconf \in (\mathbb{V} \times \mathbb{D})^{*}$ is the transaction log, i.e., a list of variable-value pairs. For a local state $s$, we use $s.\pcconf$ to denote the program counter component of $s$, and similarly for all the other components of $s$.
A message $m=\tuple{\atr,\alog}$ is a transaction identifier $\atr$ from a set $\mathbb{T}$ together with a transaction log $\alog\in (\mathbb{V} \times \mathbb{D})^{*}$. We let $\mathbb{M}$ denote the set of messages.

Then, the semantics of a program $\aprog$ under causal memory is defined using a labeled transition system (LTS) $[\aprog]_{\scct{}}=(\gstatesconf,\eventsconf,\gsconf_0,\rightarrow)$ where $\gstatesconf$ is the set of program configurations, $\eventsconf$ is a set of transition labels called \emph{events}, $\gsconf_0$ is the initial configuration, and $\rightarrow\subseteq \gstatesconf\times \eventsconf\times \gstatesconf$ is the transition relation. As it will be explained later in this section, the executions of $\aprog$ under causal memory are a subset of those generated by $[\aprog]_{\scct{}}$.
The set of events is defined by:
\begin{align*}
\eventsconf =\ & \{\ \beginact(\apr,\atr), \loadact(\apr,\atr,\anaddr,\aval), \issueact(\apr,\atr,\anaddr,\aval), \storeact(\apr,\atr), \commitact(\apr,\atr): \apr\in \mathbb{P}, \atr\in \mathbb{T}, \anaddr\in \mathbb{V}, \aval\in \mathbb{D}\}
\end{align*}
where $\beginact$ and $\commitact$ label transitions corresponding to the start, resp., the end of a transaction, $\issueact$ and $\loadact$ label transitions corresponding to writing, resp., reading, a shared variable during some transaction, and $\storeact$ labels transitions corresponding to applying a transition log to the local state of the process issuing the transaction or to the state of another process that received the log. An event $\issueact$ is called an \emph{issue} while an event $\storeact$  is called a \emph{store}.

The transition relation $\rightarrow$ is partially defined in Fig.~\ref{Table:MPRules} (we will present additional constraints later in this section). The events labeling a transition are written on top of $\rightarrow$. A $\beginact$ transition will just reset the transaction log while an $\commitact$ transition will add the transaction log together with the transaction identifier to the set $\msgsconf$ of messages in transit. An $\loadact$ transition will read the value of a shared-variable looking first at the transaction log $\txnwrsconf$ and then, at the shared-variable valuation $\storeconf$, while an $\issueact$ transition will add a new write to the transaction log. Finally, a $\storeact$ transition represents the delivery of a transaction log that was in transit which is applied immediately on the shared-variable valuation $\storeconf$. 
\begin{figure}[!t]
\small\addtolength{\tabcolsep}{-5pt}
\therules{
\scriptsize
\therule
{$\text{\plog{begin}}\in\instrOf(\lsconf(\apr).\pcconf)$\quad $s = \lsconf(\apr)[\txnwrsconf \mapsto \epsilon,\pcconf\mapsto \mathsf{next}(\pcconf)]$}
{$(\lsconf,\msgsconf)\mpitrans{\beginact(\apr,\atr)} (\lsconf[\apr\mapsto s],\msgsconf)$}
\scriptsize
\dfrac
{\text{$\theload{\areg}{\anaddr}\in\instrOf(\lsconf(\apr).\pcconf)$\quad
$\mathit{eval}(\lsconf(\apr),\anaddr) = \aval$ \quad
$\mathit{rval} = \lsconf(\apr).\valconf[\areg\mapsto \aval]$\quad 
$s = \lsconf(\apr)[\valconf \mapsto \mathit{rval},\pcconf\mapsto \mathsf{next}(\pcconf)]$}}
{\text{$(\lsconf,\msgsconf)\mpitrans{\loadact(\apr,\atr,\anaddr,\aval)} (\lsconf[\apr\mapsto s],\msgsconf)$}}\\[8mm]
\scriptsize
\therule
{$\thestore{\anaddr}{\aval}\in\instrOf(\lsconf(\apr).\pcconf)$\quad
$\mathit{log} = (\lsconf(\apr).\txnwrsconf) \cdot (\anaddr,\aval)$\quad
$s = \lsconf(\apr)[\txnwrsconf \mapsto \mathit{log},\pcconf\mapsto \mathsf{next}(\pcconf)]$}
{$(\lsconf,\msgsconf)\mpitrans{\issueact(\apr,\atr,\anaddr,\aval)} (\lsconf[\apr\mapsto s],\msgsconf)$}
\scriptsize
\dfrac
{\text{$\text{\plog{end}}\in\instrOf(\lsconf(\apr).\pcconf)$\quad
 $s = \lsconf(\apr)[\pcconf\mapsto \mathsf{next}(\pcconf)]$}}
{\text{$(\lsconf,\msgsconf)\mpitrans{\commitact(\apr,\atr)} (\lsconf[\apr\mapsto s],\msgsconf\cup\{(\atr,\lsconf(p).\txnwrsconf)\})$}}\\[8mm]
\scriptsize
\dfrac
{\text{$\tuple{t,\mathit{log}}\in \msgsconf$\quad
$\mathit{store} = \lsconf(\apr).\storeconf[\anaddr\mapsto \mathit{last}(\mathit{log},\anaddr): \anaddr\in\mathbb{V}, \mathit{last}(\mathit{log},\anaddr)\neq\bot]$ \quad $s = \lsconf(\apr)[\storeconf \mapsto \mathit{store}]$}}
{\text{$(\lsconf,\msgsconf)\mpitrans{\storeact(\apr,\atr)} (\lsconf[\apr\mapsto s],\msgsconf)$}}
}
\caption{The set of transition rules defining the causal memory semantics. We assume that all the events which come from the same transaction use a unique transaction identifier $\atr$. For a function $f$, we use $f[a\mapsto b]$ to denote a function $g$ such that $g(c)=f(c)$ for all $c\neq a$ and $g(a)=b$. The function $\instrOf$ returns the set of instructions labeled by some given label while $\mathsf{next}$ gives the next instruction to execute. We use $\cdot$ to denote sequence concatenation. The function $\mathit{eval}(\lsconf(\apr),\anaddr)$ returns the value of $\anaddr$ in the local state $\lsconf(\apr)$: (1) if $\lsconf(\apr).\txnwrsconf$ contains a pair $(\anaddr,\aval)$, for some $\aval$, then $\mathit{eval}(\lsconf(\apr),\anaddr)$ returns the value of the last such pair in $\lsconf(\apr).\txnwrsconf$, and (2) $\mathit{eval}(\lsconf(\apr),\anaddr)$ returns $\lsconf(\apr).\storeconf(\anaddr)$, otherwise. Also, $\mathit{last}(\mathit{log},\anaddr)$ returns the value $v$ in the last pair $(\anaddr,\aval)$ in $\mathit{log}$, and $\bot$, if such a pair does not exist.}
\label{Table:MPRules}
\end{figure}

We say that an execution $\rho$ satisfies \emph{transaction isolation} if no transaction log is delivered to a process $\apr$ while $\apr$ is executing a transaction, i.e., if an event $\event=\storeact(\apr,\atr)$ occurs in $\rho$ before an event $\event'=\commitact(\apr,\atr')$ with $\atr' \neq \atr$, then $\rho$ contains an event $\event''=\beginact(\apr,\atr')$ between $\event$ and $\event'$. For an execution $\rho$ satisfying transaction isolation, we assume w.l.o.g. that transactions executed by different processes do not interleave, i.e., if an event $\event$ associated to a transaction $\atr$ (an event of the process executing $\atr$ or the delivery of the transaction log of $\atr$) occurs in $\rho$ before $\event'=\commitact(\apr',\atr')$, then $\rho$ contains an event $\event''=\beginact(\apr',\atr')$ between $\event$ and $\event'$.
Formally, we say that an execution $\rho$ satisfies \emph{causal delivery} if the following hold:
\begin{itemize}
  \item for any event $\beginact(\apr,\atr)$, and for any process $\apr'$, $\rho$ contains at most one event $\storeact(\apr',\atr)$,
  \item for any two events $\beginact(\apr,\atr)$ and $\beginact(\apr,\atr')$, if $\beginact(\apr,\atr)$ occurs in $\rho$ before $\beginact(\apr,\atr')$, then the event $\storeact(\apr,\atr)$ occurs before $\beginact(\apr,\atr')$ in $\rho$. This ensures that when $\apr$ issues $\atr$ it must store the writes of $\atr$ in its local state before issuing another transaction $\atr'$; 
	\item for any events $\event_1\in\{\storeact(\apr,\atr_1),\commitact(\apr,\atr_1)\}$, $\event_2=\beginact(\apr,\atr_2)$, and $\event_2'=\storeact(\apr',\atr_2)$ with $\apr\neq\apr'$, if $\event_1$ occurs in $\rho$ before $\event_2$, then there exists $\event_1'=\storeact(\apr',\atr_1)$ such that $\event_1'$ occurs before $\event_2'$ in $\rho$.
\end{itemize}
An execution $\rho$ satisfies \emph{causal memory} if it satisfies transaction isolation and causal delivery. The set of executions of $\aprog$ under causal memory, denoted by $\executionsconf_{\scct{}}(\aprog)$, is the set of executions of $[\aprog]_{\scct{}}$ satisfying causal memory.

Fig.~\ref{firstexamplecm} shows an execution under \scct{}. This execution satisfies transaction isolation since no transaction is delivered while another transaction is executing.

\begin{figure}[t]
  \begin{subfigure}{144mm}
  \begin{minipage}{144mm}
  \scalebox{0.61}
  {\begin{tikzpicture}[shape=rectangle,draw=none,font=\large,align=left]
    \node (A) at (2,0)  [] {$\beginact(\apr 1,\atr 1)$\\ $\ \issueact(\apr 1,\atr 1,x,1)$\\ $\commitact(\apr 1,\atr 1)$};
    \node (AA1) at (3.5,0)  [] {$\cdot$};
    \node (A1) at (4.5,0)  [] {$\storeact(\apr 1,\atr 1)$};
    \node (A1B) at (5.8,0)  [] {$\cdot$};
    \node (B) at (7.6,0)  [] {$\beginact(\apr 2,\atr 3)$\\ $\ \issueact(\apr 2,\atr 3,x,2)$\\ $\commitact(\apr 2,\atr 3)$};
    \node (BB1) at (9.1,0)  [] {$\cdot$};
    \node (B1) at (10.1,0)  [] {$\storeact(\apr 2,\atr 3)$};
    \node (B1C) at (11.1,0)  [] {$\cdot$};
    \node (C) at (12.1,0)  [] {$\storeact(\apr 1,\atr 3)$};
    \node (CD) at (13.4,0)  [] {$\cdot$};
    \node (D) at (14.7,0)  [] {$\storeact(\apr 2,\atr 1)$};
    \node (DE) at (16,0)  [] {$\cdot$};
    \node (E) at (17.8,0)  [] {$\beginact(\apr 2,\atr 4)$\\ $\ \loadact(\apr 2,\atr 4,x,1)$\\ $\commitact(\apr 2,\atr 4)$};
    \node (EF) at (19.5,0)  [] {$\cdot$};
    \node (F) at (21.3,0)  [] {$\beginact(\apr 1,\atr 2)$\\ $\ \loadact(\apr 1,\atr 2,x,2)$\\ $\commitact(\apr 1,\atr 2)$};
  \end{tikzpicture}}
  \end{minipage}
  \caption{\scct{} execution of the program in Fig.~\ref{fig:Lprogs2}.}
  \label{firstexamplecm}
  \end{subfigure}
  
  \begin{subfigure}{144mm}
  \begin{minipage}{144mm}
  \scalebox{0.61}
  {\begin{tikzpicture}[shape=rectangle,draw=none,font=\large,align=left]
    \node (A) at (2,0)  [] {$\beginact(\apr 1,\atr 1)$\\ $\ \issueact(\apr 1,\atr 1,z,1)$\\ $\ \issueact(\apr 1,\atr 1,x,1)$\\ $\commitact(\apr 1,\atr 1)$};
    \node (AA1) at (3.5,0)  [] {$\cdot$};
    \node (A1) at (4.5,0)  [] {$\storeact(\apr 1,\atr 1)$};
    \node (A1B) at (5.6,0)  [] {$\cdot$};
    \node (B) at (7.1,0)  [] {$\beginact(\apr 2,\atr 3)$\\ $\ \issueact(\apr 2,\atr 3,x,2)$\\ $\ \loadact(\apr 2,\atr 3,z,0)$\\ $\commitact(\apr 2,\atr 3)$};
    \node (BB1) at (8.6,0)  [] {$\cdot$};
    \node (B1) at (9.6,0)  [] {$\storeact(\apr 2,\atr 3)$};
    \node (B1C) at (10.6,0)  [] {$\cdot$};
    \node (C) at (11.6,0)  [] {$\storeact(\apr 1,\atr 3)$};
    \node (CD) at (12.7,0)  [] {$\cdot$};
    \node (D) at (13.8,0)  [] {$\storeact(\apr 2,\atr 1)$};
    \node (DE) at (14.9,0)  [] {$\cdot$};
    \node (E) at (16.4,0)  [] {$\beginact(\apr 1,\atr 2)$\\ $\ \issueact(\apr 1,\atr 2,y,1)$\\ $\commitact(\apr 1,\atr 2)$};
    \node (EE1) at (17.9,0)  [] {$\cdot$};
    \node (E1) at (18.9,0)  [] {$\storeact(\apr 1,\atr 2)$};
    \node (E1F) at (19.9,0)  [] {$\cdot$};
    \node (F) at (20.9,0)  [] {$\storeact(\apr 2,\atr 2)$};
    \node (FG) at (22,0)  [] {$\cdot$};
    \node (G) at (23.5,0)  [] {$\beginact(\apr 2,\atr 4)$\\ $\ \loadact(\apr 2,\atr 4,y,1)$\\ $\ \loadact(\apr 2,\atr 4,x,2)$\\ $\commitact(\apr 2,\atr 4)$};
  \end{tikzpicture}}
  \end{minipage}
  \caption{\ccvt{} execution of the program in Fig.~\ref{fig:Lprogs1}.}
  \label{firstexampleccv}
  \end{subfigure}
  \begin{subfigure}{144mm}
    \begin{minipage}{144mm}
    \scalebox{0.61}
    {\begin{tikzpicture}[shape=rectangle,draw=none,font=\large,align=left]
      \node (A) at (2,0)  [] {$\beginact(\apr 1,\atr 1)$\\ $\ \issueact(\apr 1,\atr 1,x,2)$\\ $\commitact(\apr 1,\atr 1)$};
      \node (AA1) at (3.5,0)  [] {$\cdot$};
      \node (A1) at (4.5,0)  [] {$\storeact(\apr 1,\atr 1)$};
      \node (A1B) at (5.8,0)  [] {$\cdot$};
      \node (B) at (7.6,0)  [] {$\beginact(\apr 2,\atr 2)$\\ $\ \issueact(\apr 2,\atr 2,x,1)$\\  $\commitact(\apr 2,\atr 2)$};
      \node (BB1) at (9.1,0)  [] {$\cdot$};
      \node (B1) at (10.1,0)  [] {$\storeact(\apr 2,\atr 2)$};
      \node (B1C) at (11.4,0)  [] {$\cdot$};
      \node (C) at (12.7,0)  [] {$\storeact(\apr 2,\atr 1)$};
      \node (CE) at (14,0)  [] {$\cdot$};
      \node (E) at (15.8,0)  [] {$\beginact(\apr 2,\atr 3)$\\ $\ \loadact(\apr 2,\atr 3,x,2)$\\ $\commitact(\apr 2,\atr 3)$};
      \node (FG) at (17.4,0)  [] {$\cdot$};
      \node (G) at (19.2,0)  [] {$\beginact(\apr 2,\atr 4)$\\ $\ \loadact(\apr 2,\atr 4,x,1)$\\  $\commitact(\apr 2,\atr 4)$};
      \node (GD) at (21,0)  [] {$\cdot$};
      \node (D) at (22.3,0)  [] {$\storeact(\apr 1,\atr 2)$};
    \end{tikzpicture}}
    \end{minipage}
    \caption{\wcct{} execution of the program in Fig.~\ref{fig:Lprogs3}.}
    \label{firstexamplewcc}
    \end{subfigure}
    \caption{For readability, the sub-sequences of events delimited by $\beginact$ and $\commitact$ are aligned vertically, the execution-flow advancing from left to right and top to bottom.}
\end{figure}

\subsection{Program Semantics Under Causal Convergence}\label{sec:operationalModel2}

Compared to causal memory, causal convergence ensures eventual consistency of process-local copies of the shared variables. Each transaction log is associated with a
timestamp and a process applies a write on some variable $\anaddr$ from a transaction log only if it has a timestamp larger than the timestamps of all the transaction logs it has already applied and that wrote the same variable $\anaddr$. For simplicity, we assume that the transaction identifiers play the role of timestamps, which are totally ordered according to some relation $<$. \ccvt{} satisfies both \emph{causal delivery} and \emph{transaction isolation} as well. Assuming that transactions are constituted of either a read alone or a write alone, \ccvt{} is equivalent to Strong Release-Acquire (SRA), a strengthening of the standard Release-Acquire of the C11 memory model~\cite{DBLP:conf/popl/LahavGV16}\footnote{This equivalence excludes the atomic read-modify-write (also know as compare-and-swap) operation which is not provided by \ccvt{}.}.

Formally, we define a variation of the LTS $[\aprog]_{\scct{}}$, denoted by $[\aprog]_{\ccvt{}}$, where essentially, the transition identifiers play the role of timestamps and are ordered by a total order $<$, each process-local state contains an additional component $\timest$ storing the largest timestamp the process has seen for each variable, and a write on a variable $\anaddr$ from a  transaction log is applied on the local valuation $\storeconf$ only if it has a timestamp larger than $\timest(\anaddr)$.
Also, a $\beginact(\apr,\atr)$ transition will choose a transaction identifier $\atr$ greater than those in the image of the $\timest$ component of $\apr$'s local state.
The transition rules of $[\aprog]_{\ccvt{}}$ that change w.r.t. those of $[\aprog]_{\scct{}}$ are given in Fig.~\ref{Table:MPRules-ccv}.
\begin{figure}[tt]
\small\addtolength{\tabcolsep}{-5pt}
\therules{
\scriptsize
\therule
{$\Phi_1$\quad
$\mathsf{img}(\lsconf(\apr).\timest) < \atr$}
{$(\lsconf,\lkconf,\msgsconf)\mpitrans{\beginact(\apr,\atr)} (\lsconf[\apr\mapsto s],\lkconf,\msgsconf)$}
\scriptsize
\dfrac
{\splitdfrac{\text{$\tuple{t,\mathit{log}}\in \msgsconf$\quad
$\mathit{store} = \lsconf(\apr).\storeconf[\anaddr\mapsto \mathit{last}(\mathit{log},\anaddr): \anaddr\in\mathbb{V}, \mathit{last}(\mathit{log},\anaddr)\neq\bot, \timest(\anaddr) < t]$}}
{\text{$\mathit{tstamp} = \lsconf(\apr).\timest[\anaddr\mapsto t: \anaddr\in\mathbb{V}, \mathit{last}(\mathit{log},\anaddr)\neq\bot, \timest(\anaddr) < t]$\quad 
$s = \lsconf(\apr)[\storeconf \mapsto \mathit{store},\timest\mapsto \mathit{tstamp}]$}}}
{\text{$(\lsconf,\lkconf,\msgsconf)\mpitrans{\storeact(\apr,\atr)} (\lsconf[\apr\mapsto s],\lkconf,\msgsconf)$}}
}
\caption{Transition rules for defining causal convergence. $\Phi_1$ is the hypothesis of the $\beginact(\apr,\atr)$ transition rule in Fig.~\ref{Table:MPRules}, and $\mathsf{img}$ denotes the image of a function.}
\label{Table:MPRules-ccv}
\end{figure}

The set of executions of $\aprog$ under causal convergence, denoted by $\executionsconf_{\ccvt{}}(\aprog)$, is the set of executions of $[\aprog]_{\ccvt{}}$ satisfying transaction isolation, causal delivery, and the fact that every process $\apr$ generates monotonically increasing transaction identifiers. 

The execution in Fig.~\ref{firstexamplecm} is not possible under causal convergence since $\atr 4$ and $\atr 2$ read $2$ and $1$ from $\anaddr$, respectively. This is possible only if $\atr 1$ and $\atr 3$ write $\anaddr$ at $\apr 2$ and $\apr 1$, respectively, which contradicts the definition of $\storeact$ transition given in Fig.~\ref{Table:MPRules-ccv} where we cannot have both $\atr 1 < \atr 3$ and  $\atr 3 < \atr 1$ at the same time. Fig.~\ref{firstexampleccv} shows an execution under \ccvt{} (we assume $\atr_1<\atr_2<\atr_3<\atr_4$). Notice that $\storeact(\apr 2,\atr 1)$ did not result in an update of $x$ because the timestamp $\atr_1$ is smaller than the timestamp of the last transaction that wrote $x$ at $\apr_2$, namely $\atr_3$, a behavior that is not possible under \scct{}. The two processes converge and store the same shared variable copy at the end of the execution.

\subsection{Program Semantics Under Weak Causal Consistency} \label{sec:operationalModel3}

Compared to the previous semantics, \wcct{} allows that reads of the same process observe concurrent writes as executing in different orders. Each process maintains a \emph{set} of values for each shared variable, and a read returns any one of these values non-deterministically. 
Transaction logs are associated with \emph{vector clocks}~\cite{DBLP:journals/cacm/Lamport78} which represent the causal delivery relation, i.e., a transaction $\atr_1$ is before $\atr_2$ in causal-delivery iff the vector clock of $\atr_1$ is smaller than the vector clock of $\atr_2$.
We assume that transactions identifiers play the role of vector clocks, which are partially ordered according to some relation $<$.
In applying the log of a transaction $\atr$ on the local state of the receiving process $\apr$, the final \emph{set} of values for each shared variable in $\apr$ will be constituted of the value in the log of $\atr$ and the values that were written by \emph{concurrent} transactions (not related by causal delivery to $\atr$). \wcct{} satisfies both \emph{causal delivery} and \emph{transaction isolation}.

Formally, in \wcct{} semantics, the local valuation of the shared variables  $\storeconf: \mathbb{V} \rightarrow (\mathbb{D} \times \mathbb{T})^{*}$ is a map that accepts a shared variable and returns a set of pairs. The pairs are constituted of values that were written concurrently and identifiers of the transactions that wrote those values.  When applying a transaction log on the local valuation store, we keep the values that were written by transactions that are concurrent with the current transaction. 
Additionally, in the \wcct{} semantics, the local state of a process has an additional component $\storeconfcopy: \mathbb{V} \rightarrow (\mathbb{D} \times \mathbb{T})$ that maps each shared variable to a single pair. $\storeconfcopy$ is obtained by taking a ``consistent'' snapshot from $\storeconf$ when a new transaction starts. Such a snapshot corresponds to a linearization of the transactions that were delivered to the process, which is consistent with the vector clock order. The snapshot associates to each variable the last value written in this linearization.  
When a process does a read from a shared variable $\anaddr$, it looks first at the transaction log $\txnwrsconf$ and then, at the variable valuation $\storeconfcopy$. 
In Fig.~\ref{Table:MPRules-cc}, we provide the transition rules of $[\aprog]_{\wcct{}}$ that change w.r.t. those of $[\aprog]_{\ccvt{}}$ and $[\aprog]_{\scct{}}$.

\begin{figure}[t]
\small\addtolength{\tabcolsep}{-10pt}
\therules{
\dfrac
{\splitdfrac{\text{$\text{\plog{begin}}\in\instrOf(\lsconf(\apr).\pcconf)$ \quad
$\mathsf{img}(\lsconf(p).\timest) < t$}}
{\text{$s = \lsconf(\apr)[\txnwrsconf \mapsto \epsilon,\storeconfcopy \mapsto \buildsnapshot(\storeconf),\pcconf\mapsto \mathsf{next}(\pcconf)]$}}}
{\text{$(\lsconf,\msgsconf)\mpitrans{\beginact(\apr,\atr)} (\lsconf[\apr\mapsto s],\msgsconf)$}}\\[2em]
\dfrac
{\splitdfrac{\text{$\theload{\areg}{\anaddr}\in\instrOf(\lsconf(\apr).\pcconf)$\quad
$\mathit{cceval}(\lsconf(\apr),\anaddr) = (\aval,\atr')$ \quad $\mathit{rval} = \lsconf(\apr).\valconf[\areg\mapsto \aval]$}}
{\text{$s = \lsconf(\apr)[\valconf \mapsto \mathit{rval},\pcconf\mapsto \mathsf{next}(\pcconf)]$}}}
{\text{$(\lsconf,\msgsconf)\mpitrans{\loadact(\apr,\atr,\anaddr,\aval)} (\lsconf[\apr\mapsto s],\msgsconf)$}}\\[2em]
\dfrac
{\text{$\text{\plog{end}}\in\instrOf(\lsconf(\apr).\pcconf)$\quad
$s = \lsconf(\apr)[\storeconfcopy \mapsto \epsilon,\pcconf\mapsto \mathsf{next}(\pcconf)]$}}
{\text{$(\lsconf,\msgsconf)\mpitrans{\commitact(\apr,\atr)} (\lsconf[\apr\mapsto s],\msgsconf\cup\{(\atr,\lsconf(p).\txnwrsconf)\})$}}\\[2em]
\dfrac
{\splitdfrac{\text{$\tuple{t,\mathit{log}}\in \msgsconf$\quad
$\mathit{store} = \lsconf(\apr).\storeconf[\anaddr\mapsto \mathit{update}(\lsconf(\apr),\anaddr,\atr,\mathit{last}(\mathit{log},\anaddr)): \anaddr\in\mathbb{V}]$}}
{\text{ $s = \lsconf(\apr)[\storeconf \mapsto \mathit{store},\pcconf\mapsto \mathsf{next}(\pcconf)]$}}}
{\text{$(\lsconf,\msgsconf)\mpitrans{\storeact(\apr,\atr)} (\lsconf[\apr\mapsto s],\msgsconf)$}}
}
\caption{Transition rules for defining weak causal consistency semantics: $\buildsnapshot(\storeconf)$ returns a consistent snapshot of $\storeconf$.
$\mathit{cceval}(\lsconf(\apr),\anaddr)$ returns the pair $(\mathit{last}(\mathit{log},\anaddr),\atr)$ if $\mathit{last}(\mathit{log},\anaddr)\neq\bot$, and returns the pair $(\aval,\atr')$ in $\lsconf(\apr).\storeconfcopy(\anaddr)$, otherwise. $\mathit{update}(\lsconf(\apr),\anaddr,\atr,\mathit{last}(\mathit{log},\anaddr))$ returns the result of appending the pair  $(\mathit{last}(\mathit{log},\anaddr),\atr)$ to the set $\lsconf(\apr).\storeconf(\anaddr)$ after removing all pairs that contain values overwritten by $\atr$.}
\label{Table:MPRules-cc}
\end{figure}

The set of executions of $\aprog$ under weak causal consistency model, denoted by $\executionsconf_{\wcct{}}(\aprog)$, is the set of executions of $[\aprog]_{\wcct{}}$ satisfying transaction isolation and causal delivery. We denote by $\tracesconf(\aprog)_{\wcct{}}$ the set of traces of executions of a program $\aprog$ under weak causal consistency.

Fig.~\ref{firstexamplewcc} shows an execution under \wcct{}, which is not possible under \ccvt{} and \scct{} because $\atr 3$ and $\atr 4$ read $2$ and $1$, respectively. Since the transactions $\atr_1$ and $\atr_2$ are concurrent, $\apr_2$ stores both values $2$ and $1$ written by these transactions. A read of $x$ can return any of these two values.

\subsection{Execution Summary} 

Let $\rho$  be an execution under $\textsf{X} \in \{\ccvt{},\ \scct{},\ \wcct{}\}$, a sequence $\tau$ of events $\issueact(\apr,\atr)$ and $\storeact(\apr,\atr)$ with $\apr\in\mathbb{P}$ and $\atr\in\mathbb{T}$ is called a \emph{summary of $\rho$} if it is obtained from $\rho$ 
by substituting every sub-sequence of transitions in $\rho$ delimited by a $\beginact$ and an $\commitact$ transition, with a single ``macro-event'' $\issueact(\apr,\atr)$. For example, $\issueact(\apr 1,\atr 1)\cdot\issueact(\apr 2,\atr 3)\cdot\storeact(\apr 1,\atr 3)\cdot\storeact(\apr 2,\atr 1)\cdot\issueact(\apr 2,\atr 4)\cdot\issueact(\apr 1,\atr 2)$ is a summary of the execution in Fig.~\ref{firstexamplecm}. 

We say that a transaction $\atr$ in $\rho$ performs an \emph{external read} of a variable $\anaddr$ if $\rho$ contains an event $\loadact(\apr,\atr,\anaddr,\aval)$ which is not preceded by a write on $\anaddr$ of $\atr$, i.e., an event $\issueact(\apr,\atr,\anaddr,\aval)$. Under \scct{} and \wcct{}, a transaction $\atr$ \emph{writes} a variable $\anaddr$ if $\rho$ contains an event $\issueact(\apr,\atr,\anaddr,\aval)$, for some $\aval$. In Fig.~\ref{firstexamplecm}, both $\atr 2$ and $\atr 4$ perform external reads and $\atr 2$ writes to $y$. A transaction $\atr$ executed by a process $\apr$ \emph{writes $\anaddr$ at process $\apr'$} if $\atr$ writes $\anaddr$ and $\rho$ contains an event $\storeact(\apr',\atr)$ (e.g., in Fig.~\ref{firstexamplecm}, $\atr 1$ writes $\anaddr$ at $\apr 2$).
Under \ccvt{}, we say that a transaction $\atr$ executed by a process $\apr$ \emph{writes $\anaddr$ at process $\apr'$} if $\atr$ writes $\anaddr$ and $\rho$ contains an event $\storeact(\apr',\atr)$ which is not preceded by an event $\storeact(\apr',\atr')$ with $\atr<\atr'$ and $\atr'$ writing $\anaddr$ (if it would be preceded by such an event then the write to $\anaddr$ of $\atr$ will be discarded). For example, in Fig.~\ref{firstexampleccv}, $\atr 1$ does \emph{not} write $\anaddr$ at $\apr 2$.

\subsection{Trace}

We define an abstract representation of executions that satisfy transaction isolation\footnote{We refer collectively to executions in $[\aprog]_\textsf{X}\mbox{ with }\textsf{X} \in \{\ccvt{},\ \scct{},\ \wcct{}\}$.}, called \emph{trace}. Essentially, a trace contains the summary of an execution (it forgets the order in which shared-variables are accessed inside a transaction) and several happens-before relations between events in its summary which record control-flow dependencies, the order between transactions issued in the same process, and data-flow dependencies, e.g. which transaction wrote the value read by another transaction. 

More precisely, the \emph{trace} of an execution $\rho$ is a tuple $\traceof{\rho} = (\tau, \po, \rfo, \sto, \cfo,\sametro)$ where $\tau$ is the summary of $\rho$, $\po$ is the \emph{program order}, which relates any two issue events $\issueact(\apr,\atr)$ and $\issueact(\apr,\atr')$ that occur in this order in $\tau$, $\rfo$ is
the \emph{write-read} relation (also called \emph{read-from}), which relates events of two transactions $\atr$ and $\atr'$ such that $\atr$ writes a value that $\atr'$ reads, $\sto$ is the \emph{write-write} order  (also called store-order), which relates events of two transactions that write to the same variable, $\cfo$ is the \emph{read-write} relation (also called \emph{conflict}), which relates events of two transactions $\atr$ and $\atr'$ such that $\atr$ reads a value overwritten by $\atr'$, and $\sametro$ is the \emph{same-transaction} relation, which relates events of the same transaction.

\begin{defi}[Trace]\label{def:traces}
Formally, the \emph{trace} of an execution $\rho$ satisfying transaction isolation is $\traceof{\rho} = (\tau, \po, \rfo, \sto, \cfo, \sametro)$ where $\tau$ is a summary of $\rho$, and
\begin{itemize}[leftmargin=1.2cm]
\item[$\po$:] relates the issue and store events $\issueact(\apr,\atr)$ and $\storeact(\apr,\atr)$ of $\atr$ and subsequently, the event $\storeact(\apr,\atr)$ with any issue event $\issueact(\apr,\atr')$ that occurs after it in $\tau$.
\item[$\rfo$:] relates  any store and issue events $\event_1=\storeact(\apr,\atr)$ and $\event_2=\issueact(\apr,\atr')$ that occur in this order in $\tau$ such that $\atr'$ performs an external read of $\anaddr$, and $\event_1$ is the last event in $\tau$ before $\event_2$ such that $\atr$ writes $\anaddr$ at $\apr$. To make the shared variable $\anaddr$ explicit, we may use $\rfo(x)$ to name the relation between $\event_1$ and $\event_2$.
\item[$\sto$:] relates events of two transactions that write to the same variable. More precisely, $\sto$ relates  any two store events $\event_1=\storeact(\apr,\atr_1)$ and $\event_2=\storeact(\apr,\atr_2)$ that occur in this order in $\tau$ provided that $\atr_1$ and $\atr_2$ both write the same variable $\anaddr$, and if $\rho$ is an execution under causal convergence, then $\atr_1$ and $\atr_2$ writes $\anaddr$ at $\apr$, and $\atr_1<\atr_2$.
To make the shared variable $\anaddr$ explicit, we may use $\sto(x)$ to name the relation between $\event_1$ and $\event_2$.
\item[$\cfo$:] relates events of two distinct transactions $\atr$ and $\atr'$ such that $\atr$ reads a value that is overwritten by $\atr'$. Formally, $\cfo(\anaddr)=\rfo^{-1}(\anaddr);\sto(\anaddr)$ (we use $;$ to denote the standard composition of relations) and $\cfo=\bigcup_{\anaddr\in\mathbb{V}} \cfo(x)$. If a transaction $\atr$ reads the initial value of $\anaddr$ then $\cfo(\anaddr)$ relates $\issueact(\apr,\atr)$ with every event $\storeact(\apr',\atr')$ with $\apr'\in\mathbb{P}$ of any other transaction $\atr'$ that writes to $\anaddr$ at $\apr'$. 
\item[$\sametro$:] relates issue events with store events of the same transaction. More precisely, $\sametro$ relates every event $\issueact(\apr,\atr)$ with every event $\storeact(\apr',\atr)$ with $\apr'\in\mathbb{P}$.
\end{itemize}
\end{defi}

\begin{figure}[t]
  \begin{minipage}[l]{0.67\textwidth}
  \scalebox{0.6}
  {
  \begin{tikzpicture}
  
    \node[shape=rectangle ,draw=none,font=\large] (A) at (0,0)  [] {$\issueact(p1,t1)$};
    \node[shape=rectangle ,draw=none,font=\large] (A1) at (1.85,0)  [] {$\storeact(p1,t1)$};
    \node[shape=rectangle ,draw=none,font=\large] (B) at (3.85,0)  [] {$\issueact(p2,t3)$};
    \node[shape=rectangle ,draw=none,font=\large] (B1) at (5.7,0)  [] {$\storeact(p2,t3)$};
    \node[shape=rectangle ,draw=none,font=\large] (C) at (7.55,0)  [] {$\storeact(p1,t3)$};
    \node[shape=rectangle ,draw=none,font=\large] (D) at (9.55,0)  [] {$\storeact(p2,t1)$};
    \node[shape=rectangle ,draw=none,font=\large] (E) at (11.55,0)  [] {$\issueact(p1,t2)$};
    \node[shape=rectangle ,draw=none,font=\large] (E1) at (13.4,0)  [] {$\storeact(p1,t2)$};
    \node[shape=rectangle ,draw=none,font=\large] (F) at (15.25,0)  [] {$\storeact(p2,t2)$};
    \node[shape=rectangle ,draw=none,font=\large] (G) at (17.3,0)  [] {$\issueact(p2,t4)$};

    \begin{scope}[every edge/.style={draw=black,very thick}]
    \path [->] (A1) edge [bend right=40,style={draw=red}] node [above,font=\large,xshift=-2.5mm,yshift=0.3mm] {$\sto$} (C);
    \path [->] (A) edge [bend right=40] node [below,font=\large] {$\po$} (A1);
    \path [->] (A) edge [bend left=40] node [above,font=\large,xshift=-3mm,yshift=-0.3mm] {$\sametro$} (A1);
    \path [->] (A) edge [bend left] node [above,font=\large,xshift=-3mm,yshift=-0.7mm] {$\sametro$} (D);
    \path [->] (A1) edge [bend left=30] node [above,font=\large] {$\po$} (E);
    \path [->] (B) edge [bend right=40] node [below,font=\large,xshift=7mm,yshift=1.7mm] {$\po$} (B1);
    \path [->] (B) edge [bend left=40] node [above,font=\large,xshift=-3mm,yshift=-0.3mm] {$\sametro$} (B1);
    \path [->] (B) edge [bend left] node [above,font=\large] {$\sametro$} (C);
    \path [->] (B1) edge [bend left=22] node [above,font=\large] {$\po$} (G);
    \path [->] (B) edge [bend right=40,style={draw=red}] node [above,font=\large,xshift=4mm,yshift=0.5mm] {$\cfo$} (D);
    \path [->] (B1) edge [bend right=20] node [above,font=\large] {$\rfo$} (G);
    \path [->] (E) edge [bend left] node [above,font=\large] {$\sametro$} (F);
    \path [->] (E) edge [bend right=40] node [below,font=\large] {$\po$} (E1);
    \path [->] (E) edge [bend left=40] node [above,font=\large,xshift=-3mm,yshift=-0.3mm] {$\sametro$} (E1);
    \path [->] (F) edge [bend left=60] node [above,font=\large] {$\rfo$} (G);
    \end{scope}
  
  \end{tikzpicture}}
  \end{minipage}
  \hfill
  \begin{minipage}[r]{0.22\textwidth}
  \scalebox{0.6}
  {
  \begin{tikzpicture}
  
    \node[shape=rectangle ,draw=none,font=\large] (A) at (0,0)  [] {$t1$};
    \node[shape=rectangle ,draw=none,font=\large] (B) at (1.7,0)  [] {$t3$};
    \node[shape=rectangle ,draw=none,font=\large] (C) at (3.4,0)  [] {$t2$};
    \node[shape=rectangle ,draw=none,font=\large] (D) at (5.1,0)  [] {$t4$};

    \begin{scope}[
                every edge/.style={draw=black,very thick}]
    \path [->] (A) edge [bend right,style={draw=red}] node [below,font=\large] {$\sto$} (B);
    \path [->] (A) edge [bend left=30] node [above,font=\large] {$\po$} (C);
    \path [->] (B) edge [bend left] node [above,font=\large] {$\po$} (D);
    \path [->] (B) edge [bend right=65,style={draw=red}] node [above,font=\large] {$\cfo$} (A);
    \path [->] (B) edge [bend right] node [below,font=\large] {$\rfo$} (D);
    \path [->] (C) edge [bend left=70] node [above,font=\large] {$\rfo$} (D);
    \end{scope}
  \end{tikzpicture}}
  \end{minipage}
  \caption{The trace of the execution in Fig.~\ref{firstexampleccv} and its transactional happens-before.}
  \label{firstexampleccvtrace}
\end{figure}

The following result states an important property of the store order relation $\sto$ that is enforced by the \ccvt{} semantics. It holds because the writes in different transactions are applied by different processes in the same order given by their timestamps, when visible (delivered) to those processes. 

\begin{lem}\label{lemma:CcvProperty}
Let $\tau\in \tracesconf_{\ccvt{}}(\aprog)$ be a trace. If $(\storeact(\apr_{0},\atr_0),\storeact(\apr_{0},\atr_1)) \in \sto(\anaddr)$, then for every other process $\apr$, $(\storeact(\apr,\atr_1),\storeact(\apr,\atr_0)) \not\in \sto(\anaddr)$.
\end{lem}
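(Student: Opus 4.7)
The plan is to argue directly from the definition of $\sto(\anaddr)$ in the \ccvt{} case and the fact that the transaction identifiers (acting as timestamps) are totally ordered by $<$. Essentially, the delivery rule in Fig.~\ref{Table:MPRules-ccv} only applies a write of some transaction $\atr$ on a variable $\anaddr$ at process $\apr$ when the timestamp $\atr$ strictly exceeds the timestamp currently recorded in $\apr.\timest(\anaddr)$; this synchronises the order in which effective writes land at each process with the global timestamp order, which is what I will exploit.

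Concretely, I would unfold the hypothesis first. Assume $(\storeact(\apr_0,\atr_0),\storeact(\apr_0,\atr_1)) \in \sto(\anaddr)$. By Definition~\ref{def:traces} specialised to \ccvt{}, this gives me three facts: (i) both $\atr_0$ and $\atr_1$ write $\anaddr$ at $\apr_0$ (so neither of these writes was discarded by a timestamp comparison at $\apr_0$), (ii) the two store events occur in this order in $\tau$, and crucially (iii) $\atr_0 < \atr_1$.

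Next I would suppose for contradiction that for some other process $\apr \neq \apr_0$ we have $(\storeact(\apr,\atr_1),\storeact(\apr,\atr_0)) \in \sto(\anaddr)$. Applying the same clause of Definition~\ref{def:traces} to this pair yields $\atr_1 < \atr_0$. Since $<$ is a total (strict) order on transaction identifiers, having both $\atr_0 < \atr_1$ and $\atr_1 < \atr_0$ is impossible, which closes the argument.

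The statement is really a sanity check of the \ccvt{} semantics rather than a deep fact, so there is no substantial obstacle. The only subtle point worth mentioning explicitly is that the \ccvt{}-specific side conditions in the definition of $\sto$ (the clauses ``$\atr_1$ and $\atr_2$ write $\anaddr$ at $\apr$'' and ``$\atr_1 < \atr_2$'') are both required: the first guarantees that the two store events reflect genuine, non-discarded writes at the process in question, while the second is what directly provides the contradictory strict inequalities. Once one is careful to invoke both clauses, the proof is essentially one line.
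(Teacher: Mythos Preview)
Your proof is correct and matches the paper's approach; the paper itself only offers a one-sentence informal justification (that writes are applied in timestamp order at every process), and your argument is simply the explicit unfolding of that idea via the \ccvt{}-specific clause $\atr_1 < \atr_2$ in Definition~\ref{def:traces}.
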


We define the \emph{happens-before} relation $\hbo$ as the transitive closure of the union of all the relations in the trace, i.e., $\hbo = (\po \cup \rfo \cup \sto \cup \cfo \cup \sametro)^{+}$. Since we reason about only one trace at a time, we may say that a trace is simply a summary $\tau$, keeping the relations implicit. The trace of the \ccvt{} execution in Fig.~\ref{firstexampleccv} is shown on the left of Fig.~\ref{firstexampleccvtrace}. $\tracesconf(\aprog)_{\textsf{X}}$ denotes the set of traces of executions of a program $\aprog$ under $\textsf{X} \in \{\ccvt{},\ \scct{},\ \wcct{}\}$.

For readability, we write $\event_1\rightarrow_{\hbo}\event_2$ instead of $(\event_1,\event_2)\in\hbo$ and $\event_1$ and $\event_2$ can be either $\issueact(\apr,\atr)$ or $\storeact(\apr,\atr)$. We use the notation $\event_1\rightarrow_{\hbo^{1}}\event_2$ (resp., $(\event_1,\event_2)\in\hbo^{1}$) to denote $(\event_1,\event_2)\in(\po \cup \sto \cup \rfo \cup \sametro\cup \cfo)$.
  
The \emph{causal order} $\viso$ of a trace $\atrace=(\tau, \po, \rfo, \sto, \cfo,\sametro)$ is the transitive closure of the union of the program order, write-read relation, and the same-transaction relation, i.e., $\viso=(\po \cup \rfo\cup \sametro)^+$. For readability, we write $\event_1\rightarrow_{\viso}\event_2$ instead of $(\event_1,\event_2)\in\viso$.

Let $\atr_1$ and $\atr_2$  be two transactions issued in a trace $\atrace$ that originate from two different processes $\apr_1$ and $\apr_2$, respectively. If $(\issueact(\apr_1,\atr_1), \issueact(\apr_2,\atr_2)) \not\in \viso$ and $(\issueact(\apr_2,\atr_2), \issueact(\apr_1,\atr_1)) \not\in \viso$, then $\atr_1$ and $\atr_2$ are called  \emph{concurrent} transactions.

The happens-before relation between events is extended to transactions as follows: a transaction $\atr_1$ \emph{happens-before} another transaction $\atr_2\neq \atr_1$ if the trace $\atrace$ contains an event of transaction $\atr_1$ which happens-before an event of $\atr_2$. The happens-before relation between transactions is denoted by $\hbo_t$ and called \emph{transactional happens-before} (an example is given on the right of Fig.~\ref{firstexampleccvtrace}).

\begin{rem}\label{rem:axiomatic} 
The operational models of causal consistency we described are equivalent to the axiomatic models defined in~\cite{DBLP:conf/popl/BouajjaniEGH17}. These axiomatic models are defined as a set of constraints on abstractions of executions, called \emph{histories}, that consist of a set of read and write operations along with a program order, denoted by $\po'$, and a read-from relation, denoted by $\rfo'$: $\po'$ relates operations in the same process and $\rfo'$ associates every read operation to the write operation which wrote the read value. For instance, the axiomatic model of \wcct{} requires that the union of $\po'$ and $\rfo'$ (denoted $\viso'$) is acyclic\footnote{This constraint corresponds to the absence of the CyclicCO bad pattern in~\cite{DBLP:conf/popl/BouajjaniEGH17}.}, and its composition with a variation of the conflict relation, denoted by $\cfo'$, ($(a,b) \in \cfo'\footnote{$b$ is overwriting the value $a$ is reading.}\mbox{ iff } \exists\ c.\ (c, b) \in \viso' \land (c,a) \in \rfo'$) is irreflexive\footnote{This constraint corresponds to the absence of the WriteCORead bad pattern in~\cite{DBLP:conf/popl/BouajjaniEGH17}.}. These models can be extended easily to histories that contain transactions instead of operations by adapting the above relations. Note that every  program trace (cf. Definition~\ref{def:traces}) can be ``projected'' to a history where issue and store events from the same transaction in the trace are mapped to a single transaction in the history. Also, the read-from and the program order between trace events are mapped to the $\rfo'$ and $\po'$ of the history. 

To show equivalence between these models, it is sufficient to show that (1) every history corresponding to a trace in the operational model satisfies the constraints of the axiomatic model, and (2) every history that is valid under the axiomatic model is the ``projection'' of a trace of the operational model. 
For instance, for \wcct{}, it is easy to see that the relation $\viso'=\po'\cup \rfo'$ in a history that is the projection of a trace $\tau\in \tracesconf_{\wcct{}}(\aprog)$ is acyclic because the causal order $\viso$ in $\tau$ is.
Also, the proof that $\viso';\cfo'$ is irreflexive can be derived easily by contradiction (for instance, if $(a,b)\in \cfo'$ and $(b,a)\in \viso'$, then there exists $c$ such that $(c,b)\in \viso'$ which means that by causal delivery, $a$ can never read the value written by $c$). 
\end{rem}



\section{Write-Write Race Freedom}\label{WWRF}

We say that an execution $\rho$ has a \emph{write-write race} on a shared variable $\anaddr$ if there exist two concurrent transactions $\atr_1$ and $\atr_2$ that were issued in $\rho$ and each transaction contains a write to the variable $\anaddr$.
We call $\rho$ write-write race free if there is no variable $\anaddr$ such that $\rho$ has a write-write race on $\anaddr$.
Also, we say a program $\aprog$ is \emph{write-write race free} under a consistency semantics $\textsf{X} \in \{\ccvt{},\ \scct{},\ \wcct{}\}$ iff for every $\rho \in \executionsconf_{\textsf{X}}(\aprog)$, $\rho$ is  write-write race free.


We show that if a given program has a write-write race under one of the three causal consistency models then it must have a write-write race under the remaining two.
The intuition behind this is that the three models coincide for programs without write-write races. Indeed, without concurrent transactions that write to the same variable, every process local valuation of a shared variable will be a singleton set under \wcct{} and no process will ever discard a write when applying an incoming transaction log under \ccvt{}.

\begin{thm}\label{theorem:wwraces}
Given a program $\aprog$ and two consistency semantics $\textsf{X}, \textsf{Y} \in \{\ccvt{},\ \scct{},\ \wcct{}\}$,
$\aprog$  has a write-write race under \textsf{X} iff  $\aprog$  has a write-write race under \textsf{Y}.
\end{thm}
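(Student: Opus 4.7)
The plan is to reduce Theorem~\ref{theorem:wwraces} to a simulation lemma: \emph{any write-write race free execution in $\executionsconf_X(\aprog)$ is also an execution in $\executionsconf_Y(\aprog)$, for every $\textsf{X}, \textsf{Y} \in \{\wcct{},\scct{},\ccvt{}\}$}. Granted this lemma, the theorem follows by a short minimality argument: suppose toward contradiction that $\aprog$ has a race under $\textsf{X}$ but not under $\textsf{Y}$, and let $\rho \in \executionsconf_{\textsf{X}}(\aprog)$ be a race-containing execution of \emph{minimal length}. The prefix $\rho'$ obtained by removing the last event of $\rho$ must be race-free by minimality; and since a transaction only appears in the summary once both its $\beginact$ and $\commitact$ have occurred, the last event of $\rho$ must be a $\commitact(\apr_2,\atr_2)$ that issues the second of the two concurrent racing transactions. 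By the simulation lemma, $\rho' \in \executionsconf_{\textsf{Y}}(\aprog)$. The $\commitact$ transition rule is identical across the three semantics (it only requires the program counter to point to an \plog{end} instruction), and appending it preserves transaction isolation and causal delivery, which are defined uniformly. Hence $\rho \in \executionsconf_{\textsf{Y}}(\aprog)$ and witnesses a write-write race under $\textsf{Y}$, a contradiction.

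For the simulation lemma, I would proceed by induction on the length of $\rho \in \executionsconf_{\textsf{X}}(\aprog)$, maintaining the invariant that at every intermediate configuration the three semantics yield ``equivalent'' local states: for every process $\apr$ and every variable $\anaddr$, the value returned by an $\loadact$ of $\anaddr$ in $\apr$ is the same value $v$ under all three semantics. Race-freeness ensures this holds: the writes to any single variable $\anaddr$ are totally ordered by the causal (happens-before) order on transactions, and causal delivery is enforced by all three models. In particular, under \wcct{} the set $\lsconf(\apr).\storeconf(\anaddr)$ remains a singleton, because $\mathit{update}$ discards the old value every time a new write on $\anaddr$ from a causally later transaction is applied; under \ccvt{}, no $\storeact$ on $\anaddr$ is discarded, because among writes to $\anaddr$ the causal order and the timestamp order coincide (a transaction writing $\anaddr$ is started only after delivery of all earlier writes to $\anaddr$, whose timestamps have already been installed in the $\timest$ component); under \scct{} the invariant is immediate.

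Given this invariant, the $\loadact$, $\issueact$, $\beginact$ and $\commitact$ transitions match step-by-step across the three LTSs, and the $\storeact$ transitions produce equivalent resulting valuations. Thus the same labeled sequence $\rho$ is generated by all three LTSs, and the side conditions (transaction isolation, causal delivery, the additional monotonicity of timestamps under \ccvt{}) are either uniform or can be satisfied by choosing a fresh timestamp per transaction consistent with the causal order, which is always possible since timestamps are only required to be larger than those currently in $\mathsf{img}(\lsconf(\apr).\timest)$ at transaction start.

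The main obstacle I anticipate is the bookkeeping needed to state precisely the ``equivalence'' between configurations of the three LTSs, since their $\storeconf$ components have different types (a single value, a singleton tagged with a timestamp, and a set of tagged values, respectively). A clean solution is to phrase the induction hypothesis as a relation between configurations that only asserts agreement on the read-visible value of every variable at every process, and then verify that each transition rule preserves this relation when write-write races are absent; the $\ccvt{}$ case, where one must exhibit witness timestamps, is the only place where a small construction (taking a topological extension of the causal order among already-committed transactions) is required.
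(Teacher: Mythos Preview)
Your approach is correct and genuinely different from the paper's.

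The paper argues by induction on the number of transactions in $\aprog$: it reduces to the directions $\wcct{}\Rightarrow\ccvt{}$ and $\wcct{}\Rightarrow\scct{}$ (since $\wcct{}$ is weakest), and in the induction step observes informally that if a $\wcct{}$ race between $\atr_1,\atr_2$ cannot be reproduced under $\ccvt{}$, then some enabling event must be $\wcct{}$-specific, which in turn requires an earlier race between some $\atr_1',\atr_2'$; removing $\atr_1$ and invoking the hypothesis on the smaller program finishes the step. Your route instead isolates the core fact as an explicit simulation lemma on \emph{executions} (race-free $\rho\in\executionsconf_X$ implies $\rho\in\executionsconf_Y$), and then uses minimality on execution length rather than on program size.

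Both arguments ultimately rest on the same insight---without concurrent writers each local store is a singleton under $\wcct{}$ and no write is discarded under $\ccvt{}$---but you make it a first-class lemma whereas the paper invokes it inside the induction step. In effect you prove the per-execution content of the right-to-left direction of Theorem~\ref{theorem:traceswwracesOrig} first and derive Theorem~\ref{theorem:wwraces} from it, reversing the paper's order of dependencies. Your packaging is more modular and the minimal-prefix argument is clean; the cost is the bookkeeping you already flag (relating configurations whose $\storeconf$ components have different types, and exhibiting a timestamp order for $\ccvt{}$ as a linear extension of the causal order). One small point worth making explicit in your write-up: when you conclude that $\rho$ witnesses a race under $\textsf{Y}$, you need that the causal order $\viso$ of $\rho$ is the same under $\textsf{X}$ and $\textsf{Y}$; this holds because all reads of $\rho$ lie in the race-free prefix $\rho'$, where your invariant guarantees identical read values, hence identical $\rfo$ and $\viso$.
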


\begin{proof}
Since \wcct{} is weaker than both \ccvt{} and \scct{}, it is sufficient to prove the following two cases: (1) if $\aprog$  has a write-write race under \wcct{}, then  $\aprog$  has a write-write race under \ccvt{} and (2) if $\aprog$ has a write-write race under \wcct{}, then  $\aprog$  has a write-write race under \scct{}. 

We prove the first case by induction on the number of transactions in $\aprog$. The second case can be proved in a similar way.

\noindent
\textbf{Base case:}  $\aprog$ is constituted of two transactions $\atr_1$ and $\atr_2$. 
Assume that $\aprog$ has a write-write race under \wcct{} then the transactions $\atr_1$ and $\atr_2$ must originate from different processes. Thus, in any trace $\tau$ of $\aprog$ under \ccvt{} where the transactions $\atr_1$ and $\atr_2$ are executed concurrently we will have a write-write race between these two transactions. Thus, $\aprog$  has a write-write race under \ccvt{}. 

\noindent
\textbf{Induction step:} If $n > 2$ is the number of transactions in $\aprog$, we assume that for any program $\aprog'$ with $n' < n$ transactions, if $\aprog'$  has a write-write race under \wcct{}, then  $\aprog'$  has a write-write race under \ccvt{}. 
Assume that $\aprog$  has a write-write race under \wcct{}. Let $\tau$ be a trace of $\aprog$  under \wcct{} where we have a write-write race between two transactions $\atr_1$ and $\atr_2$ that were issued by processes $\apr_1$ and $\apr_2$, respectively. 
Executing $\atr_1$ and $\atr_2$ concurrently while writing to a common variable is not possible under \ccvt{} only if the writes were enabled by some events that occurred before $\atr_1$ and $\atr_2$ under \wcct{} and are not possible under \ccvt{}. However, based on the semantic models of both \wcct{} and \ccvt{}, if all the transactions that write to common variables are causally related then such events cannot occur under \wcct{} but not \ccvt{}. Thus, we must have two other transactions $\atr'_1$ and $\atr'_2$ of $\aprog$ that were executed concurrently in $\tau$ under \wcct{} and occurred before $\atr_1$ (or $\atr_2$ or both) which write to a common variable. 
Without loss of generality, let $\aprog_1$ be the program resulting from removing the transaction $\atr_1$ from $\aprog$. 
We know that $\aprog_1$ admits a trace $\tau_1$ under \wcct{} where the transactions $\atr'_1$ and $\atr'_2$ are involved in a data race. 
Also, the size of $\aprog_1$  is $n-1 < n$. Thus, from the induction hypothesis we get that $\aprog_1$ has a write-write race under \ccvt{}. Because adding a new transaction to $\aprog_1$ will not eliminate existing data races, $\aprog$ has a write-write race under \ccvt{} as well.
%
\end{proof}

The following result shows that indeed, the three causal consistency models coincide for programs which are write-write race free under any one of these three models.

\begin{thm}\label{theorem:traceswwracesOrig}
Let $\aprog$ be a program. Then, $\executionsconf_{\wcct{}}(\aprog) = \executionsconf_{\ccvt{}}(\aprog) = \executionsconf_{\scct{}}(\aprog)$ iff $\aprog$ has no write-write race under neither \wcct{}, \scct{}, and \ccvt{}.
\end{thm}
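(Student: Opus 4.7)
My plan is to prove both implications separately, leveraging Theorem~\ref{theorem:wwraces} so that I can work with ``WW race freedom under \wcct{}'' as a stand-in for the hypothesis in the statement.

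For the easier direction ($\Rightarrow$), I would argue by contraposition. Assume $\aprog$ has a write-write race under some model, so by Theorem~\ref{theorem:wwraces} it has one under \wcct{}: there is some $\rho \in \executionsconf_{\wcct{}}(\aprog)$ containing two concurrent transactions $\atr_1, \atr_2$ writing a common variable $\anaddr$. I would exhibit an execution witnessing that the three semantics differ. Using the permissiveness of the \wcct{} $\storeact$ and $\beginact$ rules (where snapshots are chosen freely among linearizations of delivered transactions), I can construct a \wcct{} execution in which a process $\apr$ delivers $\atr_1$ then $\atr_2$ while another process $\apr'$ delivers $\atr_2$ then $\atr_1$. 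The resulting trace has $(\storeact(\apr,\atr_1),\storeact(\apr,\atr_2)) \in \sto(\anaddr)$ and $(\storeact(\apr',\atr_2),\storeact(\apr',\atr_1)) \in \sto(\anaddr)$, which contradicts Lemma~\ref{lemma:CcvProperty}; hence the execution cannot belong to $\executionsconf_{\ccvt{}}(\aprog)$. A symmetric construction based on choosing snapshots that read from $\atr_1$ in one transaction of $\apr$ and then from $\atr_2$ in a later one (as in Fig.~\ref{firstexamplewcc}) yields an execution not in $\executionsconf_{\scct{}}(\aprog)$, because no linearization of delivered transactions can justify both external reads.

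For the harder direction ($\Leftarrow$), I would prove the two nontrivial inclusions $\executionsconf_{\wcct{}}(\aprog) \subseteq \executionsconf_{\ccvt{}}(\aprog)$ and $\executionsconf_{\wcct{}}(\aprog) \subseteq \executionsconf_{\scct{}}(\aprog)$ (the reverse inclusions hold unconditionally because \wcct{} is the weakest). The key structural observation is that under WW-race-freedom, all transactions writing a given $\anaddr$ are totally ordered by causal delivery. I would show by induction on the length of an execution $\rho \in \executionsconf_{\wcct{}}(\aprog)$ that for every process $\apr$ and every $\anaddr$, the set $\lsconf(\apr).\storeconf(\anaddr)$ maintained by the \wcct{} semantics is a singleton: any incoming $\storeact(\apr,\atr)$ brings a value whose writing transaction $\atr$ is causally after every previously delivered writer of $\anaddr$ (otherwise two writers of $\anaddr$ would be concurrent), so the $\mathit{update}$ function removes the previous pair and leaves a singleton. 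Consequently, the snapshot $\storeconfcopy$ taken at a $\beginact$ coincides with $\storeconf$, and every $\mathit{cceval}$ returns the unique value that \scct{}'s $\mathit{eval}$ would return. This shows $\rho \in \executionsconf_{\scct{}}(\aprog)$. For the inclusion into $\executionsconf_{\ccvt{}}(\aprog)$, I would assign timestamps to transactions by extending the causal order of $\rho$ to a total order compatible with program order at each process; since concurrent transactions never write a common variable, the guard $\timest(\anaddr) < \atr$ in the \ccvt{} $\storeact$ rule is always satisfied at each effective delivery and no write is ever discarded, so $\rho$ (with these timestamps) is a valid \ccvt{} execution.

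The main obstacle is the forward direction: one must verify that the constructed ``conflicting'' \wcct{} execution actually belongs to $\executionsconf_{\wcct{}}(\aprog)$, i.e.\ respects causal delivery and transaction isolation, while genuinely violating a \ccvt{} (or \scct{}) constraint. This is plausible because the witness $\rho$ of the WW race already provides a global scheduling in which $\atr_1$ and $\atr_2$ are concurrent, and the \wcct{} semantics imposes essentially no constraint on the relative order of $\storeact(\apr,\atr_1)$ and $\storeact(\apr,\atr_2)$ at a receiving process beyond causal delivery—so swapping the order at one of the two processes preserves membership in $\executionsconf_{\wcct{}}(\aprog)$ while triggering the Lemma~\ref{lemma:CcvProperty} contradiction for \ccvt{}. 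The rest is bookkeeping on the operational rules.
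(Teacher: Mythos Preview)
Your proposal is correct and follows essentially the same route as the paper: for the backward direction, both rely on the observation that without write-write races the multi-valued $\storeconf(\anaddr)$ collapses to a singleton (so \wcct{} coincides with \scct{}) and that all writers of each variable are totally ordered by the causal order (so an arbitration/timestamp order compatible with $\sto$ exists, yielding \ccvt{}); for the forward direction, both exhibit a racy execution in which two processes deliver the two concurrent writers of $\anaddr$ in opposite orders, which is incompatible with \ccvt{} via Lemma~\ref{lemma:CcvProperty}. Your additional construction of a \wcct{} execution that is not in $\executionsconf_{\scct{}}(\aprog)$ is unnecessary---one inequality already breaks the three-way equality, and the paper only exhibits the \scct{}/\ccvt{} separation---but it does no harm.
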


\begin{proof}
Left-to-right direction: By Theorem~\ref{theorem:wwraces}, it is sufficient to prove that $\aprog$ has no write-write race under \scct{}. Suppose by contradiction that $\aprog$ has a write-write race under \scct{}. Then, there must exist a trace $\tau \in \tracesconf_{\wcct{}}(\aprog)$ such that we have two concurrent transactions $\atr_1$ and $\atr_2$ that are issued in $\tau$ and write to a variable $\anaddr$. Assume w.l.o.g that the issue event of $\atr_1$ occurs before the issue event of $\atr_2$ in $\tau$. Since $\atr_1$ and $\atr_2$ are concurrent in $\tau$, the issue event of $\atr_1$ and the store events of $\atr_2$ are commutative, and the issue event of $\atr_2$ and the store events of $\atr_1$ are commutative. 
Then, $\tau' = \alpha \cdot \issueact(\apr_1,\atr_{1}) \cdot \storeact(\apr_1,\atr_{1})  \cdot\beta\cdot \issueact(\apr_2,\atr_{2}) \cdot \storeact(\apr_2,\atr_{2}) \cdot \storeact(\apr_1,\atr_{2}) \cdot \storeact(\apr_2,\atr_{1})$ where $\alpha$ and $\beta$ are sequences of events in $\tau$ that $\atr_1$ and $\atr_2$ causally depend on (since we are not interested in other events)\footnote{Note that other cases such as $\tau' = \alpha \cdot \issueact(\apr_1,\atr_{1})  \cdot\beta\cdot \issueact(\apr_2,\atr_{2}) \cdot \storeact(\apr_2,\atr_{2}) \cdot \storeact(\apr_1,\atr_{2}) \cdot \storeact(\apr_1,\atr_{1})$ implies that $\tau'' = \alpha \cdot \issueact(\apr_1,\atr_{1})  \cdot \storeact(\apr_1,\atr_{1}) \cdot\beta\cdot \issueact(\apr_2,\atr_{2}) \cdot \storeact(\apr_2,\atr_{2}) \cdot \storeact(\apr_1,\atr_{2}) \cdot \storeact(\apr_2,\atr_{1})$ is a trace of $\aprog$ as well since all events in $\beta$ are not causally dependent on $\atr_1$.}, is a trace of $\aprog$ under \scct{}.  In $\tau'$, both store events $\storeact(\apr_2,\atr_{1})$ and $\storeact(\apr_1,\atr_{2})$ do not discard any writes (guaranteed under \scct{}). Therefore, $(\storeact(\apr_1,\atr_{1}),\storeact(\apr_1,\atr_{2})) \in \sto(\anaddr)$ and $(\storeact(\apr_2,\atr_{2}),\storeact(\apr_2,\atr_{1})) \in \sto(\anaddr)$ since both $\atr_1$ and $\atr_2$ write to $\anaddr$. However, it is impossible to obtain $\tau'$ under \ccvt{} as we cannot have $(\storeact(\apr_2,\atr_{2}),\storeact(\apr_2,\atr_{1})) \in \sto(\anaddr)$ if $(\storeact(\apr_1,\atr_{1}),\storeact(\apr_1,\atr_{2})) \in \sto(\anaddr)$ which leads to a contradiction ($\aprog$ has different sets of traces under \scct{} and \ccvt{}).

Right-to-left direction: It is sufficient to prove the following two cases: if $\tau$ has no write-write race under \wcct{} then $\tau \in \tracesconf_{\wcct{}}$ implies $\tau \in \tracesconf_{\scct{}}$ and $\tau \in \tracesconf_{\ccvt{}}$ ($\tracesconf_{\ccvt{}}(\aprog) \subseteq \tracesconf_{\wcct{}}(\aprog)$ and $\tracesconf_{\scct{}}(\aprog) \subseteq \tracesconf_{\wcct{}}(\aprog)$ hold by definition). 

Let $\tau \in \tracesconf_{\wcct{}}$ be a trace under \wcct{}. Then, $\tau$ satisfies transactions isolation and causal delivery. It is important to notice that if $\tau$ has no write-write race then the contents of $store$ at a given variable will contain a single value at any time during $\tau$. This implies that $store$ can be simulated by a single value memory which does not discard writes. Thus, we obtain a program semantics that is the same as the one for \scct{}.
Thus, $\tau$ is also a trace of $\aprog$ under \scct{}. To prove that $\tau \in \tracesconf_{\ccvt{}}$, we also need to ensure that the transitive closure of store order in $\tau$ is acyclic which is enough to guarantee the existence of a total arbitration between transactions which is ensured by \ccvt{} semantics. 
Suppose by contradiction that the transitive closure of store order is cyclic then there must exist a sequence of events $\event_1\cdot\event_2\cdot\ldots\event_n$ in $\tau$  such that $(\event_{i},\event_{i+1})\in \sto$, for all $1\leq i\leq n-1$ and $(\event_{n},\event_{1})\in \sto$.  Since $\tau$ has no write-write races then $(\event_{i},\event_{i+1})\in \sto$ implies that the issue events corresponding to $\event_{i}$ and $\event_{i+1}$ must be related by causal ordered (since the corresponding transactions must be causally related to prevent concurrency which will lead to write-write races for transactions that write to a common variable). For all $i$ s.t. $1\leq i\leq n-1$, let $\event'_{i}$ and $\event'_{i+1}$ denote these issue events then $(\event'_{i},\event'_{i+1})\in \viso$ which implies that the causal order $\viso$ is cyclic. This is a contradiction since it is not possible under \wcct{}. Thus, there exists a total order between transactions in $\tau$ that includes both the causal order and the transitive closure of store order. Thus, $\tau$ is also a trace of $\aprog$ under \ccvt{}. 
\end{proof}


\section{Program Robustness}\label{sec:PRs}

\subsection{Program Semantics Under Serializability}

The semantics of a program under serializability~\cite{DBLP:journals/jacm/Papadimitriou79b} can be defined using a transition system where the configurations keep a single shared-variable valuation (accessed by all processes) with the standard interpretation of read or write statements. Each transaction executes in isolation. 
Alternatively, the serializability semantics can be defined as a restriction of $[\aprog]_{\textsf{X}}$, $\textsf{X} \in \{\ccvt{},\ \scct{},\ \wcct{}\}$, to the set of executions where each transaction is \emph{immediately} delivered to all processes, i.e., each event $\commitact(\apr,\atr)$ is immediately followed by all $\storeact(\apr',\atr)$ with $\apr'\in\mathbb{P}$. Such executions are called \emph{serializable} and the set of serializable executions of a program $\aprog$ is denoted by $\executionsconf_{\serc{}}(\aprog)$. The latter definition is easier to reason about when relating executions under causal consistency and serializability, respectively.

Given a trace $\atrace=(\tau, \po, \rfo, \sto, \cfo,\sametro)$ of a serializable execution, we have that every event $\issueact(\apr,\atr)$ in $\tau$ is immediately followed by all $\storeact(\apr',\atr)$ with $\apr'\in\mathbb{P}$. For simplicity, we write $\tau$ as a sequence of ``atomic macro-events'' $(\apr,\atr)$ where $(\apr,\atr)$ denotes a sequence $\issueact(\apr,\atr)\cdot\storeact(\apr,\atr)\cdot \storeact(\apr_1,\atr)\cdot\ldots\cdot \storeact(\apr_n,\atr)$ with $\mathbb{P}=\{\apr,\apr_1,\ldots,\apr_n\}$. We say that $\atr$ is \emph{atomic}. In Fig.~\ref{firstexampleccvtrace}, $\atr 3$ is atomic and we can use $(\apr 2,\atr 3)$ instead of $\issueact(\apr 2,\atr 3)\cdot\storeact(\apr 2,\atr 3)\cdot\storeact(\apr 1,\atr 3)$.

The following result characterizes traces of serializable executions, and follows from previous works~\cite{Adya99,DBLP:journals/toplas/ShashaS88} that considered a notion of history/trace that corresponds to our notion of transactional happens-before. The transactional happens-before of any trace under $\serc{}$ is acyclic, and conversely, any trace obtained under a weaker semantics $\textsf{X} \in \{\ccvt{},\ \scct{},\ \wcct{}\}$ with an acyclic transactional happens-before can be transformed into a trace under $\serc{}$ by successive swaps of consecutive events in its summary, which are not related by happens-before (the happens-before relations remain the same). Indeed, note that multiple executions/traces can have the same (transactional) happens-before (an example for traces is given in Fig.~\ref{fig:onetracetwoexecutions}). In particular, it is possible that a trace $\atrace$ produced by a variation of causal consistency has an acyclic transactional happens-before even though $\issueact(\apr,\atr)$ events are not immediately followed by the corresponding $\storeact(\apr',\atr)$ events. However, $\atrace$ would be equivalent, up to reordering of consecutive summary events that are not related by happens-before to a serializable trace. 


\begin{thmC}[\cite{Adya99,DBLP:journals/toplas/ShashaS88}]\label{th:acyclicity}
 For any trace $\atrace\in \tracesconf_{\serc{}}(\aprog)$, the transactional happens-before of $\atrace$ is acyclic. Moreover, for any trace $\atrace=(\tau, \po, \rfo, \sto, \cfo,\sametro)\in \tracesconf_{\textsf{X}}(\aprog)$ with $\textsf{X} \in \{\ccvt{},\ \scct{},\ \wcct{}\}$, if  the transactional happens-before of $\atrace$ is acyclic, then there exists a permutation $\tau'$ of $\tau$ such that $(\tau', \po, \rfo, \sto, \cfo,\sametro)\in \tracesconf_{\serc{}}(\aprog)$.
 
\end{thmC}

As a consequence of Theorem~\ref{th:acyclicity}, we define a trace $\atrace$ to be \emph{serializable} if it \emph{has the same happens-before relations} as a trace of a serializable execution. Let $\tracesconf_{\serc{}}(\aprog)$ denote the set of serializable traces of a program $\aprog$. 



\begin{figure}[t]
  \begin{minipage}[c]{0.31\textwidth}
  \begin{subfigure}{\linewidth}
\lstset{basicstyle=\ttfamily\scriptsize}
\begin{minipage}[c]{19mm}
\begin{lstlisting}[language=Java10]
    p1:
t1: [ r = y 
      x = 1]
\end{lstlisting}
\end{minipage}
\begin{minipage}[c]{2mm}
\footnotesize{$||$}
\end{minipage}
\begin{minipage}[c]{19mm}
\begin{lstlisting}[language=Java10]
    p2:
t2: [y = 2]
\end{lstlisting}
\end{minipage}
\label{fig:twoExesProg}
\end{subfigure}
  \end{minipage}
  \hfill
  \begin{minipage}[c]{0.33\textwidth}
  \begin{subfigure}{\linewidth}
  \scalebox{0.61}
  {\begin{tikzpicture}
  
   \node[shape=rectangle ,draw=none,font=\large] (A) at (0,0)  [] {$\issueact(\apr 1,\atr 1)$};
   \node[shape=rectangle ,draw=none,font=\large] (A1) at (1.9,0)  [] {$\storeact(\apr 1,\atr 1)$};
    \node[shape=rectangle ,draw=none,font=\large] (B) at (4,0)  [] {$(\apr 2,\atr 2)$};
    \node[shape=rectangle ,draw=none,font=\large] (C) at (6,0)  [] {$\storeact(\apr 2,\atr 1)$};
  
    \begin{scope}[ every edge/.style={draw=red,very thick}]
    \path [->] (A) edge  [bend left] node [above,font=\large] {$\cfo$} (B);
    \end{scope}
  \end{tikzpicture}}
  \label{fig:twoExes1}
  \end{subfigure}
  \end{minipage}
 \hfill
  \begin{minipage}[c]{0.33\textwidth}
  \begin{subfigure}{\linewidth}
  \scalebox{0.61}
  {
  \begin{tikzpicture}
  
   \node[shape=rectangle ,draw=none,font=\large] (A) at (0,0)  [] {$\issueact(\apr 1,\atr 1)$};
   \node[shape=rectangle ,draw=none,font=\large] (A1) at (1.9,0)  [] {$\storeact(\apr 1,\atr 1)$};
    \node[shape=rectangle ,draw=none,font=\large] (B) at (3.8,0)  [] {$\storeact(\apr 2,\atr 1)$};
    \node[shape=rectangle ,draw=none,font=\large] (C) at (6,0)  [] {$(\apr 2,\atr 2)$};
  
    \begin{scope}[ every edge/.style={draw=red,very thick}]
    \path [->] (A) edge [bend left] node [above,font=\large] {$\cfo$} (C);
    \end{scope}
  \end{tikzpicture}}
  \label{fig:twoExes2}
  \end{subfigure}
  \end{minipage}
  \caption{Two executions of the same serializable trace.}
  \label{fig:onetracetwoexecutions}
  \end{figure}

\subsection{Robustness Problem}\label{ssec:rob}

We consider the problem of checking whether the causally-consistent semantics of a program produces only serializable traces (it produces all serializable traces because every issue event can be immediately followed by all  the corresponding store events). 
\begin{defi}
A program $\aprog$ is called \emph{robust} against a semantics $\textsf{X} \in \{\ccvt{},\ \scct{},\ \wcct{}\}$ iff $\tracesconf_{\textsf{X}}(\aprog)= \tracesconf_{\serc{}}(\aprog)$.
\end{defi}
A trace  $\atrace \in \tracesconf_{\textsf{X}}(\aprog) \setminus \tracesconf_{\serc{}}(\aprog)$ is called a \emph{robustness violation} (or \emph{violation}, for short). By Theorem~\ref{th:acyclicity}, the transactional happens-before $\hbo_t$ of $\atrace$ is cyclic.
\begin{figure}[t]
\lstset{basicstyle=\ttfamily\scriptsize}
\begin{subfigure}{55mm}
\begin{minipage}[c]{25mm}
\begin{lstlisting}[language=Java10]
    p1
t1 [r1 = x    //0
    x  = r1 + 1]
\end{lstlisting}
\end{minipage}
\begin{minipage}[c]{2mm}
\footnotesize{$||$}
\end{minipage}
\begin{minipage}[c]{25mm}
\begin{lstlisting}[language=Java10]
    p2
t2 [r2 = x   //0
    x = r2 + 1]
\end{lstlisting}
\end{minipage}
\caption{Lost Update ($\mathsf{LU}$).}
\label{fig:rob0}
\end{subfigure}
\hspace{1mm}
\begin{subfigure}{49mm}
\begin{minipage}[c]{22mm}
\begin{lstlisting}[language=Java10]
    p1
t1 [x = 1	
    r1 = y] //0
\end{lstlisting}
\end{minipage}
\begin{minipage}[c]{2mm}
\footnotesize{$||$}
\end{minipage}
\begin{minipage}[c]{22mm}
\begin{lstlisting}[language=Java10]
    p2
t2 [y = 1
    r2 = x] //0
\end{lstlisting}
\end{minipage}
\caption{Store Buffering ($\mathsf{SB}$).}
\label{fig:rob1}
\end{subfigure}
\hspace{2mm}
\begin{subfigure}{40mm}
\begin{minipage}[c]{12mm}
\begin{lstlisting}[language=Java10]
[a = 1
 z = 1
 x = 1
 y = 1]
\end{lstlisting}
\end{minipage}
\begin{minipage}[c]{2mm}
\footnotesize{$||$}
\end{minipage}
\begin{minipage}[c]{21mm}
\begin{lstlisting}[language=Java10]
if (a == 1)
  [x = 2
   r1 = z  //0
   r2 = y  //1
   r3 = x] //2
\end{lstlisting}
\end{minipage}
\caption{Without transactions, non-robust against \ccvt{}.}
\label{fig:rob3}
\end{subfigure}

\begin{subfigure}{46mm}
\begin{minipage}[c]{19mm}
\begin{lstlisting}[language=Java10]
[a = 1
 x = 1
 r1 = x] //2	
\end{lstlisting}
\end{minipage}
\begin{minipage}[c]{2mm}
\footnotesize{$||$}
\end{minipage}
\begin{minipage}[c]{23mm}
\begin{lstlisting}[language=Java10]
if (a == 1)
  [x = 2
   r2 = x] //1
\end{lstlisting}
\end{minipage}
\caption{Without transactions, non-robust only against \scct{}.}
\label{fig:rob4}
\end{subfigure}
\hspace{3mm}
\begin{subfigure}{43mm}
\begin{minipage}[c]{19mm}
\begin{lstlisting}[language=Java10]
if ( * )
  [x = 1]
else
  [r1 = x]
\end{lstlisting}
\end{minipage}
\begin{minipage}[c]{2mm}
\footnotesize{$||$}
\end{minipage}
\hspace{.1mm}
\begin{minipage}[c]{19mm}
\begin{lstlisting}[language=Java10]
if ( * )
  [x = 2]
else
  [r2 = x]
\end{lstlisting}
\end{minipage}
\caption{Robust against both \scct{} and \ccvt{}.}
\label{fig:rob5}
\end{subfigure}
\hspace{3mm}
\begin{subfigure}{51mm}
\begin{minipage}[c]{13mm}
\begin{lstlisting}[language=Java10]
[x = 1]
[r1 = y]
\end{lstlisting}
\end{minipage}
\begin{minipage}[c]{3mm}
\footnotesize{$||$}
\end{minipage}
\hspace{.1mm}
\begin{minipage}[c]{19mm}
\begin{lstlisting}[language=Java10]
[r2 = x
 if (r2 == 1)
  y = 1]
\end{lstlisting}
\end{minipage}
\caption{Robust against both \scct{} and \ccvt{}.}
\label{fig:rob7}
\end{subfigure}
\caption{(Non-)robust programs. For non-robust programs, the read instructions are commented with the values they return in robustness violations.
The condition of ${\tt if}$-${\tt else}$ is checked inside a transaction whose demarcation is omitted for readability (${\tt *}$ denotes non-deterministic choice). }
\end{figure}

We discuss several examples of programs which are (non-) robust against both \scct{} and \ccvt{} or only one of them. Robustness violations are presented in terms of ``observable'' behaviors, tuples of values that can be read in the different transactions and that are not possible under the serializability semantics (they correspond to traces with acyclic transactional happens-before).
Fig.~\ref{fig:rob0} and Fig.~\ref{fig:rob1} show examples of programs that are \emph{not} robust against both \scct{} and \ccvt{}, which have also been discussed in the literature on weak memory models, e.g.~\cite{DBLP:journals/toplas/AlglaveMT14}. The execution of Lost Update under both \scct{} and \ccvt{} allows that the two reads of ${\tt x}$ in transactions $\atr 1$ and $\atr 2$ return $0$ although this cannot happen under serializability. Also, executing Store Buffering under both \scct{} and \ccvt{} allows that the reads of ${\tt x}$ and ${\tt y}$ return $0$ although this would not be possible under serializability. These values are possible because the transaction in each of the processes may not be delivered to the other process. 

Assuming for the moment that each instruction in Fig.~\ref{fig:rob3} and Fig.~\ref{fig:rob4} forms a different transaction, the values we give in comments show that the program in Fig.~\ref{fig:rob3}, resp., Fig.~\ref{fig:rob4}, is not robust against \ccvt{}, resp., \scct{}. The values in Fig.~\ref{fig:rob3} are possible assuming that the timestamp of the transaction ${\tt [x = 1]}$ is smaller than the timestamp of ${\tt [x = 2]}$ (which means that if the former is delivered after the second process executes ${\tt [x = 2]}$, then it will be discarded).
Moreover, enlarging the transactions as shown in Fig.~\ref{fig:rob3}, the program becomes robust against \ccvt{}.
The values in Fig.~\ref{fig:rob4} are possible under \scct{} because different processes do not need to agree on the order in which to apply transactions, each process applying the transaction received from the other process last. However, under  \ccvt{} this behavior is not possible, the program being actually robust against \ccvt{}.
As in the previous case, enlarging the transactions as shown in the figure leads to a robust program against \scct{}.

We end the discussion with several examples of programs that are robust against both \scct{} and \ccvt{}. These are simplified models of real applications reported in~\cite{Kurath}. The program in Fig.~\ref{fig:rob5} can be understood as the parallel execution of two processes that either create a new user of some service, represented abstractly as a write on a variable ${\tt x}$ or check its credentials, represented as a read of ${\tt x}$ (the non-deterministic choice abstracts some code that checks whether the user exists). Clearly this program is robust against both \scct{} and \ccvt{} since each process does a single access to the shared variable. Although we considered simple transactions that access a single shared-variable this would hold even for ``bigger'' transactions that access an arbitrary number of variables. The program in Fig.~\ref{fig:rob7} can be thought of as a process creating a new user of some service and reading some additional data in parallel to a process that updates that data only if the user exists. It is rather easy to see that it is also robust against both \scct{} and \ccvt{}.

\section{Minimal Violations}\label{sec:RACC}

We define a class of robustness violations called \emph{minimal} violations. The particular shapes of these violations, that we determine through a series of results in this section, Section~\ref{sec:MVUCCV}, and Section~\ref{sec:MVUCM}, enables a polynomial-time reduction of robustness checking to a reachability problem in a program running under serializability. 

For simplicity, we use ``atomic macro-events'' $(\apr,\atr)$ even in traces obtained under causal consistency (recall that this notation was introduced to simplify serializable traces), i.e., we assume that any sequence of events formed of an issue $\issueact(\apr,\atr)$ followed immediately by all the store events $\storeact(\apr',\atr)$ is replaced by $(\apr,\atr)$. Then, all the relations that held between an event $\event$ of such a sequence and another event $\event'$, e.g., $(\event,\event')\in \po$, are defined to hold as well between the corresponding macro-event $(\apr,\atr)$ and $\event'$, e.g, $((\apr,\atr),\event')\in \po$.

\subsection{Happens-Before Through Relation}

To decide if two events in a trace are ``independent'' (or commutative) we use the information about the existence of a happens-before relation between the events. If two events are not related by  happens-before then they can be swapped while preserving the same happens-before. Thus, we extend the happens-before relation to obtain the \emph{happens-before through} relation as follows:

\begin{defi}\label{Definition:HBThrough}
Let $\tau = \alpha\cdot a\cdot \beta\cdot b\cdot \gamma$ be a trace where $a$ and $b$ are events (or atomic macro events), and $\alpha$, $\beta$, and $\gamma$ are sequences of events (or  atomic macro events) under a semantics $\textsf{X} \in \{\ccvt{},\ \scct{}\}$.
We say that \emph{$a$ happens-before $b$ through $\beta$} if there is a non empty sub-sequence $c_1 \cdots c_n$ of $\beta$ that satisfies:
$$c_i\rightarrow_{\hbo^{1}} c_{i+1}\quad\text{ for all }i\in[0, n]$$
where $c_0= a$, $c_{n+1}= b$.
\end{defi}




The following result shows that any two events in a trace which are not related via the happens-before through relation can be reordered without affecting the happens-before or they can be placed one immediately after the other.

\begin{lem}\label{lemma:Reordering}
Let $\tau$ be a trace of a program $\aprog$ under a semantics $\textsf{X} \in \{\ccvt{},\ \scct{}\}$, and $a$ and $b$ be two events such that $\tau=\alpha\cdot a\cdot\beta \cdot b\cdot \gamma$. Then, one of the following holds:
\begin{enumerate}
  \item $a$ happens-before $b$ through $\beta$;
  \item $\tau'=\alpha\cdot\beta_1\cdot a\cdot b\cdot\beta_2\cdot\gamma\in \tracesconf_{\textsf{X}}(\aprog)$ where $(a,b)\in\hbo^{1}$ has the same happens-before as $\tau$;
  \item $\tau'=\alpha\cdot\beta_1\cdot b\cdot a\cdot\beta_2\cdot\gamma\in \tracesconf_{\textsf{X}}(\aprog)$ has the same happens-before as $\tau$.
\end{enumerate}
\end{lem}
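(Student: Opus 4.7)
My approach analyzes the $\hbo^{1}$-structure between $a$, $b$, and the events of $\beta$. Suppose case~(1) fails, so $a$ does not happen-before $b$ through $\beta$. The strategy is to rearrange $\tau$ into the shape of case~(2) or~(3) via a sequence of \emph{adjacent swaps of events unrelated by} $\hbo^{1}$. I will rely on a swap-invariance principle: exchanging two consecutive events $e_{1},e_{2}$ with $(e_{1},e_{2}),(e_{2},e_{1})\notin\hbo^{1}$ preserves the summary, each of the relations $\po,\sto,\rfo,\cfo,\sametro$, and the validity of the trace under $\textsf{X}\in\{\ccvt{},\scct{}\}$, since the last-write-before-read and same-variable store-order conditions can only be disturbed by swaps that cross an $\hbo^{1}$ edge.

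First, I partition $\beta$ (preserving its relative order) into two subsequences $\beta_{1},\beta_{2}$: $\beta_{2}$ collects every $c\in\beta$ such that either $(a,c)\in\hbo^{1}$ or $a$ happens-before $c$ through the prefix of $\beta$ strictly preceding $c$; $\beta_{1}$ collects the rest. The key combinatorial observation is that for any $c_{i}\in\beta_{2}$ and $c_{j}\in\beta_{1}$ with $c_{i}$ preceding $c_{j}$ in $\beta$, we must have $(c_{i},c_{j})\notin\hbo^{1}$, for otherwise appending the edge $c_{i}\to c_{j}$ to the chain witnessing $c_{i}\in\beta_{2}$ would force $c_{j}\in\beta_{2}$ too. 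A standard bubble-sort argument then reorders $\beta$ into $\beta_{1}\cdot\beta_{2}$ using only such legal swaps.

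Next I move $a$ rightward past $\beta_{1}$: by the defining property of $\beta_{1}$, $(a,c)\notin\hbo^{1}$ for every $c\in\beta_{1}$, enabling each swap. Symmetrically, I move $b$ leftward past $\beta_{2}$: if some $c\in\beta_{2}$ satisfied $(c,b)\in\hbo^{1}$, then prepending the chain witnessing $c\in\beta_{2}$ would yield a chain from $a$ to $b$ through $\beta$ (using at least the element $c\in\beta$), contradicting the failure of case~(1). The resulting trace has the form $\alpha\cdot\beta_{1}\cdot a\cdot b\cdot\beta_{2}\cdot\gamma$. If $(a,b)\in\hbo^{1}$ we are in case~(2); otherwise a final swap of the now-adjacent, $\hbo^{1}$-unrelated pair $a,b$ delivers case~(3).

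\textbf{Main obstacle.} The chief technical hurdle is establishing the swap-invariance principle rigorously for both $\ccvt{}$ and $\scct{}$: one must verify that exchanging two $\hbo^{1}$-unrelated adjacent events does not silently shift any other event's $\rfo$-witness, does not flip two stores to the same variable, and preserves causal delivery, transaction isolation, and, for $\ccvt{}$, the per-variable timestamp order of applied stores. Once this is in place, the partition of $\beta$ and the forward/backward transport of $a$ and $b$ across $\beta_{1},\beta_{2}$ are entirely routine.
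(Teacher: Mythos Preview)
Your proposal is correct and takes a genuinely different route from the paper's proof. The paper proceeds by induction on $\length{\beta}$: in the inductive step it considers the last event $c$ of $\beta$, and observes that the negation of~(1) forces either that $a$ is not $\hbo^{1}$-related to $c$ and does not happen-before $c$ through the prefix, or that $(c,b)\notin\hbo^{1}$. In the first case it applies the induction hypothesis twice (first to the pair $(a,c)$, pushing $c$ left of $a$, then to $(a,b)$ across the shortened gap); in the second it swaps the adjacent pair $(c,b)$ and applies the hypothesis once to $(a,b)$. This is a nested recursive argument that implicitly discovers the split of $\beta$ one element at a time.

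Your approach, by contrast, computes the partition $\beta_{1}/\beta_{2}$ up front as the forward $\hbo^{1}$-closure of $a$ inside $\beta$, and then performs a single global reordering by bubble-sorting and transporting $a$ and $b$. This buys an explicit, non-recursive description of the final $\beta_{1},\beta_{2}$ (precisely the elements not reachable, resp.\ reachable, from $a$ by forward $\hbo^{1}$-chains inside $\beta$), whereas the paper's inductive construction leaves $\beta_{1},\beta_{2}$ implicit in the recursion. Both arguments rest on the same adjacent-swap invariance principle you identify as the main obstacle; the paper takes it for granted in the same way. One small point worth making explicit in your write-up: during the bubble sort, whenever an out-of-order adjacent pair $c_{i}\in\beta_{2}$, $c_{j}\in\beta_{1}$ arises, $c_{i}$ must also have preceded $c_{j}$ in the \emph{original} $\beta$ (since $\beta_{1}$-elements only move left and $\beta_{2}$-elements only move right relative to each other), so your key combinatorial observation, stated for the original order, indeed applies.
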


\begin{proof}
We prove that $\neg(1)\Rightarrow((2)\ \mathsf{or}\ (3))$ using induction on the size of $\beta$. 

\noindent
\textbf{Base case:} 
If $\length{\beta} = 0$, then $\tau = \alpha\cdot a\cdot b\cdot \gamma$, which implies that $a$ does not happen-before $b$ through $\beta$ (by definition, $\beta$ cannot be empty). Thus, either $a$ and $b$ are $\hbo^{1}$-related, which corresponds to $(2)$, or $a$ and $b$ are not $\hbo^{1}$-related, which implies that $b$ can move to the left of $a$ producing the trace $\tau' = \alpha\cdot b\cdot a\cdot \gamma$ that has the same happens-before as $\tau$ and that corresponds to $(3)$.


\noindent
\textbf{Induction step:} We assume that the lemma holds for $\length{\beta} \leq n$.
Consider $\tau_{n+1}=\alpha\cdot a\cdot \beta \cdot b\cdot \gamma$ with $\length{\beta} = n + 1$.
Consider $c$ the last event in the sequence $\beta = \beta_1\cdot c$.
If $a$ does not happen before $b$ through $\beta$, then either $a$ does not happen before $c$ through $\beta_1$ and $a$ and $c$ are not $\hbo^{1}$-related, or $c$ and $b$ are not $\hbo^{1}$-related.

First case: suppose that $a$ does not happen before $c$ through $\beta_1$ and $a$ and $c$ are not $\hbo^{1}$-related.
Using the induction hypothesis over $\tau_{n+1}$ with respect to $a$ and $c$ (since $\length{\beta_1} \leq n$) results in $\tau'_{n+1} = \alpha\cdot\beta_{11}\cdot c\cdot a\cdot\beta_{12}\cdot b\cdot\gamma$ that has the same happens-before as $\tau_{n+1}$.
We know that if $a$ happens-before $b$ through $\beta_{12}$ then $a$ happens-before $b$ through $\beta$ because $\beta_{12}$ is a subset of $\beta$. Therefore, $a$ does not happen-before $b$ through $\beta_{12}$. Since $\length{\beta_{12}} \leq \length{\beta_1} \leq n$, then we can apply the induction hypothesis to $\tau'_{n+1}$ with respect to $a$ and $b$ which yields either $\tau''_{n+1} = \alpha\cdot\beta_{11}\cdot c\cdot\beta_{121}\cdot b\cdot a\cdot\beta_{122}\cdot\gamma$ which has the same happens-before as $\tau'_{n+1}$, if $a$ and $b$ are not $\hbo^{1}$-related, or $\tau''_{n+1} = \alpha\cdot\beta_{11}\cdot c\cdot\beta_{121}\cdot a\cdot b\cdot\beta_{122}\cdot\gamma$ which has the same happens-before as $\tau'_{n+1}$, otherwise.

Second case: suppose $c$ and $b$ are not $\hbo^{1}$-related. We apply the induction hypothesis to $\tau_{n+1}$ with respect to $c$ and $b$, and we get $\tau'_{n+1} = \alpha\cdot a\cdot\beta_{1}\cdot b\cdot c\cdot\gamma$ with the same happens-before as $\tau_{n+1}$. As we already know that $a$ does not happen before $b$ through $\beta$ then $a$ does not happen before $b$ through  $\beta_{1}$.
Subsequently by using the induction hypothesis over $\tau'_{n+1}$ with respect to $a$ and $b$, we obtain $\tau''_{n+1} =\alpha\cdot\beta_{11}\cdot b\cdot a\cdot\beta_{12}\cdot c\cdot\gamma$ where $\tau''_{n+1}$ has the same happens-before as $\tau_{n+1}'$, if $a$ and $b$ are not $\hbo^{1}$-related, or $\tau''_{n+1} =\alpha\cdot\beta_{11}\cdot a\cdot b\cdot\beta_{12}\cdot c\cdot\gamma$ where $\tau''_{n+1}$ has the same happens-before as $\tau_{n+1}'$, otherwise.
\end{proof}

We show next that a robustness violation should contain at least an issue and a store event of the same transaction
that are separated by another event that occurs after the issue and before the store and which is related to both via the happens-before relation. Otherwise, since any two events which are not related by happens-before could be swapped in order to derive a trace with the same happens-before, every store event could be swapped until it immediately follows the corresponding issue and the trace would be serializable.

\begin{lem}\label{lemma:OneDelayedTr}
Given a violation $\tau$, there must exist a transaction $\atr$ such that $\tau = \alpha \cdot \issueact(\apr,\atr) \cdot \beta \cdot \storeact(\apr_0,\atr) \cdot \gamma$ and $\issueact(\apr,\atr)$ happens-before $\storeact(\apr_0,\atr)$ through $\beta$.
\end{lem}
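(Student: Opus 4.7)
The plan is a proof by contradiction using Lemma~\ref{lemma:Reordering}. Suppose, for contradiction, that $\tau$ is a violation but that no decomposition $\tau = \alpha \cdot \issueact(\apr,\atr) \cdot \beta \cdot \storeact(\apr_0,\atr) \cdot \gamma$ satisfies the happens-before-through condition. I will construct a trace $\tau^{*}$ with the same happens-before as $\tau$ in which every issue event is immediately followed by all of its store events. Such a $\tau^{*}$ fits the definition of a serializable trace from Section~\ref{sec:PRs}, so by Theorem~\ref{th:acyclicity} the trace $\tau$ is itself serializable, contradicting its violation status.

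The key observation is that for any issue-store pair of the same transaction, the two events are related by $\sametro \subseteq \hbo^{1}$, and the store can never precede the issue in a valid execution. Hence option~(3) of Lemma~\ref{lemma:Reordering} is unavailable for such a pair, and whenever its option~(1) fails in the current trace, option~(2) must apply, letting one pull the store immediately after the issue while preserving the happens-before relation. I would then iteratively compact $\tau$ by repeatedly picking a non-adjacent issue-store pair and applying option~(2); any monotone progress measure, such as $\sum_{\atr,\apr_0}\bigl(\mathit{pos}(\storeact(\apr_0,\atr)) - \mathit{pos}(\issueact(\apr,\atr))\bigr)$, strictly decreases at each rewrite, so the procedure terminates at the desired $\tau^{*}$.

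The main subtlety is that option~(1) is formulated with respect to the events currently lying between the chosen pair, whereas the hypothesis forbids happens-before-through only in the \emph{original} $\tau$. A single rewrite may shift events into or out of the window between another still-uncompacted issue-store pair, so in principle a fresh happens-before-through witness could emerge in some intermediate trace. I would control this by processing pairs in a fixed order --- for instance, always the pair with the leftmost non-adjacent store --- and arguing that, because $\hbo^{1}$ is invariant under the rewrites, any new chain appearing in an intermediate trace corresponds event-for-event to a chain already entirely contained in the $\beta$-window of some pair of $\tau$, contradicting the standing hypothesis. This bookkeeping is the one place where care is required; the rest is the direct chaining of Lemma~\ref{lemma:Reordering} with Theorem~\ref{th:acyclicity}.
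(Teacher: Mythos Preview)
Your proposal takes essentially the same route as the paper's proof: argue by contradiction, apply Lemma~\ref{lemma:Reordering} to each non-adjacent issue--store pair (using that $\sametro\subseteq\hbo^{1}$ forces option~(2) rather than option~(3)), obtain a trace $\tau^{*}$ with the same happens-before in which every transaction is atomic, and invoke Theorem~\ref{th:acyclicity} to contradict the assumption that $\tau$ is a violation. The paper's argument is terser and simply asserts that iterating the rewrite yields $\tau^{*}$; your explicit termination measure and your discussion of the iteration subtlety are refinements of the same argument, not a different approach.
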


\begin{proof}
Assume by contradiction that the lemma does not hold. 
For every transaction $\atr$ of $\tau$ suppose there exist $\apr'\in\mathbb{P}$ such that $\storeact(\apr',\atr)$ does not occur immediately after $\issueact(\apr,\atr)$. Thus, $\tau = \alpha \cdot \issueact(\apr, \atr)\cdot \beta\cdot \storeact(\apr', \atr) \cdot\gamma$, and $(\issueact(\apr, \atr),\storeact(\apr', \atr) )\in \sametro \subset \hbo^{1}$. From Lemma \ref{lemma:Reordering}, $\tau' = \alpha \cdot \beta_1 \cdot\issueact(\apr, \atr)\cdot \storeact(\apr', \atr) \cdot\beta_2\cdot\gamma$ has the same happens-before as $\tau$ (since $\issueact(\apr, \atr)$ does not happens-before $\storeact(\apr', \atr)$ through $\beta$). 
Then, the trace $\tau^{*}$ where for every transaction $\atr$ of $\tau$ the store events occur immediately after the issue event has the same happens-before as $\tau$. Thus, $\tau^{*}$ is serializable which means that its $\hbo_t$ is acyclic which contradicts the fact  that $\tau$ is a violation. 
\end{proof}

The transaction $\atr$ in the trace $\tau$ above is called a \emph{delayed} transaction. The happens-before constraints imply that $\atr$ belongs to a transactional happens-before cycle in the trace. In the remainder of the paper, when given a violation $\tau = \alpha \cdot \issueact(\apr,\atr) \cdot \beta \cdot \storeact(\apr_0,\atr) \cdot \gamma$, we assume that $\atr$ is the \emph{first} delayed transaction in $\tau$. 

\subsection{Minimal Violations}

Given a trace $\tau = \alpha \cdot b \cdot \beta \cdot c \cdot \omega$ containing two events $b=\issueact(\apr,\atr)$ and $c$, the \emph{distance} between $b$ and $c$, denoted by $d_{\tau}(b,c)$,
is the number of events in $\beta$ that are causally related to $b$, excluding events that correspond to the delivery of $\atr$, i.e., $d_{\tau}(b,c) = \length{\{d \in \beta\ |\ (b,d) \in \viso\ \land d \neq \storeact(\apr',\atr)\mbox{ for every $\apr'\in \mathbb{P}$}\}}$

The \emph{number of delays} $\#(\tau)$ in a trace $\tau$ is the sum of all distances between issue and store events that originate from the same transaction:
$$\#(\tau)= \sum_{\issueact(\apr,\atr),\ \storeact(\apr',\atr)\ \in\ \tau}\ d_{\tau}(\issueact(\apr,\atr),\storeact(\apr',\atr))$$

\begin{defi}[Minimal Violation]
A robustness violation $\tau$ is called \emph{minimal} if it has the least number of delays among all robustness violations (for a given program $\aprog$ and semantics $\textsf{X} \in \{\wcct{},\ \ccvt{},\ \scct{}\}$).
\end{defi}

\begin{rem}
It is important to note that a non-robust program can admit multiple minimal violations with different happens-before relations.
For instance, Fig.~\ref{fig:equivtraces0} pictures two minimal violations that do not have the same happens-before and both traces have 0 delays. In the trace in Fig.~\ref{fig:equivtraces0b} a single transaction is delayed while in the trace in Fig.~\ref{fig:equivtraces0c} two transactions are delayed and are not causally related. For the trace $\tau_1$ in Fig.~\ref{fig:equivtraces0b}, we have that $\#(\tau_1)= d_{\tau_1}(\issueact(\apr 2,\atr 2),\storeact(\apr 2,\atr 2)) + d_{\tau_1}(\issueact(\apr 2,\atr 2),\storeact(\apr 3,\atr 2))=0$. For the trace $\tau_2$ in Fig.~\ref{fig:equivtraces0c}, we have that $\#(\tau_2)= d_{\tau_2}(\issueact(\apr 1,\atr 1),\storeact(\apr 1,\atr 1))+d_{\tau_2}(\issueact(\apr 1,\atr 1),\storeact(\apr 3,\atr 1))+d_{\tau_2}(\issueact(\apr 2,\atr 2),\storeact(\apr 3,\atr 2))=0$. Hence, the number of delays for both cases is 0.  
\end{rem}

\begin{figure}[!ht]
\centering
\begin{minipage}[c]{0.45\textwidth}
\begin{subfigure}{\linewidth}
\lstset{basicstyle=\ttfamily\scriptsize}
\begin{minipage}[c]{17mm}
\begin{lstlisting}[language=Java10]
    p1:
t1: [x = 1
     r1 = y]
\end{lstlisting}
\end{minipage}
\begin{minipage}[c]{2mm}
\footnotesize{$||$}
\end{minipage}
\begin{minipage}[c]{17mm}
\begin{lstlisting}[language=Java10]
    p2:
t2: [y = 2
     r2 = z]
\end{lstlisting}
\end{minipage}
\begin{minipage}[c]{2mm}
\footnotesize{$||$}
\end{minipage}
\begin{minipage}[c]{15mm}
\begin{lstlisting}[language=Java10]
    p3:
t3: [z = 3
     r3 = x
     r4 = y]
\end{lstlisting}
\end{minipage}
\caption{A program.}
\label{fig:equivtraces0a}
\end{subfigure}
\end{minipage}
\hfill
\begin{minipage}[c]{0.39\textwidth}
\begin{subfigure}{\linewidth}
\scalebox{0.59}
{
\begin{tikzpicture}

   \node[shape=rectangle ,draw=none,font=\large] (A) at (0,0)  [] {$\issueact(\apr 2,\atr 2)$};
   \node[shape=rectangle ,draw=none,font=\large] (A1) at (1.9,0)  [] {$\storeact(\apr 2,\atr 2)$};
  \node[shape=rectangle ,draw=none,font=\large] (B) at (4.1,0)  [] {$(\apr 3,\atr 3)$};
  \node[shape=rectangle ,draw=none,font=\large] (C) at (6.3,0)  [] {$\storeact(\apr 3,\atr 2)$};
  \node[shape=rectangle ,draw=none,font=\large] (D) at (8.3,0)  [] {$(\apr 1,\atr 1)$};

  \begin{scope}[ every edge/.style={draw=black,very thick}]
  \path [->,] (A) edge [bend right] node [above,font=\large] {$RW$} (B);
  \path [->] (B) edge [bend left] node [above,font=\large] {$RW$} (C);
  \path [->] (B) edge  [bend right] node [above,font=\large] {$RW$} (D);
  \path [->] (A) edge [bend left] node [above,font=\large] {$STO$} (C);
  \end{scope}

\end{tikzpicture}}
\caption{A minimal violation of $(a)$.}
\label{fig:equivtraces0b}
\end{subfigure}
\end{minipage}

\hfill\begin{minipage}[c]{.85\textwidth}
  \begin{subfigure}{\linewidth}
  \scalebox{0.59}
  {
  \begin{tikzpicture}
  
    \node[shape=rectangle ,draw=none,font=\large] (A) at (0,0)  [] {$\issueact(\apr 1,\atr 1)$};
    \node[shape=rectangle ,draw=none,font=\large] (A1)at (1.9,0)  [] {$\storeact(\apr 1,\atr 1)$};
    \node[shape=rectangle ,draw=none,font=\large] (B) at (4.1,0)  [] {$\issueact(\apr 2,\atr 2)$};
    \node[shape=rectangle ,draw=none,font=\large] (B1)at (6,0)  [] {$\storeact(\apr 2,\atr 2)$};
    \node[shape=rectangle ,draw=none,font=\large] (C) at (8.2,0)  [] {$\storeact(\apr 1,\atr 2)$};
    \node[shape=rectangle ,draw=none,font=\large] (D) at (10.4,0)  [] {$(\apr 3,\atr 3)$};
    \node[shape=rectangle ,draw=none,font=\large] (G) at (12.6,0)  [] {$\storeact(\apr 3,\atr 2)$};
    \node[shape=rectangle ,draw=none,font=\large] (H) at (14.8,0)  [] {$\storeact(\apr 3,\atr 1)$};
  
    \begin{scope}[ every edge/.style={draw=black,very thick}]
    \path [->] (A) edge [bend right] node [below,font=\large] {$STO$} (A1);
    \path [->] (A) edge [bend left] node [above,font=\large] {$RW$} (C);
    \path [->] (B) edge [bend right] node [above,font=\large] {$STO$} (C);
    \path [->] (B) edge [bend left] node [above,font=\large] {$RW$} (D);
    \path [->] (D) edge  [bend left] node [above,font=\large] {$RW$} (G);
    \path [->] (D) edge  [bend right] node [above,font=\large] {$RW$} (H);
    \end{scope}
  
  \end{tikzpicture}}
  \caption{Another minimal violation of $(a)$.}
  \label{fig:equivtraces0c}
  \end{subfigure}
  \end{minipage}\hfill
  \caption{Example of two minimal violation traces that do not have the same happens-before relation (possible under both \ccvt{} and \scct{}). Both traces have the same number of delays which is equal to $0$. The minimal violation in $(b)$ contains a single delayed transaction ($\atr 2$), and the minimal violation in $(c)$ contains two delayed transactions ($\atr 1$ and $\atr 2$). For readability, we do not show all $\po$ and $\sametro$ transitions.}
  \label{fig:equivtraces0}
  \end{figure}

Given a minimal violation $\tau = \alpha \cdot \issueact(\apr,\atr) \cdot \beta \cdot \storeact(\apr_0,\atr) \cdot \gamma$, the following lemma shows that we can assume w.l.o.g. that $\gamma$ contains only store events from transactions that were issued before $\storeact(\apr_0,\atr)$ in $\tau$. 
  
\begin{lem}\label{lemma:StoresSuffix}
  Let $\tau = \alpha \cdot \issueact(\apr,\atr) \cdot \beta \cdot \storeact(\apr_0,\atr) \cdot \gamma$ be a minimal violation such that $\issueact(\apr,\atr)$ happens-before $\storeact(\apr_0,\atr)$ through $\beta$. Then, $\tau' = \alpha \cdot \issueact(\apr,\atr) \cdot \beta \cdot \storeact(\apr_0,\atr) \cdot \gamma'$, such that $\gamma'$ contains only store events from transactions that were issued before $\storeact(\apr_0,\atr)$ in $\tau$, is also a minimal violation.
\end{lem}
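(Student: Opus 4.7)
The plan is to construct $\tau'$ by pruning $\gamma$ and then to verify three conditions: $\tau'$ is a trace under $\textsf{X}$, it is a robustness violation, and it has the minimum number of delays. Concretely, I would define $\gamma'$ to be the subsequence of $\gamma$ obtained by deleting every issue event $\issueact(\apr'',\atr'')$ and every store event $\storeact(\apr'',\atr'')$ whose transaction $\atr''$ is issued inside $\gamma$. What survives is exactly the set of store events that deliver transactions already committed in the prefix $\alpha\cdot \issueact(\apr,\atr)\cdot\beta\cdot\storeact(\apr_0,\atr)$.

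For validity, the prefix is inherited verbatim from $\tau$, so the configuration reached at $\storeact(\apr_0,\atr)$ is identical to that in $\tau$. Each remaining event in $\gamma'$ is a $\storeact$ whose message is already in transit at this configuration, so the $\storeact$ transition rule applies. Transaction isolation holds trivially in $\gamma'$ since no $\beginact$ or $\commitact$ appears there, and causal delivery is inherited from $\tau$: the relative order of the surviving deliveries is unchanged, and the discarded events correspond to transactions strictly causally later than anything still present in $\gamma'$, so they cannot be required causal predecessors of any remaining delivery.

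For the violation property, the hypothesis that $\issueact(\apr,\atr)$ happens-before $\storeact(\apr_0,\atr)$ through $\beta$ yields a non-empty chain of $\hbo^{1}$-edges through events in $\beta$. A quick inspection of the edge types shows that no $\hbo^{1}$-edge can connect two store events of the same transaction (none of $\po$, $\rfo$, $\sto$, $\cfo$, $\sametro$ does so), so the chain cannot stay inside $\atr$'s own events past its first step and must visit at least one event of a transaction $\atr'\neq \atr$. This produces a transactional happens-before cycle $\atr \to_{\hbo_t} \atr' \to_{\hbo_t}^{+} \atr$ all of whose witnessing events lie in the prefix. Because $\po$, $\rfo$, $\sto$, $\cfo$, and $\sametro$ between two prefix events are determined entirely by the prefix, every edge of this cycle is inherited by $\tau'$, so $\tau'$ is still a violation.

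Finally, for minimality, $\tau'$ is a subsequence of $\tau$, so for every surviving pair $(\issueact(\apr'',\atr''),\storeact(\apr''',\atr''))$ the distance $d_{\tau'}$ can only decrease relative to $d_\tau$, and deleted transactions contribute nothing to $\#(\tau')$. Hence $\#(\tau')\le \#(\tau)$; since $\tau'$ is a violation and $\tau$ is minimal, equality holds and $\tau'$ is minimal too. The main obstacle I expect is the validity check above: it requires a careful case analysis on the three clauses in the definition of causal delivery to confirm that erasing issue events and the stores of their transactions cannot falsify any of them for the surviving store events in $\gamma'$.
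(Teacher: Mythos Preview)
Your proof is correct and follows the same core idea as the paper's own argument: the happens-before cycle witnessing non-serializability lies entirely in the prefix $\alpha\cdot\issueact(\apr,\atr)\cdot\beta\cdot\storeact(\apr_0,\atr)$, so pruning $\gamma$ preserves the violation while only possibly decreasing the delay count. The paper's proof is a single sentence to this effect; you expand on each of the three checks (validity under the operational semantics, preservation of the cycle, minimality) that the paper leaves implicit, and in particular you supply the validity argument for $\tau'$ as a trace that the paper omits entirely.
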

\begin{proof}
The prefix $\alpha \cdot \issueact(\apr,\atr) \cdot \beta \cdot \storeact(\apr_0,\atr)$ has a cyclic transactional happens-before and it is already a minimal violation independently of whether $\gamma$ contains additional transactions. 
\end{proof}

  
The following result shows that for every minimal violation, we can extract another minimal violation of the shape $\tau = \alpha \cdot \issueact(\apr,\atr) \cdot \beta \cdot (\apr',\atr') \cdot \storeact(\apr',\atr) \cdot \gamma$ such that $(\issueact(\apr,\atr),(\apr',\atr'))\in \hbo$, and $((\apr',\atr'),\storeact(\apr', \atr))\in \hbo^{1}$.
  
\begin{lem}\label{lemma:LastMacroEvent}
If $\aprog$ is a program that is not robust against some $\textsf{X} \in \{\ccvt{},\ \scct{},\ \wcct{}\}$, then its set of traces under the semantics $\textsf{X}$ must admit a minimal violation of the shape $\tau = \alpha \cdot \issueact(\apr,\atr) \cdot \beta \cdot (\apr',\atr') \cdot \storeact(\apr',\atr) \cdot \gamma$ such that $(\issueact(\apr,\atr),(\apr',\atr'))\in \hbo$ and $((\apr',\atr'),\storeact(\apr', \atr))\in \hbo^{1}$.
\end{lem}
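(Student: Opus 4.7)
I will start from a carefully chosen minimal violation and reshape it, preserving its happens-before, until it has the required form. Among all minimal violations of $\aprog$ under $\textsf{X}$, I select $\tau_0$ that additionally maximizes the number of transactions appearing as contiguous atomic macro-events. By Lemmas~\ref{lemma:OneDelayedTr} and~\ref{lemma:StoresSuffix}, $\tau_0$ is already of the form $\tau_0 = \alpha \cdot \issueact(\apr,\atr) \cdot \beta \cdot \storeact(\apr',\atr) \cdot \gamma$ where $\issueact(\apr,\atr)$ happens-before $\storeact(\apr',\atr)$ through $\beta$, and $\gamma$ contains only deliveries of transactions issued earlier in $\tau_0$.

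The first step is to extract a single event $c$ that sits immediately before $\storeact(\apr',\atr)$ and is directly $\hbo^{1}$-linked to it. Pick $c$ as the last event in $\beta$ with $(c,\storeact(\apr',\atr)) \in \hbo^{1}$; such a $c$ exists because any through-$\beta$ chain must terminate with a direct $\hbo^{1}$-edge into $\storeact(\apr',\atr)$. By the choice of $c$, every event strictly between $c$ and $\storeact(\apr',\atr)$ is $\hbo^{1}$-unrelated to $\storeact(\apr',\atr)$, so iterated application of Lemma~\ref{lemma:Reordering}(3) commutes each of them past $\storeact(\apr',\atr)$ and yields $\tau_1 = \alpha \cdot \issueact(\apr,\atr) \cdot \beta_1 \cdot c \cdot \storeact(\apr',\atr) \cdot \gamma_1$ with the same happens-before, hence still a minimal violation. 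A short case analysis on the $\hbo^{1}$ sub-relation justifying $(c,\storeact(\apr',\atr))$ eliminates $\rfo$ (which targets issues, not stores) and also $\po$ and $\sametro$ (whose only $\hbo^{1}$-predecessor of $\storeact(\apr',\atr)$ is $\issueact(\apr,\atr)$, which lies in $\alpha$ rather than in $\beta$). What remains are only two possibilities: either $c = \storeact(\apr',\atr'')$ linked by $\sto$ with $\atr''$ writing a variable that $\atr$ writes at $\apr'$, or $c = \issueact(\apr',\atr'')$ linked by $\cfo$ with $\atr''$ issued at $\apr'$ reading a variable overwritten by $\atr$ at $\apr'$.

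It remains to show that $\atr''$ is issued at $\apr'$ and that its events can be collapsed into a contiguous atomic macro-event $(\apr',\atr'')$ sitting immediately before $\storeact(\apr',\atr)$. In the $\cfo$ case $\atr''$ is issued at $\apr'$ by definition. In the $\sto$ case, I argue from the secondary-maximization choice of $\tau_0$: if $\atr''$ were issued at some $\apr''\ne\apr'$, then $\storeact(\apr',\atr'')$ could be slid leftward via Lemma~\ref{lemma:Reordering} toward $\issueact(\apr'',\atr'')$ without changing any happens-before edge, strictly increasing the atomicity count and contradicting the choice of $\tau_0$. The same style of argument applied to each remaining delivery $\storeact(\apr^*,\atr'')$ forces it to cluster next to the other events of $\atr''$, producing the atomic macro-event $(\apr',\atr'')$ in $\tau_1$. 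Setting $\atr' := \atr''$, the edge $((\apr',\atr'),\storeact(\apr',\atr))\in\hbo^{1}$ is exactly the one witnessed by $c$, while $(\issueact(\apr,\atr),(\apr',\atr'))\in\hbo$ is inherited from the through-$\beta$ chain, which the reorderings preserve.

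The main obstacle is making the secondary-maximization argument fully rigorous, since a leftward slide of one of $\atr''$'s events might be blocked by a $\sto$- or $\cfo$-predecessor $e$ belonging to yet another transaction (for instance, an earlier store at $\apr'$ on the same variable). Handling this cleanly requires either replacing $c$ with $e$ and iterating the entire analysis on the new candidate, or refining the secondary measure so that any blocking $e$ itself contributes to non-atomicity that the measure is set to minimize, thereby yielding a contradiction in all residual cases. Either route needs a careful termination argument to ensure that the cascading replacements indeed stabilize within the finite trace.
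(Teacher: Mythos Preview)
Your approach diverges from the paper's (which proceeds by induction on $|\beta|$ and trace surgery), and the divergence exposes a real gap that you yourself flag but do not close.

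The problematic step is the $\sto$ case. You have $c=\storeact(\apr',\atr'')$ with $(c,\storeact(\apr',\atr))\in\sto$, and you want to conclude that $\atr''$ is issued at $\apr'$ so that collapsing it yields the macro-event $(\apr',\atr'')$ required by the lemma. Your secondary-maximization argument does not establish this. Suppose $\atr''$ is issued at $\apr''\neq\apr'$. Sliding $\storeact(\apr',\atr'')$ leftward toward $\issueact(\apr'',\atr'')$ and clustering all deliveries of $\atr''$ would, at best, produce the macro-event $(\apr'',\atr'')$, not $(\apr',\atr'')$; the issuing process is a fact about the trace, not something atomicity can change. So even if the slide succeeds and contradicts maximality, you have only shown that in $\tau_0$ the transaction $\atr''$ is already atomic at $\apr''$ --- you have not shown $\apr''=\apr'$. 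The lemma's shape then fails: you end up with $(\apr'',\atr'')\cdot\storeact(\apr',\atr)$ at two different processes.

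The paper sidesteps this entirely. When the last link in the chain is a store $\storeact(\apr_0,\atr')$, it does \emph{not} try to force $\atr'$ to be issued at $\apr_0$. Instead it either (i) regroups $\atr'$ into $(\apr',\atr')$ at its actual issuing process $\apr'$ and then \emph{replaces} the target store $\storeact(\apr_0,\atr)$ by $\storeact(\apr',\atr)$ (arguing that the same $\hbo^1$ edge, via $\cfo$ or $\sto$, exists to the copy of $\atr$'s store at $\apr'$, and reordering accordingly), or (ii) when regrouping is blocked because $\atr'$ is itself delayed, it removes transactions to shrink $\beta$ and invokes the induction hypothesis on a possibly different first delayed transaction. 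Your maximization device has no analogue of either move, and your proposed fallback of ``replacing $c$ with $e$ and iterating'' is essentially a disguised induction that you would still need to set up and terminate --- at which point you are back to the paper's method.

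A smaller point: your elimination of $\cfo$ forcing the source to be issued at $\apr'$ is not quite complete, because of the initial-value clause in the definition of $\cfo$ (a read of the initial value is $\cfo$-related to \emph{every} store of a writing transaction, at any process). This is fixable by the same store-switching trick the paper uses, but it is another place where your argument needs the idea you are missing.
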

  
\begin{proof} 
Let $\tau = \alpha \cdot \issueact(\apr, \atr)\cdot \beta\cdot \storeact(\apr_0, \atr) \cdot\gamma$ be a minimal violation of $\aprog$, such that  $\issueact(\apr, \atr)$ happens-before $\storeact(\apr_0,\atr)$ through $\beta$. By Lemma~\ref{lemma:StoresSuffix}, we assume that $\gamma$ contains only store events. 
We prove by induction on the size of $\beta$ that $\aprog$ admits another minimal violation against $\textsf{X}$ of the form $\tau' = \alpha' \cdot \issueact(\apr_1,\atr_1) \cdot \beta' \cdot (\apr_2,\atr_2) \cdot \storeact(\apr_2,\atr_1) \cdot \gamma'$ such that $(\issueact(\apr_1,\atr_1),(\apr_2,\atr_2))\in \hbo$, $((\apr_2,\atr_2),\storeact(\apr_2, \atr_1))\in \hbo^{1}$, and $\tau'$ is a permutation of a subsequence of $\tau$. 

Note that $\issueact(\apr,\atr)$ happens-before $\storeact(\apr_0,\atr)$ through $\beta$ implies that there exists a sub-sequence $c_1 \cdots c_n$ of $\beta$ that satisfies: $c_i\rightarrow_{\hbo^{1}} c_{i+1}\quad\text{ for all }i\in[0, n]$ 
where $c_0= \issueact(\apr,\atr)$, $c_{n+1}= \storeact(\apr_0,\atr)$. Then, we have three possibilities for $c_{n}$: $(\apr',\atr')$, $\issueact(\apr',\atr')$, or $\storeact(\apr_0,\atr')$. 

\noindent
\textbf{Base case:} 
$\length{\beta} =1$ implies that $\beta = c_{n}$.  If $c_{n} = (\apr',\atr')$ then $\tau$ is a minimal violation s.t. $\issueact(\apr,\atr)\rightarrow_{\hbo} (\apr',\atr')$ and $(\apr',\atr')\rightarrow_{\hbo^{1}} \storeact(\apr_0,\atr)$.  If $c_{n} = \issueact(\apr',\atr')$ then we regroup together the issue event $\issueact(\apr',\atr')$ with its store events obtaining $\tau' = \alpha \cdot \issueact(\apr,\atr) \cdot  (\apr',\atr') \cdot \storeact(\apr_0,\atr) \cdot \gamma'$ to be a minimal violation as well (since the transactional happens-before of the trace resulting from reordering store events in $\storeact(\apr_0,\atr) \cdot \gamma'$ will always be cyclic). Since $(\apr',\atr')\rightarrow_{\hbo^{1}} \storeact(\apr_0,\atr)$ implies that $(\apr',\atr')\rightarrow_{\hbo^{1}} \storeact(\apr',\atr)\in \gamma$, then $\tau'' = \alpha \cdot \issueact(\apr,\atr) \cdot  (\apr',\atr') \cdot \storeact(\apr',\atr) \cdot \gamma''$, where the two store events $\storeact(\apr',\atr)$ and $\storeact(\apr',\atr)$ are reordered, is a minimal violation. $c_{n} = \storeact(\apr_0,\atr')$ is not possible since $\atr$ is the first delayed transaction in $\tau$. 

\noindent
\textbf{Induction step:} We assume that the induction hypothesis holds for $\length{\beta} \leq m$. 
The case $c_{n} = (\apr',\atr')$ is trivial. If $c_{n} = \issueact(\apr',\atr')$ then removing the issue events that occur after $c_n$ will not impact the happens-before. Thus, we remove every issue and atomic marco event that occurs after $\issueact(\apr',\atr')$ with all their store events and regroup together the event $\issueact(\apr',\atr')$ with its store events obtaining $\tau' = \alpha \cdot \issueact(\apr,\atr) \cdot \beta' \cdot  (\apr',\atr') \cdot \storeact(\apr_0,\atr) \cdot \gamma'$ to be a minimal violation. Similar to before, $\tau'' = \alpha \cdot \issueact(\apr,\atr) \cdot \beta'\cdot  (\apr',\atr') \cdot \storeact(\apr',\atr) \cdot \gamma''$ is a minimal violation.

If $c_{n} = \storeact(\apr_0,\atr')$, then the corresponding issue event $\issueact(\apr',\atr')$ must occur in $\beta$ ($\alpha$ contains only atomic macro events because $\atr$ is the first delayed transaction). If $\issueact(\apr',\atr')$ does not happen before $\storeact(\apr_0,\atr')$ (or any store event of $\atr'$ in $\beta \cdot \storeact(\apr_0,\atr) \cdot \gamma$) through a subsequence of $\beta$ (resp., $\beta \cdot \storeact(\apr_0,\atr) \cdot \gamma$) then we can regroup together the issue and store events of $\atr'$  and get that $\tau' = \alpha \cdot \issueact(\apr,\atr) \cdot \beta' \cdot  (\apr',\atr') \cdot \beta'' \cdot \storeact(\apr_0,\atr) \cdot \gamma'$ is a minimal violation.  Otherwise, if $\issueact(\apr',\atr')$ happens-before $\storeact(\apr_0,\atr')$ through a subsequence of $\beta$, then $\tau$ can be written as $\tau = \alpha \cdot \issueact(\apr,\atr) \cdot \beta_1\cdot \issueact(\apr',\atr')\cdot \beta_2 \cdot \storeact(\apr_0,\atr') \cdot \beta_3\cdot \storeact(\apr_0,\atr) \cdot \gamma$.  
Note that if there exists an issue event $\issueact(\apr_1,\atr_1)$ in $\beta_1\cdot \issueact(\apr',\atr')\cdot \beta_2$ s.t. $(\issueact(\apr_1,\atr_1), \storeact(\apr_0,\atr)) \in \cfo$ (or $(\issueact(\apr 1,\atr 1), \storeact(\apr_1,\atr)) \in \cfo$) then similar to before the following trace $\tau' = \alpha \cdot \issueact(\apr,\atr) \cdot \beta' \cdot  (\apr_1,\atr_1) \cdot \storeact(\apr_0,\atr) \cdot \gamma'$ (resp., $\tau' = \alpha \cdot \issueact(\apr,\atr) \cdot \beta' \cdot  (\apr_1,\atr_1) \cdot \storeact(\apr_1,\atr) \cdot \gamma'$) is a minimal violation. 
Assume now that there does not exist an issue event $\issueact(\apr_1,\atr_1)$. 
Then, let $\issueact(\apr_2,\atr_2)$ be the first issue event in $\issueact(\apr,\atr)\cdot\beta_1\cdot \issueact(\apr',\atr')$ s.t. $\tau = \alpha \cdot \issueact(\apr,\atr) \cdot \beta'_1\cdot \issueact(\apr_2,\atr_2)\cdot \beta'_2 \cdot \storeact(\apr_3,\atr_2) \cdot \beta'_3\cdot\gamma$ and $\issueact(\apr_2,\atr_2)$ happens-before $\storeact(\apr_3,\atr_2)$ through $\beta'_2$ and s.t. for every issue event in $\issueact(\apr,\atr)\cdot\beta'_1$ of a transaction $\atr_4$ there does not exist an event in $\beta'_1\cdot \issueact(\apr_2,\atr_2)\cdot \beta'_2$ that reads from a variable that $\atr_4$ overwrites. 
We can remove every issue event and atomic marco event which occur after $\storeact(\apr_3,\atr_2)$ with all related stores: 
$\tau' = \alpha \cdot \issueact(\apr,\atr) \cdot \beta'_1\cdot \issueact(\apr_2,\atr_2)\cdot \beta'_2 \cdot \storeact(\apr_3,\atr_2) \cdot \gamma'$ where $\gamma'$ contains only store events is a minimal violation. Then, not delaying the transactions in $\issueact(\apr,\atr)\cdot\beta'_1$ does not affect the reads in $\beta'_1\cdot \issueact(\apr_2,\atr_2)\cdot \beta'_2$, and thus, we get that $\tau'' = \alpha \cdot (\apr,\atr) \cdot \beta''_1\cdot \issueact(\apr_2,\atr_2)\cdot \beta''_2 \cdot \storeact(\apr_3,\atr_2) \cdot \gamma''$, where $\atr_2$ is the first delayed transaction in $\tau''$ and $\issueact(\apr_2,\atr_2)$ happens-before $\storeact(\apr_3,\atr_2)$ through $\beta''_2$, is a minimal violation. Note that $\length{\beta''_2} < \length{\beta} = m +1 $, and we can apply the induction hypothesis to $\tau''$ and conclude the proof. 
\end{proof}

Next, we show that a program which is not robust against \ccvt{} or \scct{} admits violations of particular shapes. 
For the remainder of the paper, we write a minimal violation in the shape $\tau = \atmacro \cdot \issueact(\apr,\atr) \cdot \beta \cdot (\apr',\atr') \cdot \storeact(\apr',\atr) \cdot \stmacro$ to say that all the events in the sequence $\atmacro$ are atomic macro events and all the events in the sequence $\stmacro$ are store events. As before, we assume that $\atr$ is the first delayed transaction in $\tau$, and by Lemma~\ref{lemma:LastMacroEvent}, we assume that $(\issueact(\apr,\atr),(\apr',\atr'))\in \hbo$ and $((\apr',\atr'),\storeact(\apr', \atr))\in \hbo^{1}$.



\begin{figure}[t]
  \begin{minipage}[c]{\textwidth}
  \begin{subfigure}{\linewidth}
  \scalebox{0.71}
  {
  \begin{tikzpicture}
  \node[text width=1cm, anchor=west, left,font=\LARGE] at (0,0) {$\tau_{\ccvt{}1:}$};
   \node[shape=rectangle ,draw=none,font=\huge] (A0) at (0.5,0)  [] { };
   \node[shape=rectangle ,draw=none,font=\Large] (A1) at (1.5,0.3)  [] {$\mathbf{\atmacro}$};
   \node[] (A) at (1.5,0)  [] {};
   \node[shape=rectangle ,draw=none,font=\Large] (B1) at (2.8,0.5)  [] {$\issueact(\apr,\atr)$};
   \node[shape=rectangle ,draw=none,font=\huge] (B) at (2.8,0)  [] {$\circ$};
   \node[shape=rectangle ,draw=none,font=\Large] (B2) at (5,0.5)  [] {$\storeact(\apr,\atr)$};
   \node[shape=rectangle ,draw=none,font=\huge] (B3) at (5,0)  [] {$\circ$};
   \node[shape=rectangle ,draw=none,font=\Large] (C1) at (8,0.3)  [] {$\mathbf{\beta}$};
   \node[] (C) at (8,0)  [] {};
    \node[shape=rectangle ,draw=none,font=\Large] (D1) at (10.5,0.5)  [] {$(\apr',\atr')$};
    \node[shape=rectangle ,draw=none,font=\huge] (D) at (10.5,0)  [] {$\circ$};
    \node[shape=rectangle ,draw=none,font=\Large] (E1) at (13,0.5)  [] {$\storeact(\apr',\atr)$};
    \node[shape=rectangle ,draw=none,font=\huge] (E) at (13,0)  [] {$\circ$};
    \node[shape=rectangle ,draw=none,font=\Large] (F1) at (14.5,0.3)  [] {$\mathbf{\stmacro}$};
     \node[] (F) at (14.5,0)  [] {};
    \node[shape=rectangle ,draw=none,font=\huge] (F0) at (15.5,0)  [] {};
  
    \begin{scope}[ every edge/.style={draw=black}]
    \path [|-|] (A0) edge []  (B);
    \path [|-|] (B3) edge [] (D);
    \path [->] (B1) edge [bend left=20] node [above,font=\small] {$\hbo\setminus \viso$} node [pos=0.95,above,font=\small] {$\forall$} (C1);
    \path [->] (C1) edge [bend left=20] node [above,font=\small] {$\viso$} node [pos=0.05,above,font=\small] {$\forall$} (D1);
    \path [->] (B3)  edge [bend right=20] node  [below,font=\small] {$\sto(y)$} (D);
    \path [->] (D) edge [bend right=20] node [below,font=\small] {$\cfo(y)$} (E);
    \path [|-|] (E) edge [] (F0);
    \end{scope}
  \end{tikzpicture}}
  \label{fig:rob0traceccv}
  \end{subfigure}
  \end{minipage}
  \begin{minipage}[c]{\textwidth}
  \begin{subfigure}{\linewidth}
  \scalebox{0.71}
  {\begin{tikzpicture}
  \node[text width=1cm, anchor=west, left,font=\LARGE] at (0,0) {$\tau_{\ccvt{}2:}$};
   \node[shape=rectangle ,draw=none,font=\large] (A0) at (0.5,0)  [] {};
   \node[shape=rectangle ,draw=none,font=\Large] (A1) at (1.5,0.3)  [] {$\atmacro$};
   \node[] (A) at (1.5,0) [] {};
   \node[shape=rectangle ,draw=none,font=\Large] (B1) at (3,0.5)  [] {$\issueact(\apr,\atr)|_{\neg x}$};
   \node[shape=rectangle ,draw=none,font=\huge] (B) at (3,0)  [] {$\circ$};
   \node[shape=rectangle ,draw=none,font=\Large] (C1) at (5,0.3)  [] {$\beta_1|_{\neg x}$};
   \node[] (C) at (5,0) [] {};
   \node[shape=rectangle ,draw=none,font=\Large] (D1) at (7,0.5)  [] {$\issueact(\apr_{1},\atr_{1})$};
   \node[shape=rectangle ,draw=none,font=\huge] (D) at (7,0)  [] {$\circ$};
   \node[shape=rectangle ,draw=none,font=\Large] (E1) at (11,0.3)  [] {$\beta_2$};
   \node[] (E) at (11,0) [] {};
   \node[shape=rectangle ,draw=none,font=\Large] (H1) at (14,0.5)  [] {$(\apr',\atr')$};
   \node[shape=rectangle ,draw=none,font=\huge] (H) at (14,0)  [] {$\circ$};
   \node[shape=rectangle ,draw=none,font=\Large] (F1) at (17,0.5)  [] {$\storeact(\apr',\atr)$};
   \node[shape=rectangle ,draw=none,font=\huge] (F) at (17,0)  [] {$\circ$};
    \node[shape=rectangle ,draw=none,font=\Large] (G1) at (18.5,0.3)  [] {$\stmacro$};
    \node[] (G) at (18.5,0) [] {};
    \node[shape=rectangle ,draw=none,font=\large] (G0) at (19.5,0)  [] {};
  
    \begin{scope}[ every edge/.style={draw=black}]
    \path [|-|] (A0) edge []  (B);
    \path [|-|] (B) edge [] (D);
    \path [->] (D1) edge [bend left=20] node [above,font=\small] {$\hbo\setminus \viso$} node [pos=0.95,above,font=\small] {$\forall$} (E1);
    \path [|-|] (D) edge [] (H);
    \path [->] (B) edge [bend right=20] node [below,font=\small] {$\viso$} (D);
    \path [->] (D) edge [bend right=25] node [below,font=\small] {$\cfo(x)\cup(\sametro;\sto(x))$} node [pos=0.95,below,font=\small] {$\exists$} (E);
    \path [->] (E1) edge [bend left=20] node [above,font=\small] {$\hbo$} node [pos=0.05,above,font=\small] {$\forall$} (H1);
    \path [->] (H) edge [bend right=20] node [below,font=\small] {$\cfo(y\neq x)$}  (F);
    \path [|-|] (F) edge []  (G0);
    \end{scope}
  \end{tikzpicture}}
  \label{fig:rob1traceccv}
  \end{subfigure}
  \end{minipage}
  \vspace{-0.3cm}
  \caption{Robustness violation patterns under \ccvt{}. We use $a \xrightarrow{R \quad}\mathrel{^{\forall}} \beta$
  to denote $\forall\ b \in \beta.\ (a,b) \in R$. We use $\beta_1|_{\neg x}$ to say that all delayed transactions in $\beta_1$ do not access $x$. For violation $\tau_{\ccvt{}1}$, $\atr$ is the only delayed transaction.
  For $\tau_{\ccvt{}2}$, all delayed transactions are in $\issueact(\apr,\atr)\cdot\beta_1\cdot\issueact(\apr_{1},\atr_{1})$ and they form a causality chain that starts at $\issueact(\apr,\atr)$ and ends at $\issueact(\apr_{1},\atr_{1})$.}
  \label{fig:ccvrobtraces}
  \end{figure}

  \begin{figure}[t]
  \begin{minipage}[c]{0.4\textwidth}
  \begin{subfigure}{\linewidth}
  \scalebox{0.61}
  {
  \begin{tikzpicture}
  
   \node[shape=rectangle ,draw=none,font=\large] (A) at (0,0)  [] {$\issueact(\apr 1,\atr 1)$};
   \node[shape=rectangle ,draw=none,font=\large] (A1) at (1.9,0)  [] {$\storeact(\apr 1,\atr 1)$};
    \node[shape=rectangle ,draw=none,font=\large] (B) at (4.5,0)  [] {$(\apr 2,\atr 2)$};
    \node[shape=rectangle ,draw=none,font=\large] (C) at (6.2,0)  [] {$\storeact(\apr 2,\atr 1)$};
  
    \begin{scope}[ every edge/.style={draw=red,very thick}]
    \path [->] (A1) edge [bend right] node [above,font=\large] {$\sto$} (B);
    \path [->] (B) edge [bend right] node [above,font=\large] {$\cfo$} (A1);
    \end{scope}
  
  \end{tikzpicture}}
  \caption{Violation of LU program in Fig.~\ref{fig:rob0}.}
  \label{fig:rob0LUtraceccv}
  \end{subfigure}
  \end{minipage}
  \hspace{1cm}
  \begin{minipage}[c]{0.4\textwidth}
  \begin{subfigure}{\linewidth}
  \scalebox{0.61}
  {\begin{tikzpicture}
  
   \node[shape=rectangle ,draw=none,font=\large] (A) at (0,0)  [] {$\issueact(\apr 1,\atr 1)$};
   \node[shape=rectangle ,draw=none,font=\large] (A1) at (1.9,0)  [] {$\storeact(\apr 1,\atr 1)$};
    \node[shape=rectangle ,draw=none,font=\large] (B) at (4,0)  [] {$(\apr 2,\atr 2)$};
    \node[shape=rectangle ,draw=none,font=\large] (C) at (6.2,0)  [] {$\storeact(\apr 2,\atr 1)$};
  
    \begin{scope}[ every edge/.style={draw=red,very thick}]
    \path [->] (A) edge  [bend left] node [above,font=\large] {$\cfo$} (B);
    \path [->] (B) edge [bend left] node [above,font=\large] {$\cfo$} (C);
    \end{scope}
  \end{tikzpicture}}
  \caption{Violation of SB program in Fig.~\ref{fig:rob1}.}
  \label{fig:rob1SBtraceccv}
  \end{subfigure}
  \end{minipage}
  \caption{(a) A $\tau_{\ccvt{}1}$ violation where $\beta_2=\epsilon$, $\stmacro=\epsilon$, and $\atr$ and $\atr'$ correspond to $\atr 1$ and $\atr 2$. (b) A $\tau_{\ccvt{}2}$ (resp., $\tau_{\scct{}2}$) violation where $\atr$ and $\atr_1$ coincide and correspond to $\atr 1$. Also, $\beta_1=\epsilon$, $\beta_2=(\apr 2,\atr 2)$, $\stmacro=\epsilon$, such that
  $(\issueact(\apr 1,\atr 1),(\apr 2,\atr 2)) \in \cfo(y)$ and $((\apr 2,\atr 2),\storeact(\apr 2,\atr 1)) \in  \cfo(x)$. In all traces, we show only the relations that are part of the happens-before cycle.}
  \label{fig:ccvrobtracesexamples}
  \end{figure}

  \section{Robustness Violations Under Causal Convergence} \label{sec:MVUCCV}

  In this section, we present a precise characterization of minimal violations under \ccvt{}. In particular, we show that in these violations, the first delayed transaction (which must exist by Lemma~\ref{lemma:OneDelayedTr}) is followed by a possibly-empty sequence of delayed transactions that form a ``causality chain'', i.e., the issue of every new delayed transaction is causally ordered after the issue of the first delayed transaction. Also, we show that the issue event of the last delayed transaction happens-before an event of another transaction that reads a variable updated by the first delayed transaction (which implies a cycle in the transactional happens-before). This characterization will allow us to build a monitor for detecting the existence of robustness violations that is linear in the size of the input program. 
  
Next, we give a precise definition of the ``causality chain''. It consists of a sequence of issue events such that the first issue is causally ordered before every other issue event and every issued transaction is delivered to the process executing the next issue event in the chain, before this issue event executes. 
  
  \begin{defi}
  We say that a sequence of issue events $\event_1\cdot\event_2\cdot\ldots\event_n$ forms a \emph{causality chain} that starts with $\event_1$ and ends at $\event_n$ in a trace $\tau$ if the followings hold:
  \begin{enumerate}
    \item $(\event_{1},\event_{i})\in \viso$, for all $2\leq i\leq n$
    \item for all $1\leq i\leq n-1$ such that $\event_{i} = \issueact(\apr_{i},\atr_{i})$, $\event_{i+1} = \issueact(\apr_{i+1},\atr_{i+1})$, the store event $\storeact(\apr_{i+1},\atr_{i})$ occurs before the issue event $\event_{i+1}$ in $\tau$.  
  \end{enumerate}
  \end{defi}
  
The characterization of robustness violations under \ccvt{} is stated in the following theorem and pictured in Fig.~\ref{fig:ccvrobtraces}. 

\begin{thm}\label{theorem:CcvMinViol}
A program $\aprog$ is not robust under \ccvt{} iff there exists a minimal violation in $\tracesconf(\aprog)_{\ccvt{}}$ of one of the following forms:
\begin{enumerate}[topsep=5pt]
\item $\tau_{\ccvt{}1}=\atmacro\cdot \issueact(\apr,\atr) \cdot \storeact(\apr,\atr) \cdot\beta_{2}\cdot (\apr',\atr') \cdot \storeact(\apr',\atr) \cdot \stmacro\mbox{ where:}$
\begin{enumerate}[label=(\alph*),topsep=5pt]
\item $\issueact(\apr,\atr)$ is the issue of the first and only delayed transaction (Lemma~\ref{lemma:CcvMinForm});
\item  $\exists\ y.$ s.t. $(\storeact(\apr,\atr),(\apr',\atr')) \in \sto(y)$ and $((\apr',\atr'),\storeact(\apr',\atr)) \in  \cfo(y)$ (Lemma \ref{lemma:CcvMinForm}); 
\item   $\forall\ a \in \beta_{2}.\ (\issueact(\apr,\atr),a) \in \hbo\setminus\viso \mbox{ and } (a,(\apr',\atr')) \in \viso$ (Lemma \ref{lemma:CcvMinForm}).
\end{enumerate}
\item $\tau_{\ccvt{}2}=\atmacro\cdot \issueact(\apr,\atr)\cdot \beta_1\cdot \issueact(\apr_{1},\atr_1) \cdot\beta_2\cdot (\apr',\atr')\cdot \storeact(\apr',\atr) \cdot \stmacro\mbox{ where:}$
\begin{enumerate}[label=(\alph*),topsep=5pt]
\item $\issueact(\apr,\atr)$ and $\issueact(\apr_{1},\atr_1)$ are the issues of the first and last delayed transactions (Lemmas \ref{lemma:CcvMinForm} and \ref{lemma:CcvMinViol});
\item  the issues of all delayed transactions are in $\beta_1$ and are included in a causality chain that starts with $\issueact(\apr,\atr)$ and ends at $\issueact(\apr_{1},\atr_{1})$ (Lemma \ref{lemma:CcvMinViol}); 
\item  for every $a\in\beta_2$, we have that $(\issueact(\apr_1,\atr_1),a) \in \hbo\setminus\viso$ and $(a,(\apr',\atr')) \in \hbo$ (Lemma \ref{lemma:CcvMinForm});
\item  there exist $a \in \beta_2\cdot (\apr',\atr')$, $x$, and $y$ s.t. $x \neq y$, $(\issueact(\apr_1,\atr_1),a) \in \cfo(x)\cup(\sametro;\sto(x))$,  $(a,(\apr',\atr'))\in \hbo{}?$\footnote{$\hbo{}?$ is the reflexive closure of $\hbo{}$.}, and $((\apr',\atr'),\storeact(\apr',\atr)) \in \cfo(y)$ (Lemma~\ref{lemma:CcvMinForm});
\item  all delayed transactions in $\issueact(\apr,\atr) \cdot \beta_1$ do not access the variable $\anaddr$ (Lemma \ref{lemma:NoAccessX}).
\end{enumerate}
\end{enumerate}
\end{thm}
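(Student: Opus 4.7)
The plan is to prove the two directions separately. The "if" direction is immediate: the existence of a trace of either shape $\tau_{\ccvt{}1}$ or $\tau_{\ccvt{}2}$ in $\tracesconf(\aprog)_{\ccvt{}}$ is itself a robustness violation, because in each shape the stated edges yield a cycle in the transactional happens-before (from $\issueact(\apr,\atr)$ one reaches $(\apr',\atr')$ through $\viso$ or $\hbo$, and $(\apr',\atr')$ reaches back to a store of $\atr$ via a $\cfo$/$\sto$ hop), so by Theorem~\ref{th:acyclicity} the trace is not serializable and $\aprog$ is not robust. The substantive work lies in the "only if" direction.

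For the forward direction, I would start from an arbitrary minimal violation in $\tracesconf(\aprog)_{\ccvt{}}$ and apply Lemma~\ref{lemma:OneDelayedTr} together with Lemma~\ref{lemma:LastMacroEvent} to put it in the canonical form $\atmacro \cdot \issueact(\apr,\atr) \cdot \beta \cdot (\apr',\atr') \cdot \storeact(\apr',\atr) \cdot \stmacro$, where $\atr$ is the first delayed transaction, $(\issueact(\apr,\atr),(\apr',\atr')) \in \hbo$, and $((\apr',\atr'),\storeact(\apr',\atr)) \in \hbo^{1}$. I would then split on whether $\atr$ is the only delayed transaction or not. When $\atr$ is the only delayed transaction, I can show using Lemma~\ref{lemma:Reordering} that the store event $\storeact(\apr,\atr)$ can be placed immediately after $\issueact(\apr,\atr)$, decomposing $\beta$ as $\storeact(\apr,\atr)\cdot\beta_2$. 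Because $\atr'$ is not delayed and $((\apr',\atr'),\storeact(\apr',\atr)) \in \hbo^{1}$, the closing hop must be a $\cfo(y)$ edge for some $y$ written by $\atr$; furthermore, Lemma~\ref{lemma:CcvProperty} (the asymmetry of $\sto$ under \ccvt{}) and the requirement that the cycle actually close through $\atr$'s own write force the outgoing edge from $\storeact(\apr,\atr)$ to $(\apr',\atr')$ to be $\sto(y)$ on the \emph{same} variable $y$. This yields the shape $\tau_{\ccvt{}1}$ with conditions (a)--(c).

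When there are additional delayed transactions, I would invoke Lemma~\ref{lemma:CcvMinViol} to establish that the delays must organize themselves as a causality chain beginning at $\issueact(\apr,\atr)$ and ending at some $\issueact(\apr_{1},\atr_{1})$, so $\beta$ decomposes as $\beta_1\cdot\issueact(\apr_{1},\atr_{1})\cdot\beta_2$ with all delayed issues in the prefix $\issueact(\apr,\atr)\cdot\beta_1\cdot\issueact(\apr_{1},\atr_{1})$. Lemma~\ref{lemma:CcvMinForm} would then pin down the form of the cycle-closing dependencies in the suffix: every event of $\beta_2$ lies $\viso$-after $\issueact(\apr_1,\atr_1)$ but not $\hbo$-below it, and the closure of the cycle goes through an $a\in \beta_2\cdot (\apr',\atr')$ that is $\cfo(x)$- or $\sametro;\sto(x)$-related from $\issueact(\apr_1,\atr_1)$ on a variable $x$ different from the variable $y$ carrying the final $\cfo(y)$ edge into $\storeact(\apr',\atr)$. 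Finally, I would apply Lemma~\ref{lemma:NoAccessX} to obtain condition (e): no delayed transaction in $\issueact(\apr,\atr)\cdot\beta_1$ accesses $x$; intuitively, if one did, the total arbitration of \ccvt{} over the stores to $x$ (Lemma~\ref{lemma:CcvProperty}) would let us shorten the cycle and reduce the number of delays, contradicting minimality.

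The main obstacle, I expect, is justifying that minimality really restricts the shape down to just these two patterns. Concretely, the delicate points are: (i) showing that with more than one delay the delays must line up into a \emph{linear} causality chain rather than forming a branching or parallel structure, which rests on using Lemma~\ref{lemma:Reordering} repeatedly to commute independent events and then exploiting Lemma~\ref{lemma:CcvProperty} to rule out alternative closures of the cycle; and (ii) establishing the variable separation $x \neq y$ in condition (d), together with the no-access property (e). Both of these rely essentially on \ccvt{}-specific features of $\sto$ (asymmetry induced by timestamps), and the argument is by contradiction with minimality: any configuration violating them could be transformed, via swaps that preserve happens-before, into a violation with strictly fewer delays.
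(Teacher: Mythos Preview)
Your plan has the right ingredients but the primary case split is wrong, and this would derail the argument.

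You propose to split on whether $\atr$ is the only delayed transaction, sending the single-delay case to $\tau_{\ccvt{}1}$ and the multi-delay case to $\tau_{\ccvt{}2}$. This does not match the actual dichotomy. Pattern $\tau_{\ccvt{}2}$ explicitly allows $\atr=\atr_1$ and $\beta_1=\epsilon$, i.e., a single delayed transaction; conversely, the paper's derivation of $\tau_{\ccvt{}1}$ can start from a trace with several delayed transactions and prune down. The real bifurcation in Lemma~\ref{lemma:CcvMinForm} is on the \emph{variable structure} of the closing dependencies: one first argues (using the \ccvt{} timestamp order, not Lemma~\ref{lemma:CcvProperty}) that the last hop $((\apr',\atr'),\storeact(\apr',\atr))$ must be $\cfo(y)$, then analyzes the first nontrivial hop out of the last causally-related issue. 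If that hop is on the \emph{same} variable $y$, one can massage the trace into $\tau_{\ccvt{}1}$; if it is on a different variable $x\neq y$, one obtains $\tau_{\ccvt{}2}$. Your claim that ``the requirement that the cycle actually close through $\atr$'s own write forces the outgoing edge to be $\sto(y)$ on the same variable'' is therefore unjustified: with a single delayed transaction the outgoing edge may well be $\cfo(x)$ with $x\neq y$, landing you in $\tau_{\ccvt{}2}$.

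Two smaller points. First, you state condition (c) for $\beta_2$ backwards: the events of $\beta_2$ are $\hbo$-after $\issueact(\apr_1,\atr_1)$ but \emph{not} $\viso$-after (that is exactly what $\hbo\setminus\viso$ says), not the other way around. Second, Lemma~\ref{lemma:CcvMinViol} only shows that $\beta_2$ (resp.\ $\beta$) contains no delayed transactions; the causality-chain structure of the delays in $\beta_1$ is a separate argument (Lemma~\ref{lemma:CCvcausalitychain} in the paper), which you would need to invoke explicitly rather than folding it into Lemma~\ref{lemma:CcvMinViol}.
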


Above, $\tau_{\ccvt{}1}$ contains a single delayed transaction while $\tau_{\ccvt{}2}$ may contain arbitrarily many delayed transactions. In $\tau_{\ccvt{}1}$ the store event $\storeact(\apr,\atr)$ of the only delayed transaction happens before $(\apr',\atr')$ which is conflicting with $\atr$, thus resulting in a cycle in the transactional happens-before. In $\tau_{\ccvt{}2}$ the issue event of the last delayed transaction $\atr_1$, which is causally ordered after the issue of the first delayed transaction $\atr$, happens before $(\apr',\atr')$ which is conflicting with $\atr$, thus resulting in a cycle in the transactional happens-before as well. The theorem above allows $\atmacro=\epsilon$, $\beta_1=\epsilon$, $\beta_2=\epsilon$, $\beta=\epsilon$, $\stmacro=\epsilon$, $\apr=\apr_1$, $\atr=\atr_1$, and $\atr_1$ to be a read-only transaction. Fig.~\ref{fig:rob0LUtraceccv} and Fig.~\ref{fig:rob1SBtraceccv} show two violations under \ccvt{} where such equalities hold. If $\atr_1$ is a read-only transaction then $\issueact(\apr_{1},\atr')$ has the same effect as $(\apr_{1},\atr_1)$ since $\atr_1$ does not contain writes.

The minimality of the violation enforces the constraints stated above. For example, in the context of $\tau_{\ccvt{}2}$, the delayed transactions in $\beta_1$ cannot create a cycle in the transactional happens-before (otherwise, there exists a sequence of store events $\stmacro'$ such that $\atmacro \cdot \issueact(\apr,\atr)\cdot \storeact(\apr,\atr) \cdot \beta_1 \cdot \storeact(\apr_0,\atr) \cdot \stmacro'$ is a violation with a smaller measure, which contradicts minimality). Moreover, $(c)$ implies that $\beta_2$ contains no stores of delayed transactions from $\beta_1$. If this were the case, then these stores can either be reordered after $\storeact(\apr',\atr)$ or if this is not possible due to happens-before constraints, then there would exist an issue event which is after such a store in the happens-before order and thus causally after $\issueact(\apr,\atr)$, which would contradict the fact that $\issueact(\apr_1,\atr_1)$ is the last issue event in $\tau$ that is causally ordered after $\issueact(\apr,\atr)$. 
Also, if it were to have a delayed transaction $\atr_2$ in $\beta_2$ (resp., $\beta$ for $\tau_{\ccvt{}1}$), then it is possible to remove some transaction (the issue and all its store events) from the original trace and obtain a new violation trace with a smaller number of delays. For instance, in the case of $\beta_2$, if $\atr_1\neq\atr$, then we can remove the events of the last delayed transaction (i.e., $\atr_1$), that is causally related to $\issueact(\apr,\atr)$, since all events in $\beta_2\cdot\storeact(\apr_0,\atr)\cdot\stmacro$ neither read from the writes of $\atr_1$ nor are issued by the same process as $\atr_1$ (because of the $\hbo\setminus\viso$ relation between events $\beta_2$ and $\issueact(\apr_1,\atr_1)$). The resulting trace is still a robustness violation (because of the transactional happens-before cycle involving $\atr_2$ since it is delayed in $\beta_2$) but with a smaller measure. Note that all processes that delayed transactions, stop executing new transactions in $\beta_2$ (resp., $\beta$) because of the relation $\hbo\setminus \viso$, shown in Fig.~\ref{fig:ccvrobtraces}, between the delayed transaction $\atr_1$ (resp., $\atr$) and events in $\beta_2$ (resp., $\beta$).

In the following we give a series of lemmas that collectively imply Theorem~\ref{theorem:CcvMinViol}. Next lemma gives the decomposition of minimal violations under \ccvt{} into two possible patterns. It also characterizes the nature of the happens-before dependencies in these traces. For instance, we show that the last dependency in the happens-before cycle is always a conflict dependency. The lemma proof starts with a minimal violation as characterized in Lemma~\ref{lemma:LastMacroEvent} and uses induction to show that we can always obtain a minimal violation which follows one of the two patterns. The induction is based on the size of the sequence of events between the issue and delayed store events of the first delayed transaction (the sequence $\beta$ in Lemma~\ref{lemma:LastMacroEvent}). 



\begin{lem}\label{lemma:CcvMinForm}
If $\aprog$ is a program that is not robust under \ccvt{}, then it must admit a minimal violation $\tau$ that satisfies one of the following:
\begin{enumerate}[topsep=5pt]
\item $\tau = \atmacro \cdot \issueact(\apr,\atr) \cdot \storeact(\apr,\atr) \cdot \beta \cdot (\apr',\atr') \cdot \storeact(\apr',\atr)  \cdot \stmacro \mbox{ where: }$
\begin{enumerate}[label=(\alph*),topsep=5pt]
\item $\exists\ y.$ s.t.  $(\storeact(\apr,\atr),(\apr',\atr')) \in \sto(y)$ and $((\apr',\atr'),\storeact(\apr',\atr)) \in  \cfo(y)$;
\item $\forall\ a \in \beta.\ (\issueact(\apr,\atr),a) \in \hbo\setminus\viso \mbox{ and } (a,(\apr',\atr')) \in \viso$.
\end{enumerate}
\item $\tau = \atmacro \cdot \issueact(\apr,\atr) \cdot \beta_1 \cdot \issueact(\apr_{1},\atr_1)  \cdot \beta_2 \cdot (\apr',\atr')\cdot \storeact(\apr',\atr)  \cdot \stmacro \mbox{ where: }$
\begin{enumerate}[label=(\alph*),topsep=5pt]
\item $\issueact(\apr_{1},\atr_1)$ is the last issue event in $\{c \in \beta\ |\ (\issueact(\apr,\atr),c) \in \viso\}$;
\item $\exists\ x,\ y,\ \mbox{and }a \in \beta_2\cdot (\apr',\atr')$ s.t. $(\issueact(\apr_1,\atr_1),a) \in \cfo(x)\cup(\sametro;\sto(x))$, $(a,(\apr',\atr'))\in \hbo{}?$, and $((\apr',\atr'),\storeact(\apr', \atr))\in \cfo(y)$;
\item $\forall\ a \in \beta_2.\ (\issueact(\apr_1,\atr_1),a) \in \hbo\setminus\viso \mbox{ and }  (a,(\apr',\atr')) \in \hbo$.
\end{enumerate}
\end{enumerate}
\end{lem}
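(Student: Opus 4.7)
My plan is to start from a minimal violation in the normal form guaranteed by Lemma~\ref{lemma:LastMacroEvent}, namely $\tau = \atmacro \cdot \issueact(\apr,\atr) \cdot \beta \cdot (\apr',\atr') \cdot \storeact(\apr',\atr) \cdot \stmacro$ with $(\issueact(\apr,\atr),(\apr',\atr')) \in \hbo$ and $((\apr',\atr'),\storeact(\apr',\atr)) \in \hbo^{1}$, and then refine the shape of $\beta$ by induction on $|\beta|$, invoking Lemma~\ref{lemma:Reordering} to eliminate or commute events that are not pinned into place by $\hbo^{1}$-dependencies, and Lemma~\ref{lemma:StoresSuffix} to keep the suffix $\stmacro$ made only of stores.

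First, I would fix the shape of the final edge in the cycle. Since $(\apr',\atr')$ and $\storeact(\apr',\atr)$ are distinct transactions whose events both live on $\apr'$, the only $\hbo^{1}$-relations compatible with \ccvt{} are $\sto(y)$ and $\cfo(y)$ for some $y$. A pure $\sto(y)$ edge $\atr' \to \atr$ at $\apr'$ combined with the other direction of the cycle would, by Lemma~\ref{lemma:CcvProperty}, contradict the existence of the \ccvt{} arbitration order; consequently this last edge is forced to be $\cfo(y)$, supplying the variable $y$ appearing in both patterns. I would then push $\storeact(\apr,\atr)$ to sit immediately after $\issueact(\apr,\atr)$ using Lemma~\ref{lemma:Reordering} — justified because $\atr$ is the first delayed transaction, so nothing in $\beta$ is $\hbo^{1}$-forced between them on process $\apr$.

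Second, I split on whether $\beta$ contains an issue event in the $\viso$-future of $\issueact(\apr,\atr)$. If there is none, I recover Pattern~1: every $a \in \beta$ must satisfy $(a,(\apr',\atr'))\in\viso$, for otherwise Lemma~\ref{lemma:Reordering} lets me commute $a$ past $(\apr',\atr')$ and then drop it, strictly reducing the number of delays and contradicting minimality; and $(\issueact(\apr,\atr),a)\in\hbo\setminus\viso$ by the splitting hypothesis. If some issue in $\beta$ is $\viso$-after $\issueact(\apr,\atr)$, I take $\issueact(\apr_{1},\atr_{1})$ to be the \emph{last} such issue, producing the decomposition $\beta=\beta_{1}\cdot\issueact(\apr_{1},\atr_{1})\cdot\beta_{2}$ of Pattern~2; clause~(a) holds by choice. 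For clause~(c), every $a\in\beta_{2}$ must satisfy $(\issueact(\apr_{1},\atr_{1}),a)\in\hbo\setminus\viso$ by maximality of $\atr_{1}$, and $(a,(\apr',\atr'))\in\hbo$ because otherwise Lemma~\ref{lemma:Reordering} would allow $a$ to be dropped, again contradicting minimality. Clause~(b) follows by a sub-induction on $|\beta_{2}|$ that tracks the last $\hbo^{1}$-link closing the cycle.

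The main obstacle will be clause~(b) of Pattern~2: establishing that the edge out of $\issueact(\apr_{1},\atr_{1})$ closing the cycle is always of the form $\cfo(x)\cup(\sametro;\sto(x))$ for some $x\neq y$, with $(a,(\apr',\atr'))\in\hbo{}?$ connecting it to $(\apr',\atr')$. Here I must rule out alternative ways of closing the cycle: a chain that proceeds only through $\sto$-edges on another process would violate Lemma~\ref{lemma:CcvProperty} once combined with the $\cfo(y)$ edge into $\storeact(\apr',\atr)$; a $\rfo$- or $\po$-only closure would feed the $\viso$-future of $\issueact(\apr_{1},\atr_{1})$ back into $(\apr',\atr')$, contradicting the maximality of $\atr_{1}$ or allowing removal of $\issueact(\apr_{1},\atr_{1})$ itself. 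The constraint $x \neq y$ then falls out because identifying $x$ with $y$ would force a same-variable $\sto$-disagreement between $\apr$ and $\apr'$, again forbidden by Lemma~\ref{lemma:CcvProperty}. These minimality-plus-\ccvt{}-arbitration arguments are the delicate part; the rest of the proof is a careful bookkeeping of Lemma~\ref{lemma:Reordering} applications.
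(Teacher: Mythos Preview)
Your scaffolding (start from Lemma~\ref{lemma:LastMacroEvent}, induct on $|\beta|$, pin down the final edge) is the paper's, but the case split you propose does not line up with Patterns~1 and~2, and your argument for $x\neq y$ uses the wrong tool.

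The dichotomy between Patterns~1 and~2 is \emph{not} governed by whether $\beta$ contains an issue event $\viso$-after $\issueact(\apr,\atr)$; it is governed by whether the variable $x$ on the outgoing $\hbo^{1}$-edge from $\issueact(\apr_{1},\atr_{1})$ (which is $\issueact(\apr,\atr)$ itself when your $\sigma$ is empty) coincides with the variable $y$ on the final $\cfo(y)$ edge. When $\sigma$ is empty, the first edge out of $\issueact(\apr,\atr)$ is necessarily $\cfo(x)$ for some $x$; if $x=y$ then both $\atr$ and the target transaction write $y$, which is what manufactures the $\sto(y)$ edge required by Pattern~1(a) after pushing $\storeact(\apr,\atr)$ forward. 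If $x\neq y$ you fall into Pattern~2 with $\atr_{1}=\atr$ and $\beta_{1}=\epsilon$. Your write-up asserts ``recover Pattern~1'' in the empty-$\sigma$ case without ever producing the $\sto(y)$ edge of clause~1(a), and never explains how Pattern~2 can arise with $\atr_{1}=\atr$. In the paper, the induction hypothesis is actually invoked in the sub-case $x=y$ with $a\neq(\apr',\atr')$: one removes $\atr$ entirely (nothing in $\beta$ is $\viso$-after it), obtains a shorter violation with the intermediate transaction $\atr_{2}$ playing the role of $\atr$, and recurses.

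Your argument that $x\neq y$ in Pattern~2 via Lemma~\ref{lemma:CcvProperty} does not go through: having $(\issueact(\apr_{1},\atr_{1}),a)\in\cfo(y)\cup(\sametro;\sto(y))$ together with $((\apr',\atr'),\storeact(\apr',\atr))\in\cfo(y)$ does not create opposing $\sto(y)$ edges at two distinct processes, which is the only thing Lemma~\ref{lemma:CcvProperty} forbids. The paper's argument is pure minimality: if $x=y$, then $\atr$ itself writes $y$, so there is already a direct $(\sametro;\sto(y))$ edge from $\issueact(\apr,\atr)$ to $a$; one can then delete $\atr_{1}$ (nothing in $\beta_{2}\cdot(\apr',\atr')$ causally depends on it) and obtain a violation with strictly fewer delays, since $\storeact(\apr',\atr)$ is no longer delayed past $\issueact(\apr_{1},\atr_{1})$. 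That contradiction, not arbitration consistency, is what forces $x\neq y$. Similarly, the exclusion of $\sto(y)$ as the final edge is immediate from timestamp monotonicity under \ccvt{} ($\atr$ is issued before $\atr'$, hence $\atr<\atr'$), not from Lemma~\ref{lemma:CcvProperty}.
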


\begin{proof}

Let $\tau = \atmacro  \cdot \issueact(\apr,\atr) \cdot \beta  \cdot (\apr',\atr')\cdot \storeact(\apr',\atr)\cdot\stmacro$ be a minimal violation under \ccvt{} (cf. Lemma~\ref{lemma:LastMacroEvent}). We  prove by induction on the size of $\beta$ that there exists a minimal violation trace $\tau'$ that satisfies ($1$) or ($2$) and  $\tau'$ is obtained from $\tau$.  By the definition of the happens-before $((\apr',\atr'),\storeact(\apr', \atr))\in \hbo^{1}$ implies that $((\apr',\atr'), \storeact(\apr', \atr))\in \cfo\cup\sto$. Since $\atr'$ was issued after $\atr$ in $\tau$, then based on the total order of timestamps under \ccvt{}, we cannot have $((\apr',\atr'), \storeact(\apr', \atr))\in \sto$. 
Then, there must exist $y$ s.t. $((\apr',\atr'), \storeact(\apr', \atr))\in \cfo(y)$. 

\noindent
\textbf{Base case:} 
$\length{\beta} =0$. Since $(\issueact(\apr,\atr),(\apr',\atr'))\in \hbo$ then from the definition of the happens-before the only possible relation is $(\issueact(\apr,\atr),(\apr',\atr'))\in \cfo$. Thus, there must exist $x$ s.t. $(\issueact(\apr,\atr),(\apr',\atr')) \in \cfo(x)$. If $x=y$ then both $\atr$ and $\atr'$ write to $x$. Thus, by reordering the store event $\storeact(\apr,\atr)\in \stmacro$ to occur just after the corresponding issue event we get $\tau' = \atmacro  \cdot \issueact(\apr,\atr) \cdot \storeact(\apr,\atr) \cdot (\apr',\atr')\cdot \storeact(\apr',\atr)\cdot\stmacro'$ is also a minimal violation where 
$(\storeact(\apr,\atr),(\apr',\atr'))\in \sto(x)$ (since $\atr$ was issued before $\atr'$ and both write to $x$) and $((\apr',\atr'), \storeact(\apr', \atr))\in \cfo(x)$. $\tau'$ satisfies the first case of the lemma. If $x\neq y$ then $\tau = \atmacro  \cdot \issueact(\apr,\atr) \cdot (\apr',\atr')\cdot \storeact(\apr',\atr)\cdot\stmacro$ where there exist $x$ and $y$ s.t. $x\neq y$, $(\issueact(\apr,\atr),(\apr',\atr')) \in \cfo(x)$, and $((\apr',\atr'), \storeact(\apr', \atr))\in \cfo(y)$ satisfies the second case of the lemma where $\atr$ and $\atr_1$ coincide and $a$ corresponds to $(\apr',\atr')$.

\noindent
\textbf{Induction step:} We assume the induction hypothesis holds for $\length{\beta} \leq m$.  Let $\sigma=\{c \in \beta\ |\ (\issueact(\apr,\atr),c) \in \viso\}$, we will consider the following three possible cases: 

First, assume that $\sigma$ is empty. Since $(\issueact(\apr,\atr),(\apr',\atr'))\in \hbo$ then there must exist $a \in  \beta \cdot (\apr',\atr')$ s.t. $(\issueact(\apr,\atr),a) \in \hbo^{1}$ and $(a, (\apr',\atr')) \in \hbo?$. $\sigma$ is empty implies that $\beta$ does not contain events that are related to $\issueact(\apr,\atr)$ through $\viso$ (which includes $\po \cup \rfo \cup \sametro$), therefore, $(\issueact(\apr,\atr),a) \in \sto\cup\cfo$. It is impossible to have $(\issueact(\apr,\atr),a) \in \sto$ since $\issueact(\apr,\atr)$ does not contain writes. Thus, there must exist $x$ s.t. $(\issueact(\apr,\atr),a) \in \cfo(x)$. If $x=y$ then both the transaction of the event $a$, denoted $\atr_2$, and $\atr$ write to $x$. We consider the two cases of $(a, (\apr',\atr')) \in \hbo?$: i) $a= (\apr',\atr')$ (i.e., $\atr_2 = \atr'$), and ii) $(a, (\apr',\atr')) \in \hbo$. Assume $a=(\apr',\atr')$ then by reordering the store event $\storeact(\apr,\atr)\in \tau$ to occur just after the corresponding issue event (since the events in $\beta$ are not causally related to $\issueact(\apr,\atr)$) we get $\tau' = \atmacro  \cdot \issueact(\apr,\atr) \cdot \storeact(\apr,\atr) \cdot\beta \cdot(\apr',\atr')\cdot \storeact(\apr',\atr)\cdot\stmacro'$ is also a minimal violation where $(\storeact(\apr,\atr),(\apr',\atr'))\in \sto(x)$ (since $\atr$ was issued before $\atr'$ and both write to $x$) and $((\apr',\atr'), \storeact(\apr', \atr))\in \cfo(x)$. In $\tau'$ we remove all events in $\beta$ that are not causally ordered before $(\apr',\atr')$ since they do not contribute to the happens-before cycle. We obtain a new violation trace that satisfies the first case of the lemma. Assume now that $(a, (\apr',\atr')) \in \hbo$. This implies that $\issueact(\apr_2,\atr_2)\in \beta$ happens-before $(\apr',\atr')$ (since $a$ is an event $\atr_2$). 
Since both $\atr_2$ and $\atr$ write to $x$ and $\atr$ occurs before $\atr_2$ in $\tau$ then from the definition of store and conflict relations $((\apr',\atr'), \storeact(\apr', \atr))\in \cfo(x)$ implies that $((\apr',\atr'), \storeact(\apr', \atr_2))\in \cfo(x)$. Also, since in $\beta$ we do not have events that are causally related to $\issueact(\apr,\atr)$ then let $\tau'$ be the trace resulting from removing all events of $\atr$ in $\tau$:  $\tau' = \atmacro  \cdot \issueact(\apr_2,\atr_2) \cdot \beta' \cdot (\apr',\atr')\cdot \storeact(\apr',\atr_2)\cdot\stmacro'$ where $\tau'$ is a subsequence of $\tau$ and $\beta'$ is a subsequence of $\beta$. $\tau'$ is a minimal violation as well since it was obtained from $\tau$ by just removing events and $(\issueact(\apr_2,\atr_2), (\apr',\atr')) \in \hbo$ and $((\apr',\atr'), \storeact(\apr', \atr_2))\in \cfo(x)$. Since $\length{\beta'} \leq m$ then we can apply the induction hypothesis on $\tau'$. If $x \neq y$ we get that in $\tau$, $(\issueact(\apr,\atr),a) \in \cfo(x)$ and $((\apr',\atr'), \storeact(\apr', \atr))\in \cfo(y)$ which satisfies the second case of the lemma. 

Second, assume that $\sigma$ is not empty and all the elements of $\sigma$ are store events. 
Since $\atr$ is the first delayed transaction in $\tau$ then all stores in $\sigma$ are stores of $\atr$. 
Then, following the same analogy as before there must exist $x$ and an event $a \in  \beta \cdot (\apr',\atr')$ that is not a store event of $\atr$ s.t. $(\issueact(\apr,\atr),a) \in (\sametro;\sto(x))\cup\cfo(x)$ and $(a, (\apr',\atr')) \in \hbo?$. 
Similar to before we consider the two cases $x=y$ and $x \neq y$ and apply the induction hypothesis in the first case.

Third, assume that $\sigma$ is not empty and $\issueact(\apr_{1},\atr_1)$ is the last issue event in $\sigma$, i.e., $\beta = \beta_{1}  \cdot \issueact(\apr_{1},\atr_1) \cdot  \beta_{2}$ where all the events in $\beta_{2}$ are either stores of transactions that are causally related to $\issueact(\apr,\atr)$ (we can reorder these stores to be part of $\stmacro$ except the store $\storeact(\apr_{1},\atr_1)$) or other events that are not causally related to $\issueact(\apr,\atr)$. We also have that $\issueact(\apr,\atr)$ is causally ordered before $\issueact(\apr_{1},\atr_1)$. 
Since $(\issueact(\apr,\atr),(\apr',\atr'))\in \hbo$ then $(\issueact(\apr_1,\atr_1),(\apr',\atr'))\in \hbo$, otherwise, we remove  $\issueact(\apr_{1},\atr_1)$ and all related store events from $\tau$ and the resulting trace is still a violation and it has less delays since $\issueact(\apr,\atr)$ was not delayed after $\issueact(\apr_{1},\atr_1)$ in the trace. Thus, $(\issueact(\apr_1,\atr_1),(\apr',\atr'))\in \hbo$.  Similar to before we obtain that there exist $x$ and an event $a \in \beta_{2} \cdot (\apr',\atr')$ that is not a store event of $\atr_1$ s.t. $(\issueact(\apr_1,\atr_1),a) \in (\sametro;\sto(x))\cup\cfo(x)$ and $(a, (\apr',\atr')) \in \hbo?$. If $x=y$ then both the transaction of the event $a$, denoted $\atr_2$, and $\atr$ write to $x$. Thus, $(\issueact(\apr,\atr),a) \in \sametro;\sto(x)$. Then, since the events in $\beta_{2} \cdot (\apr',\atr')$ do not causally depend on $\issueact(\apr_1,\atr_1)$ then we can remove the events of $\atr_1$ and obtain $\tau'$ where $(\issueact(\apr,\atr),a) \in \sametro;\sto(x)$, $(a, (\apr',\atr')) \in \hbo?$, and $((\apr',\atr'), \storeact(\apr', \atr))\in \cfo(y)$ where $\atr$ was not delayed after $\issueact(\apr_{1},\atr_1)$ in the trace, which means that $\tau'$ has less delays than $\tau$ (a contradiction to $\tau$ being a minimal violation). Therefore, we must have $x \neq y$ s.t. $(\issueact(\apr_1,\atr_1),a) \in (\sametro;\sto(x))\cup\cfo(x)$ and $(a, (\apr',\atr')) \in \hbo?$ and $((\apr',\atr'), \storeact(\apr', \atr))\in \cfo(y)$ which satisfies the second case of the lemma.
\end{proof}

We use $\traceccvone$ and $\traceccvtwo$ to denote the class of minimal violations that satisfy the first and second case in Lemma~\ref{lemma:CcvMinForm}, respectively. The following lemma shows that we can always obtain a minimal violation trace in either $\traceccvone$ or $\traceccvtwo$ where $\beta$ and $\beta_2$ contain no delayed transactions, respectively.  
We distinguish two cases in the proof: i) a minimal violation in $\traceccvtwo$ where $\atr$ and $\atr_1$ are distinct transactions, and ii) a minimal violation in $\traceccvone$ or in $\traceccvtwo$ where $\atr$ and $\atr_1$ coincide. 
In the first case, we show that if it were to have a delayed transaction in $\beta_2$, then it is possible to remove some transaction from $\tau$ that is causally dependent on the first delayed transaction in $\tau$, and obtain a new violation with a smaller number of delays (which contradicts the minimality assumption). 
The second case is proved by induction on the size of $\beta$ (note that if $\atr$ and $\atr_1$ coincide, then $\beta = \beta_2$) where the base case is trivial (i.e., $\beta= \epsilon$), and in the induction step, we show that if it were to have a delayed transaction in $\beta$ then we can remove one of the delayed transactions in the trace and obtain another violation with the same number of delays as the original violation and for which we can apply the induction hypothesis. 



\begin{lem}\label{lemma:CcvMinViol}
Let $\tau$ be a minimal violation in $\traceccvone$ or $\traceccvtwo$. Then, there exist a violation $\tau_1$ in $\traceccvone$ where $\beta$ contains no delayed transactions or a violation $\tau_2$ in $\traceccvtwo$ where $\beta_2$ contains no delayed transactions. 
\end{lem}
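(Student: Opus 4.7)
The plan is to split by the structure of the minimal violation and, in each case, remove a well-chosen delayed transaction until the ``middle zone'' ($\beta$ for $\traceccvone$, or $\beta_2$ for $\traceccvtwo$) contains none. I would handle two cases: (i) $\tau \in \traceccvtwo$ with $\atr \neq \atr_1$, and (ii) $\tau \in \traceccvone$ or $\tau \in \traceccvtwo$ with $\atr = \atr_1$ (in which case $\beta = \beta_2$).

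For case (i), suppose toward a contradiction that some delayed transaction $\atr_3$ is issued in $\beta_2$. By Lemma~\ref{lemma:CcvMinForm}(c) applied to $\traceccvtwo$, every event $a\in\beta_2$ satisfies $(\issueact(\apr_1,\atr_1),a)\in\hbo\setminus\viso$; combined with the causality chain from $\issueact(\apr,\atr)$ to $\issueact(\apr_1,\atr_1)$ (Lemma~\ref{lemma:CcvMinForm}), no event of $\beta_2\cdot(\apr',\atr')\cdot\storeact(\apr',\atr)\cdot\stmacro$ depends (via $\viso$) on $\atr_1$. I would therefore delete $\issueact(\apr_1,\atr_1)$ together with all of its $\storeact$ events from $\tau$; the resulting sequence is still a valid \ccvt{} trace (causal delivery and transaction isolation are preserved because nothing downstream read from $\atr_1$ or was programmed after it on $\apr_1$ past the cycle), and its transactional happens-before still contains a cycle witnessed by $\atr$, $\atr_3$, and the tail conflict on $y$. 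However, all the delayed transactions on the chain between $\atr$ and $\atr_1$ (including $\atr$ itself) no longer need to be delayed past $\issueact(\apr_1,\atr_1)$, so $\#(\cdot)$ strictly decreases, contradicting minimality. Hence $\beta_2$ has no delayed transactions in this case.

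For case (ii), I would proceed by induction on $|\beta|$. The base case $\beta=\epsilon$ is immediate. For the step, assume $\beta$ contains an issue event $\issueact(\apr_3,\atr_3)$ of a delayed transaction. Using Lemma~\ref{lemma:CcvMinForm}(b) (for $\traceccvone$) the events of $\beta$ are causally before $(\apr',\atr')$ via $\viso$ without passing through $\atr$'s stores, so one can remove $\issueact(\apr_3,\atr_3)$ together with its $\storeact$ events, obtaining a sequence $\tau''$ which is still a valid \ccvt{} trace. Its transactional happens-before cycle is preserved because it already lives on the backbone $\issueact(\apr,\atr)\to\cdots\to(\apr',\atr')\to\storeact(\apr',\atr)$, whose witnesses in $\beta$ can be chosen so as to avoid $\atr_3$ (otherwise $\atr_3$ itself would close a shorter cycle, yielding a smaller violation and contradicting minimality directly). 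Then $\tau''$ is again a minimal violation (its number of delays is no larger than $\#(\tau)$, and cannot be smaller by minimality), its middle zone is strictly shorter, and by Lemma~\ref{lemma:CcvMinForm} it lies in $\traceccvone$ or in $\traceccvtwo$ with $\atr=\atr_1$, so the induction hypothesis applies.

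The main obstacle will be case (i): carefully verifying that deleting $\atr_1$ (and, in the inductive step of case (ii), deleting an interior delayed transaction) really produces a sequence that is (a) still realisable under the \ccvt{} semantics, i.e., respects causal delivery, transaction isolation and the timestamp-induced store order of Lemma~\ref{lemma:CcvProperty}; (b) still a robustness violation, i.e., still has a cyclic transactional happens-before; and (c) has strictly fewer delays. Part (a) is where the $\hbo\setminus\viso$ constraint on $\beta_2$ from Lemma~\ref{lemma:CcvMinForm} does the real work, guaranteeing that no read in the retained suffix was served by the removed transaction; part (b) reduces to exhibiting the replacement cycle through $\atr_3$; and part (c) follows from the observation that the chain from $\atr$ to $\atr_1$ collapses once $\atr_1$ is removed.
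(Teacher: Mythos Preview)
Your case split and the treatment of case (i) are close to the paper's: both you and the paper observe that nothing in $\beta_2$ is causally after $\atr_1$ (by the $\hbo\setminus\viso$ constraint), remove $\atr_1$, and derive a violation with strictly fewer delays. However, your description of the surviving cycle is wrong. You claim it is ``witnessed by $\atr$, $\atr_3$, and the tail conflict on $y$'', i.e., you seem to reuse the path $\atr\to\cdots\to\atr'\to\atr$. But in $\tau$ that path factors through $\atr_1$ (the $\viso$-link from $\atr$ into $\beta_2$ goes via $\issueact(\apr_1,\atr_1)$), so once $\atr_1$ is deleted there is no reason it persists. The correct argument, which the paper uses, is that $\atr_3$ being \emph{delayed} already yields a happens-before cycle entirely inside $\beta_2$ (its issue happens-before one of its own stores through intermediate events), and this cycle does not touch $\atr_1$ at all. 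The paper in fact truncates the trace at $\storeact(q',\atr_3)$ \emph{before} removing $\atr_1$, which makes this explicit.

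Your case (ii) has a genuine gap. You propose to remove the delayed transaction $\atr_3$ from $\beta$ and claim the result is still a valid \ccvt{} trace. The constraint you cite from Lemma~\ref{lemma:CcvMinForm} only says that events in $\beta$ are not causally \emph{after $\atr$}; it says nothing about causal dependencies \emph{among} events of $\beta$. In particular, later transactions in $\beta$ (and $(\apr',\atr')$ itself, which is $\viso$-after every event of $\beta$) may read from $\atr_3$ or be $\po$-after it, so deleting $\atr_3$ can change their reads and break both trace validity and the backbone cycle. Your parenthetical escape (``otherwise $\atr_3$ itself would close a shorter cycle'') does not address validity. The paper avoids this by removing the \emph{other} end: it deletes $\atr$ (safe precisely because of the $\hbo\setminus\viso$ constraint, which guarantees nothing in $\beta$ causally depends on $\atr$), truncates after the delayed store of the first delayed transaction $\atr_0$ in $\beta$, and obtains a violation of the same delay count whose first delayed transaction is now $\atr_0$ and whose middle segment is strictly shorter, then applies induction (or case (i) if the resulting violation now has distinct $\atr_0\neq\atr_{01}$). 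You should restructure case (ii) along these lines.
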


\begin{proof}
We consider two cases: i) $\tau$ in $\traceccvtwo$ where $\atr_1$ and $\atr$ are two distinct transactions, ii) $\tau$ in  $\traceccvone$ or $\tau$ in  $\traceccvtwo$ where $\atr_1$ and $\atr$ coincide. We prove the first case by contradiction and the second case by induction on the size $\beta$ (we abused terminology here and considered $\beta_2=\beta$ since $\beta_1=\epsilon$ in the second case). 


First case: let $\tau=\atmacro\cdot \issueact(\apr,\atr) \cdot\beta_{1} \cdot \issueact(\apr_{1},\atr_1) \cdot \beta_2 \cdot (\apr',\atr') \cdot \storeact(\apr',\atr)\cdot\stmacro$ and suppose by contradiction that $\beta_2$ contains a delayed transaction $\atr_0$ issued by a process $q \neq p$. W.l.o.g., we assume that the delayed store events of $\atr_0$ occur in $\beta_2$. Thus, $\beta_2 = \beta_{21} \cdot \issueact(q,\atr_0)  \cdot\beta_{22} \cdot \storeact(q',\atr_0)\cdot\beta_{23}$ and $\tau=\atmacro\cdot  \issueact(\apr,\atr) \cdot\beta_{1} \cdot \issueact(\apr_{1},\atr_1) \cdot \beta_{21} \cdot \issueact(q,\atr_0)  \cdot\beta_{22} \cdot \storeact(q',\atr_0)\cdot\beta_{23}\cdot  (\apr',\atr') \cdot\storeact(\apr',\atr)\cdot\stmacro$. 
In $\tau$, $\issueact(q,\atr_0)$ happens-before $\storeact(q',\atr_0)$ through $\beta_{22}$. 
Hence, we deduce that we can get a robustness violation when the event $\storeact(q',\atr_0)$ is executed, thus we can remove all issued transactions from $\beta_{23}\cdot  (\apr',\atr')$ except stores of already issued transactions and we obtain:
$\tau'=\atmacro\cdot  \issueact(\apr,\atr) \cdot\beta_{1} \cdot \issueact(\apr_{1},\atr_1) \cdot \beta_{21} \cdot \issueact(q,\atr_0)  \cdot\beta_{22} \cdot \storeact(q',\atr_0)\cdot\beta_{23}' \cdot\storeact(\apr',\atr)\cdot\stmacro$ which is a minimal violation because $\issueact(q,\atr_0)$ happens-before $\storeact(q',\atr_0)$ through $\beta_{22}$ and its number of delays is less or equal to the one of $\tau$. We know that in $\beta_{21} \cdot \issueact(q,\atr_0)  \cdot\beta_{22} \cdot \storeact(q',\atr_0)\cdot\beta_{23}' \cdot\storeact(\apr',\atr)\cdot\stmacro$ there are no transactions from the process $\apr_1$ or that see the effect of transactions from $\apr_1$  (because of the $\hbo\setminus\viso$ relation between events $\beta_2$ and $\issueact(\apr_1,\atr_1)$). Therefore, $\issueact(\apr_1,\atr_1)$ is the last issued transaction from $\apr_1$ and we do not have any transaction in $\tau'$ that depends on it. Thus, we can remove $\issueact(\apr_1,\atr_1)$ and we obtain the following trace:
$\tau'' =\atmacro\cdot  \issueact(\apr,\atr) \cdot\beta_{1} \cdot  \beta_{21} \cdot \issueact(q,\atr_0)  \cdot\beta_{22} \cdot \storeact(q',\atr_0)\cdot\beta_{23}' \cdot\storeact(\apr',\atr)\cdot\stmacro'$, which is a robustness violation because $\issueact(q,\atr_0)$ happens-before $\storeact(q',\atr_0)$ through $\beta_{22}$. $\tau''$ has less delays than $\tau'$ ($\storeact(\apr',\atr)$ was not delayed after $\issueact(\apr_1,\atr_1)$ which was removed), which is a contradiction to the fact that  $\tau$ is a minimal violation.

Second case: let $\tau=\atmacro\cdot \issueact(\apr,\atr) \cdot\beta \cdot (\apr',\atr') \cdot \storeact(\apr',\atr)\cdot\stmacro$. We show by induction that we can construct either $\tau_1$ in $\traceccvone$ where $\beta$ of $\tau_1$ contains no delayed transactions or $\tau_2$ in $\traceccvtwo$ where $\beta_2$ of $\tau_2$ contains no delayed transactions. 

\noindent
\textbf{Base case:} 
$\length{\beta} =0$ is trivial. 

\noindent
\textbf{Induction step:} We assume that the induction hypothesis holds for $\length{\beta} \leq m$. Let $\atr_0$ be the first delayed transaction in $\beta$. Similar to before, we assume w.l.o.g. that the delayed store events of $\atr_0$ occurs in $\beta$. Thus, $\beta = \beta_{01} \cdot \issueact(q,\atr_0)  \cdot\beta_{02} \cdot \storeact(q',\atr_0)\cdot\beta_{03}$ and 
$\tau=\atmacro\cdot  \issueact(\apr,\atr) \cdot \beta_{01} \cdot \issueact(q,\atr_0)  \cdot\beta_{02} \cdot \storeact(q',\atr_0)\cdot\beta_{03}\cdot (\apr',\atr')\cdot  \storeact(\apr',\atr)\cdot\stmacro$ where $\issueact(q,\atr_0)$ happens-before $\storeact(q',\atr_0)$ through $\beta_{02}$. 
Using the same arguments as before, we can remove the event $\issueact(\apr,\atr)$, its related stores in $\tau$, and all issued transactions in $\beta_{03}\cdot  (\apr',\atr')$. We obtain: $\tau'=\atmacro' \cdot \issueact(q,\atr_0) \cdot\beta_{02} \cdot \storeact(q',\atr_0)\cdot\stmacro'$ where $\atmacro' = \atmacro\cdot \beta_{01}$ and $\issueact(q,\atr_0)$ happens-before $\storeact(q',\atr_0)$ through $\beta_{02}$. $\tau'$ is a robustness violation, and it has the same number of delays as $\tau$.  We now consider two possible case of $\tau'$: i) $\tau'$ is in $\traceccvtwo$ where $\atr_0$ and $\atr_{01}$, the last delayed transaction causally dependent on $\issueact(q,\atr_0)$ in $\tau'$, are two distinct transactions, or ii) $\tau$ in $\traceccvone$ or $\tau$ in $\traceccvtwo$ where $\atr_0$ and $\atr_{01}$ coincide. From the first part of the proof, it is guaranteed that in the first case  there are no delayed transactions after $\atr_{01}$. For the second case, we use the induction hypothesis since  $\length{\beta_{02}} \leq m$ ($\beta_{02}$ is a strict subsequence of $\beta$). 
\end{proof}

We have now showed all the necessary characterizations for minimal violations that fall under the first pattern (i.e., $\traceccvone$). In the rest of this section, we focus on minimal violations that fall under the second pattern (i.e., $\traceccvtwo$). In particular, we look at minimal violations in $\traceccvtwo$ where $\atr$ and $\atr_1$ are distinct transactions. 
In the following lemma, we show that for these minimal violations the issue events of delayed transactions in $\issueact(\apr,\atr) \cdot\beta_1\cdot \issueact(\apr_{1},\atr_1)$ constitute a causality chain. 
Our proof can be decomposed to two parts. In the first part, we show that we cannot have an issue event of a delayed transaction in $\beta_{1}\cdot \issueact(\apr_{1},\atr_1)$ that is not causally dependent on $\issueact(\apr,\atr)$. We prove this by showing that if this were possible then we can remove a transaction that is causally dependent on one of the two delayed transactions and obtain a new violation trace with less delays than the original violation (which contradicts the minimality assumption). For the second part, we show that for a given minimal violation, we can construct a happens-before equivalent trace where for every two successive issue events of delayed transactions in $\issueact(\apr,\atr) \cdot\beta_1\cdot \issueact(\apr_{1},\atr_1)$, the transaction in the first issue is delivered to the process executing the second issue before this event happens.  

\begin{lem}\label{lemma:CCvcausalitychain}
Let $\tau=\atmacro\cdot \issueact(\apr,\atr) \cdot\beta_1\cdot \issueact(\apr_{1},\atr_1) \cdot\beta_2\cdot(\apr',\atr')\cdot \storeact(\apr',\atr)\cdot\stmacro$ be a minimal violation in $\traceccvtwo$ s.t $\atr \neq \atr_1$ and $\beta_2$ contains no delayed transactions (cf. Lemma \ref{lemma:CcvMinViol}). Then, there exists a violation $\tau' = \atmacro\cdot \issueact(\apr,\atr) \cdot\beta_1'\cdot \issueact(\apr_{1},\atr_1) \cdot\beta_2\cdot (\apr',\atr')\cdot \storeact(\apr',\atr)\cdot\stmacro'$ obtained from $\tau$ where $\beta_1'\cdot\stmacro'$ is a subsequence of $\beta_1\cdot\stmacro$ and the sequence of issue events of delayed transactions forms a causality chain that starts with $\issueact(\apr,\atr)$ and ends at $\issueact(\apr_1,\atr_1)$. 
\end{lem}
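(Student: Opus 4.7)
The plan has two main steps. \textbf{Step 1} establishes that every delayed issue event in $\beta_1$ is $\viso$-ordered after $\issueact(\apr,\atr)$. \textbf{Step 2} realises the second clause of the causality-chain definition by reordering delivery events.

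For Step 1 I argue by contradiction. Suppose some delayed $\issueact(\apr_2,\atr_2)$ in $\beta_1$ satisfies $(\issueact(\apr,\atr),\issueact(\apr_2,\atr_2)) \notin \viso$. Because $\atr_2$ occurs after $\atr$ in $\tau$, $\atr_2$ must be concurrent with $\atr$; hence no causal path from $\atr$ to $\atr_1$ or to $\atr'$ can traverse $\atr_2$. I then delete from $\tau$ the entire transaction $\atr_2$ together with every transaction that transitively causally depends on $\atr_2$. All deleted transactions are concurrent with $\atr$ and external to the happens-before cycle $\issueact(\apr,\atr) \rightarrow_{\hbo} \cdots \rightarrow_{\hbo} \storeact(\apr',\atr)$, so the resulting trace $\tilde\tau$ is still a valid \ccvt{} trace (erasing whole transactions preserves both causal delivery and transaction isolation) whose transactional happens-before remains cyclic. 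Moreover $\#(\tilde\tau) < \#(\tau)$: the delay contributions of $\atr_2$ and its dependents vanish, while for every surviving delayed transaction the distance $d_{\tilde\tau}$ does not exceed $d_\tau$ because only events not $\viso$-related to that transaction's issue were deleted. This contradicts the minimality of $\tau$.

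For Step 2 I list the delayed issue events inside $\issueact(\apr,\atr)\cdot \beta_1 \cdot \issueact(\apr_1,\atr_1)$ in their textual order as $\event_0 = \issueact(\apr,\atr),\event_1,\dots,\event_n = \issueact(\apr_1,\atr_1)$. Clause (1) of the causality-chain definition now follows from Step 1. For clause (2), fix a consecutive pair $\event_i = \issueact(\apr^{(i)},\atr^{(i)}), \event_{i+1} = \issueact(\apr^{(i+1)},\atr^{(i+1)})$. If $\apr^{(i)} = \apr^{(i+1)}$, the second bullet of causal delivery already places $\storeact(\apr^{(i+1)},\atr^{(i)})$ before $\event_{i+1}$. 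Otherwise $\storeact(\apr^{(i+1)},\atr^{(i)})$ occurs somewhere in $\beta_1$ after $\event_{i+1}$ or in $\stmacro$; I apply Lemma~\ref{lemma:Reordering} repeatedly to slide it leftward to the position just before $\event_{i+1}$, noting that no intervening event is $\hbo^1$-related to this store in a way that would block the swap, because the store participates only in $\sto$- and $\cfo$-edges on variables written by $\atr^{(i)}$, and by Step 1 all causal predecessors of those writes are already delivered to $\apr^{(i+1)}$. Iterating over all consecutive pairs yields the required $\tau'$, in which $\beta_1' \cdot \stmacro'$ is obtained by permuting a subsequence of $\beta_1 \cdot \stmacro$.

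The main obstacle is the bookkeeping in Step 1: one must carefully identify the cascade of transactions that causally depend on $\atr_2$ and verify simultaneously that the cycle through $\atr$, $\atr_1$, and $\atr'$ is untouched, that no surviving delayed transaction's distance measure grows, and that the resulting event sequence still satisfies causal delivery and transaction isolation. The delicate point is arguing that no causal path to $\atr_1$ or $\atr'$ reuses any effect of $\atr_2$, which ultimately reduces to the concurrency of $\atr_2$ and $\atr$ together with the acyclicity of $\viso$.
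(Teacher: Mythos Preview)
Your two-step plan mirrors the paper's structure, but Step~1 contains a genuine gap.

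\textbf{The flaw in Step~1.} You delete $\atr_2$ together with every transaction that transitively causally depends on $\atr_2$, and claim that these transactions are ``external to the happens-before cycle''. This is not justified. The fact that no $\viso$-path from $\atr$ to $\atr_1$ \emph{traverses} $\atr_2$ does not prevent $\atr_1$ from being $\viso$-after $\atr_2$ along a \emph{separate} path: $\atr_1$ may read from both $\atr$ and $\atr_2$. In that case your deletion removes $\atr_1$ itself and the cycle collapses. The same issue arises for transactions in $\beta_2$: property~(2c) of Lemma~\ref{lemma:CcvMinForm} only says they are not $\viso$-after $\atr_1$ (hence not $\viso$-after $\atr$), but they may well be $\viso$-after $\atr_2$, so your cascade deletion can sever the $\hbo$-path from $\issueact(\apr_1,\atr_1)$ to $(\apr',\atr')$.

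The paper avoids this entirely by \emph{not} deleting the concurrent delayed transaction $\atr_0$. Instead it does a case split on where the delayed store of $\atr_0$ lands (in $\beta_1$, in $\beta_2$, or in $\stmacro$). In the first two cases the happens-before-through path for $\atr_0$ already yields a cycle before $(\apr',\atr')$; one then drops $(\apr',\atr')$ and argues that $\atr$ need no longer be delayed (under a w.l.o.g.\ assumption that nothing in between reads a value $\atr$ overwrites), contradicting minimality. In the third case a further subcase analysis on the relative order of two conflicting accesses inside $\atr'$ either lets one un-delay $\atr$ or, only when no event in $\beta_{12}\cdot\issueact(\apr_1,\atr_1)\cdot\beta_2$ causally depends on $\atr_0$, removes $\atr_0$ alone. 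The point is that $\atr_0$ is removed only after it has been \emph{established} that nothing on the cycle depends on it; your argument assumes this upfront.

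\textbf{A smaller issue in Step~2.} Your direct claim that ``no intervening event is $\hbo^1$-related to this store in a way that would block the swap'' is not fully argued: $\storeact(\apr^{(i+1)},\atr^{(i)})$ can have $\event_{i+1}$ as an $\hbo^1$-predecessor, e.g.\ via $\cfo$ if $\atr^{(i+1)}$ reads a value that $\atr^{(i)}$ overwrites at $\apr^{(i+1)}$. The paper's argument here is cleaner: if $(\event_{i+1},\storeact(\apr^{(i+1)},\atr^{(i)}))\in\hbo$, then $\atr^{(i)}$ and $\atr^{(i+1)}$ already form a transactional happens-before cycle inside $\beta_1$, so $\atmacro\cdot(\apr,\atr)\cdot\beta_1\cdot\issueact(\apr_1,\atr_1)\cdot\stmacro$ is a violation with fewer delays, contradicting minimality. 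You should use this minimality-based contradiction rather than trying to rule out the $\hbo^1$-edge directly.
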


\begin{proof}
First, we show that we can obtain a violation $\tau'$ from $\tau$ where all delayed transactions in $\beta_1'\cdot \issueact(\apr_{1},\atr_1)$ are causally dependent on $\issueact(\apr,\atr)$. From the definition of $\atr_1$ in Lemma \ref{lemma:CcvMinForm}, we already have that 
$(\issueact(\apr,\atr),\issueact(\apr_{1},\atr_1)) \in \viso$. In the proof, we assume w.l.o.g that in $\beta_1\cdot \issueact(\apr_{1},\atr_1)\cdot\beta_2$ there is no event $a$ that reads a value that $\atr$ overwrites, otherwise, we can shortcut the trace by removing $(\apr',\atr')$ and instead using the conflict relation between $a$ and a store event of $\atr$ to build the transactional happens-before cycle. 
Now, assume that $\beta_{1}$ contains a delayed transaction $\atr_0$ from another process $q \neq p$ that is not causally dependent on $\issueact(\apr,\atr)$. We show that we either can obtain a contradiction or we can remove all events of $\atr_0$ and obtain a new violation trace $\tau'$ that has the same number of delays as $\tau$. We have three possible cases based on whether the delayed store event $\storeact(q',\atr_0)$ of $\atr_0$ occurs in $\beta_1$, $\beta_2$ or $\stmacro$. Hence, we get that $\tau$ can be one of the following:
\begin{enumerate}[label=(\alph*)]
\item $\tau=\atmacro\cdot  \issueact(\apr,\atr) \cdot\beta_{11}\cdot \issueact(q,\atr_0) \cdot \beta_{12} \cdot\storeact(q',\atr_0)\cdot\beta_{13}\cdot \issueact(\apr_{1},\atr_1) \cdot\beta_2\cdot(\apr',\atr')\cdot \storeact(\apr',\atr)\cdot\stmacro$
\item $\tau=\atmacro\cdot  \issueact(\apr,\atr) \cdot\beta_{11}\cdot \issueact(q,\atr_0) \cdot \beta_{12}\cdot \issueact(\apr_{1},\atr_1) \cdot\beta_{21}\cdot \storeact(q',\atr_0)\cdot\beta_{22}\cdot(\apr',\atr')\cdot \storeact(\apr',\atr)\cdot\stmacro$
\item $\tau=\atmacro\cdot  \issueact(\apr,\atr) \cdot\beta_{11}\cdot \issueact(q,\atr_0) \cdot \beta_{12}\cdot \issueact(\apr_{1},\atr_1) \cdot\beta_2\cdot(\apr',\atr')\cdot \storeact(\apr',\atr)\cdot\stmacro^{1}\cdot \storeact(q',\atr_0)\cdot \stmacro^{2}$
\end{enumerate}
In case $(a)$ (resp., $(b)$) we can notice that since $\issueact(q,\atr_0)$ happens-before $\storeact(q',\atr_0)$ through $\beta_{12}$ (resp., $\beta_{12}\cdot \issueact(\apr_{1},\atr_1) \cdot\beta_{21}$) then after executing $\storeact(q',\atr_0)$ we obtain a cycle in the transactional happens-before. Thus, we can remove $(\apr',\atr')$ from both traces and still obtain a robustness violation. Let $\tau'$ be the resulting trace. $\tau'$ has the same number of delays as $\tau$. In $\tau'$, we do not have events that read values that $\atr$ overwrites. Therefore, we do not need to delay the transaction $\atr$ to ensure that that the trace is a violation. Let $\tau''$ be the resulting trace where the transaction $\atr$ executes atomically. In $\tau''$, the transaction $\atr$ was not delayed after 
the issue event of $\atr_1$ which means that $\tau''$ has less delays than $\tau$. This contradicts the fact that $\tau$ is a minimal violation.

Case $(c)$:  we assume that $\storeact(q',\atr_0)$ happens-before after $(\apr',\atr')$, otherwise, we can reorder it before $(\apr',\atr')$ and get case $(b)$. Since $\stmacro^{1}$ contains only store events, then by the happens-before definition, $\storeact(q',\atr_0)$ must be a store event executed by $\apr'$ which means that $q' = \apr'$.  
Let $e_1$ and $e_2$ be the read/write actions $\atr'$ that are the source of the conflict between $(\apr',\atr')$ and $\storeact(\apr',\atr)$ and the happens-before between $(\apr',\atr')$ and $\storeact(\apr',\atr_0)$, respectively. 
Similar to before we assume w.l.o.g that there is no event in $\beta_{12}\cdot \issueact(\apr_{1},\atr_1) \cdot\beta_2$ that reads a value that $\atr_0$ overwrites. We consider the two cases: i) $e_2$ occurs before $e_1$ in $\atr'$ or the two coincide, and ii) $e_1$ occurs before $e_2$ in $\atr'$. In the first case we can obtain a new violation where we do not delay the transaction $\atr$ which will not affect the action $e_2$ that is the source of the happens-before between $(\apr',\atr')$ and $\storeact(\apr',\atr_0)$ (since $e_1$ occurs after $e_2$ then it cannot disable it). 
The new trace $\tau'$ is a violation since the store event $\storeact(\apr',\atr_0)$ is delayed. Also, since the store event $\storeact(\apr',\atr)$ of $\atr$ was not delayed after $\issueact(\apr_{1},\atr_1)$ then $\tau'$ has less delays than $\tau$, which contradicts the fact that $\tau$ is a minimal violation. In the second case, if in $\beta_{12}\cdot \issueact(\apr_{1},\atr_1) \cdot\beta_2$ we do not have any event that is causally dependent on $\issueact(q,\atr_0)$ other than the store events of $\atr_0$, then we can remove all events of $\atr_0$ from $\tau$ without affecting the happens before between $\issueact(\apr,\atr)$ and $\storeact(\apr',\atr)$ through  $\beta_{12}\cdot \issueact(\apr_{1},\atr_1) \cdot\beta_2 \cdot (\apr',\atr')$. Let $\tau' = \atmacro\cdot \issueact(\apr,\atr) \cdot\beta_1'\cdot \issueact(\apr_{1},\atr_1) \cdot\beta_2\cdot (\apr',\atr')\cdot \storeact(\apr',\atr)\cdot\stmacro'$ be the resulting trace which has the same number of delays as $\tau$. Otherwise, if in $\beta_{12}\cdot \issueact(\apr_{1},\atr_1) \cdot\beta_2$ we have an event $a$ that is causally dependent on $\issueact(q,\atr_0)$ that is not a store event of $\atr_0$, then the new trace $\atr'$ resulting from not delaying $\atr_0$ is a violation. This is because the store event $\storeact(\apr',\atr)$ is delayed. $\tau'$ has less delays than $\tau$ since the store event $\storeact(\apr',\atr_0)$ of $\atr_0$ was not delayed after $a$. This contradicts the fact that $\tau$ is a minimal violation.

Now, we show that for every two successive issue events of delayed transactions in $\tau'$, we can deliver the first to the process of the second before the second is issued. 
Let $\event_{i} = \issueact(\apr_{i},\atr_{i})$ and $\event_{j} = \issueact(\apr_{j},\atr_{j})$ be two successive issue events of delayed transactions in $\beta_1\cdot \issueact(\apr_{1},\atr_{1})$ s.t. either $(\event_{i},\event_{j})\in\hbo$ or $(\event_{i},\storeact(\apr_{i},\atr_{j}))\in\hbo$. 
Note that the only case where the store event $\storeact(\apr_{j},\atr_{i})$ cannot be moved to occur before $\event_{j}$ in 
$\beta_1$ is when the two events are related by a happens-before relation, i.e., $(\event_{j},\storeact(\apr_{j},\atr_{i}))\in\hbo$. In this case, we get that the transactions $\atr_i$ and $\atr_j$ are involved in a cycle in the transactional happens-before in $\tau'$ which means that 
$\tau'' = \atmacro\cdot (\apr,\atr) \cdot\beta_1\cdot \issueact(\apr_{1},\atr_{1}) \cdot \stmacro$ is a violation which has less delays than $\tau$ (since $\atr$ was not delayed after $\issueact(\apr_{1},\atr_{1})$). Therefore, the trace $\tau''$ where the store event $\storeact(\apr_{j},\atr_{i})$ occurs before $\event_{2}$ is happens-before equivalent to $\tau'$.  Similarly, when the two events are concurrent, the trace $\tau''$ where the store event $\storeact(\apr_{j},\atr_{i})$ occurs before $\event_{j}$ is happens-before equivalent to $\tau'$. Thus, given the sequence of issue events $\event_1\cdot\event_2\cdot\ldots\event_n$ of delayed transactions in $\tau'$ s.t. $\event_{1} = \issueact(\apr,\atr)$ and $\event_{n} = \issueact(\apr_{1},\atr_{1})$, the trace $\tau''$ where  
 for every $1\leq k\leq n-1$ s.t. $\event_{k} = \issueact(\apr_{k},\atr_{k})$ and $\event_{k+1} = \issueact(\apr_{k+1},\atr_{k+1})$, we have the store event $\storeact(\apr_{k+1},\atr_{k})$ occurs before the issue event $\event_{k+1}$ is happens-before equivalent to $\tau'$.  Also, in $\tau''$ for every $2\leq k\leq n$, we have that $\event_{k}$ is causally dependent on $\event_{1} = \issueact(\apr,\atr)$. Thus, in $\tau''$ the sequence of issue events $\event_1\cdot\event_2\cdot\ldots\event_n$ of delayed transactions forms a causality chain. 
\end{proof}

Next, we show that for minimal violations in $\traceccvtwo$ where $\atr$ and $\atr_1$ are distinct transactions, all delayed transactions in $\issueact(\apr,\atr) \cdot \beta_1$ do not access the shared variable $\anaddr$ that starts the happens-before path in $\beta_2$ (Lemma~\ref{lemma:CcvMinForm}) between $\issueact(\apr_1,\atr_1)$ and $(\apr',\atr')$. If this were not the case, then the events of $\atr_1$ can be removed and we still guarantee a happens-before path to $\storeact(\apr',\atr)$ (starting in the delayed transaction accessing the variable $\anaddr$), thus obtaining a new robustness violation trace with less delays (since $\storeact(\apr',\atr)$ was not delayed after $\issueact(\apr_1,\atr_1)$), which contradicts the minimality assumption.


\begin{lem}\label{lemma:NoAccessX}
Let $\tau$ be a minimal violation in $\traceccvtwo$ where $\atr_1$ and $\atr$ are two distinct transactions. Then, all the delayed transactions in $\issueact(\apr,\atr) \cdot \beta_1$ do not access the variable $\anaddr$ from Lemma~\ref{lemma:CcvMinForm}.
\end{lem}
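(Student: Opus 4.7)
The plan is to proceed by contradiction against the minimality of $\tau$. I would assume that some delayed transaction $\atr_2$ in $\issueact(\apr,\atr)\cdot\beta_1$ (possibly $\atr_2=\atr$) accesses $\anaddr$. By Lemma~\ref{lemma:CCvcausalitychain}, $\issueact(\apr_2,\atr_2)$ sits in the causality chain of delayed transactions from $\issueact(\apr,\atr)$ to $\issueact(\apr_{1},\atr_1)$, so in particular $(\issueact(\apr_2,\atr_2),\issueact(\apr_{1},\atr_1))\in \viso$. By Lemma~\ref{lemma:CcvMinForm}, I fix $a\in \beta_2\cdot(\apr',\atr')$ with $(\issueact(\apr_{1},\atr_1),a)\in \cfo(x)\cup(\sametro;\sto(x))$, $(a,(\apr',\atr'))\in \hbo{}?$, and $((\apr',\atr'),\storeact(\apr',\atr))\in \cfo(y)$, where $x\neq y$.

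The key step is to show that $\atr_2$ can take over the role of $\atr_1$, i.e.\ that some event of $\atr_2$ is still related to $a$ by $\cfo(x)\cup(\sametro;\sto(x))$. I would split on whether $\atr_2$ writes or only reads $\anaddr$. When $\atr_2$ writes $\anaddr$, causal delivery ensures that the write of $\atr_2$ on $\anaddr$ is applied at the process executing $a$ before $a$ occurs, and the \ccvt{} timestamp-based resolution of concurrent writes to $\anaddr$ forces either $\sto(x)$ (when $a$ writes $\anaddr$) or $\cfo(x)$ (when $a$ reads $\anaddr$) from an event of $\atr_2$ to $a$. When $\atr_2$ only reads $\anaddr$, the value $\atr_2$ reads, or the write that produced it, is the same value that the writes on $\anaddr$ making $\atr_{1}$ conflict with $a$ overwrite (or are $\sto(x)$-after), so $(\issueact(\apr_2,\atr_2),a)\in\cfo(x)$.

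Once this edge is in hand, I would form $\tau'$ by deleting $\issueact(\apr_{1},\atr_1)$ together with all of its store events from $\tau$. Transaction isolation and causal delivery are preserved (removing a whole transaction cannot break them), and the transactional happens-before of $\tau'$ still contains a cycle, closed through $\issueact(\apr,\atr)\to_{\viso}\issueact(\apr_2,\atr_2)\to a\to_{\hbo{}?}(\apr',\atr')\to_{\cfo(y)}\storeact(\apr',\atr)\to_{\sametro}\issueact(\apr,\atr)$. Hence $\tau'$ is still a robustness violation. Since the store events of $\atr$ are no longer delayed past $\issueact(\apr_{1},\atr_1)$, we get $\#(\tau')<\#(\tau)$, contradicting the minimality of $\tau$.

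The hard part will be the read case for $\atr_2$: establishing the $\cfo(x)$ edge from $\atr_2$ to $a$ requires a careful argument about which value $\atr_2$ observes, how the subsequent writes on $\anaddr$ in the causality chain (in particular those performed by $\atr_1$) interact with it, and how the \ccvt{} timestamp order on stores to $\anaddr$ determines whether $a$ overwrites precisely the value that $\atr_2$ read. A secondary care is to check that removing $\atr_1$ does not destroy any edge in $\beta_2\cdot(\apr',\atr')$ that feeds into $\storeact(\apr',\atr)$, which should follow from the constraint $(\issueact(\apr_{1},\atr_1),b)\in \hbo\setminus\viso$ imposed by Lemma~\ref{lemma:CcvMinForm} on every $b\in\beta_2$.
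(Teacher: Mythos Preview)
Your plan matches the paper's proof: argue by contradiction, transfer the $\cfo(x)\cup(\sametro;\sto(x))$ edge from $\atr_1$ to the earlier delayed transaction $\atr_2$ that accesses $x$, delete $\issueact(\apr_1,\atr_1)$ together with its stores (safe because nothing in $\beta_2\cdot(\apr',\atr')$ is causally dependent on $\atr_1$), and observe that $\storeact(\apr',\atr)$ is no longer delayed past $\issueact(\apr_1,\atr_1)$, so $\#(\tau')<\#(\tau)$. The paper's own argument is in fact terser than yours; it simply asserts that the edge to $a$ transfers to $\atr_2$ and moves on.

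One refinement: in the write case your appeal to causal delivery is not the right lever, since $\atr_2$ is delayed and need not be delivered to the process of $a$ before $a$ executes. What you actually need is a timestamp comparison. Because the causality chain guarantees $\atr_2$ is delivered to $\apr_1$ before $\atr_1$ begins, the \ccvt{} rule for $\beginact$ forces $\atr_2<\atr_1$ whenever $\atr_2$ writes (its write updates $\timest$ at $\apr_1$). Combined with $\atr_1<\atr_a$ (from the $\sametro;\sto(x)$ case) or $\atr_w<\atr_a$ and $\atr_2\le\atr_w$ (from the $\cfo(x)$ case, again using that $\atr_2$'s write, or the write $\atr_2$ reads from, is visible at $\apr_1$), you obtain $\atr_2<\atr_a$, hence the $\sto(x)$ or $\cfo(x)$ edge points the right way. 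The same timestamp reasoning also handles your ``hard'' read case cleanly: whatever $\atr_2$ read from has timestamp at most $\atr_w$ (by causal delivery its writer is visible at $\apr_1$), hence strictly below $\atr_a$.
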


\begin{proof}
Suppose by contradiction that we have an issue event $\issueact(\apr_{2},\atr_2)$ in $\issueact(\apr,\atr) \cdot \beta_1$ (i.e., $\issueact(\apr,\atr) \cdot \beta_1 = \issueact(\apr,\atr) \cdot\beta_{11}\cdot \issueact(\apr_{2},\atr_2)\cdot\beta_{12}$)  which accesses the shared variable $\anaddr$ with either a read or a write instruction. Then, since there exists an event $a\in \beta_2$ s.t. $(\issueact(\apr_{1},\atr_1),a) \in \sto(\anaddr) \cup (\sametro;\cfo(\anaddr))$, we have that $(\issueact(\apr_{2},\atr_2),a) \in \sto(\anaddr) \cup (\sametro;\cfo(\anaddr))$. Moreover, because $\beta_2\cdot (\apr',\atr')\cdot \storeact(\apr',\atr)\cdot\stmacro$ does not contain any transaction that causally depends on $\issueact(\apr_{1},\atr_1)$, we get that $\issueact(\apr_{1},\atr_1)$ is the issue event by the process $\apr_{1}$ and we can remove it together with all the related stores in $\stmacro$ to obtain: $\tau'=\atmacro\cdot \issueact(\apr,\atr) \cdot\beta_{11}\cdot \issueact(\apr_{2},\atr_2)\cdot\beta_{12}\cdot\beta_2\cdot (\apr',\atr')\cdot \storeact(\apr',\atr)\cdot\stmacro'$ which is a violation because $\issueact(\apr,\atr)$ happens-before $\storeact(\apr',\atr)$ through $\beta_{11}\cdot \issueact(\apr_{2},\atr_2)\cdot\beta_{12}\cdot\beta_2\cdot (\apr',\atr')$. Furthermore, $\tau'$ has less delays than $\tau$ since $\storeact(\apr',\atr)$ was not delayed after  $\issueact(\apr_{1},\atr_1)$. This contradicts the fact that $\tau$ is a minimal violation. 
\end{proof}

\begin{figure}[t]
  \begin{minipage}[c]{\textwidth}
  \begin{subfigure}{\linewidth}
  \scalebox{0.71}
  {
  \begin{tikzpicture}
  \node[text width=1cm, anchor=west, left,font=\LARGE] at (0,0) {$\tau_{\scct{}1:}$};
   \node[shape=rectangle ,draw=none,font=\large] (A0) at (0.5,0)  [] {};
   \node[shape=rectangle ,draw=none,font=\Large] (A1) at (1.5,0.3)  [] {$\atmacro$};
   \node[] (A) at (1.5,0) [] {};
   \node[shape=rectangle ,draw=none,font=\Large] (B1) at (2.8,0.5)  [] {$\issueact(\apr,\atr)$};
   \node[shape=rectangle ,draw=none,font=\huge] (B) at (2.8,0)  [] {$\circ$};
   \node[shape=rectangle ,draw=none,font=\Large] (B2) at (5,0.5)  [] {$\storeact(\apr,\atr)$};
   \node[shape=rectangle ,draw=none,font=\huge] (B3) at (5,0)  [] {$\circ$};
   \node[shape=rectangle ,draw=none,font=\Large] (C1) at (8,0.3)  [] {$\beta$};
    \node[] (C) at (8,0)  [] {};
    \node[shape=rectangle ,draw=none,font=\Large] (D1) at (10.5,0.5)  [] {$(\apr',\atr')$};
    \node[shape=rectangle ,draw=none,font=\huge] (D) at (10.5,0)  [] {$\circ$};
    \node[shape=rectangle ,draw=none,font=\Large] (E1) at (13,0.5)  [] {$\storeact(\apr',\atr)$};
    \node[shape=rectangle ,draw=none,font=\huge] (E) at (13,0)  [] {$\circ$};
    \node[shape=rectangle ,draw=none,font=\Large] (F1) at (14.5,0.3)  [] {$\stmacro$};
    \node[] (F) at (14.5,0)  [] {};
    \node[shape=rectangle ,draw=none,font=\large] (F0) at (16,0)  [] {};
  
    \begin{scope}[ every edge/.style={draw=black}]
    \path [|-|] (A0) edge []  (B);
    \path [|-|] (B3) edge [] (D);
    \path [->] (B1) edge [bend left=20] node [above,font=\small] {$\hbo\setminus \viso$} node [pos=0.95,above,font=\small] {$\forall$} (C1);
    \path [->] (C1) edge  [bend left=20] node [above,font=\small] {$\viso$} node [pos=0.05,above,font=\small] {$\forall$} (D1);
    \path [->] (B3) edge [bend right=20] node [below,font=\small] {$\sto(x)$} (D);
    \path [->] (D) edge [bend right=30] node [below,font=\small] {$\sto(x)$} (E);
    \path [|-|] (E) edge [] (F0);
    \end{scope}
  \end{tikzpicture}}
  \label{fig:rob0tracecm}
  \end{subfigure}
  \end{minipage}
  \begin{minipage}[c]{\textwidth}
  \begin{subfigure}{\linewidth}
  \scalebox{0.71}
  {\begin{tikzpicture}
    \node[text width=1cm, anchor=west, left,font=\LARGE] at (0,0) {$\tau_{\scct{}2:}$};
    \node[shape=rectangle ,draw=none,font=\large] (A0) at (0.5,0)  [] {};
    \node[shape=rectangle ,draw=none,font=\Large] (A1) at (1.5,0.3)  [] {$\atmacro$};
    \node[] (A) at (1.5,0) [] {};
    \node[shape=rectangle ,draw=none,font=\Large] (B1) at (3,0.5)  [] {$\issueact(\apr,\atr)|_{\neg x}$};
    \node[shape=rectangle ,draw=none,font=\huge] (B) at (3,0)  [] {$\circ$};
    \node[shape=rectangle ,draw=none,font=\Large] (C1) at (5,0.3)  [] {$\beta_1|_{\neg x}$};
    \node[] (C) at (5,0) [] {};
    \node[shape=rectangle ,draw=none,font=\Large] (D1) at (7,0.5)  [] {$\issueact(\apr_{1},\atr_{1})$};
    \node[shape=rectangle ,draw=none,font=\huge] (D) at (7,0)  [] {$\circ$};
    \node[shape=rectangle ,draw=none,font=\Large] (E1) at (11,0.3)  [] {$\beta_2$};
    \node[] (E) at (11,0) [] {};
    \node[shape=rectangle ,draw=none,font=\Large] (H1) at (14,0.5)  [] {$(\apr',\atr')$};
    \node[shape=rectangle ,draw=none,font=\huge] (H) at (14,0)  [] {$\circ$};
    \node[shape=rectangle ,draw=none,font=\Large] (F1) at (17,0.5)  [] {$\storeact(\apr',\atr)$};
    \node[shape=rectangle ,draw=none,font=\huge] (F) at (17,0)  [] {$\circ$};
    \node[shape=rectangle ,draw=none,font=\Large] (G1) at (18.5,0.3)  [] {$\stmacro$};
    \node[] (G) at (18.5,0) [] {};
    \node[shape=rectangle ,draw=none,font=\large] (G0) at (19.5,0)  [] {};   
    \begin{scope}[ every edge/.style={draw=black}]
    \path [|-|] (A0) edge []  (B);
    \path [|-|] (B) edge [] (D);
    \path [->] (D1) edge [bend left=20] node [above,font=\small] {$\hbo\setminus \viso$} node [pos=0.95,above,font=\small] {$\forall$} (E1);
    \path [|-|] (D) edge [] (H);
    \path [->] (B) edge [bend right=20] node [below,font=\small] {$\viso$} (D);
    \path [->] (D) edge [bend right=25] node [below,font=\small] {$\cfo(x)$} node [pos=0.95,below,font=\small] {$\exists$} (E);
    \path [->] (E1) edge [bend left=20] node [above,font=\small] {$\hbo$} node [pos=0.05,above,font=\small] {$\forall$} (H1);
    \path [->] (H) edge [bend right=20] node [below,font=\small] {$\cfo(y\neq x)$}  (F);
    \path [|-|] (F) edge []  (G0);
    \end{scope}
  \end{tikzpicture}}
  \label{fig:rob1tracecm}
  \end{subfigure}
  \end{minipage}
  \caption{Robustness violation patterns under \scct{}. For violation $\tau_{\scct{}1}$, $\atr$ is the only delayed transaction.
  For $\tau_{\scct{}2}$, all delayed transactions are in $\issueact(\apr,\atr)\cdot\beta_1\cdot\issueact(\apr_{1},\atr_{1})$ and they form a causality chain that starts at $\issueact(\apr,\atr)$ and ends at $\issueact(\apr_{1},\atr_{1})$.}
  \label{fig:cmrobtraces}
  \end{figure}

\section{Robustness Violations Under Causal Memory} \label{sec:MVUCM}
The characterization of robustness violations under $\scct{}$ is at some level similar to that of robustness violations under $\ccvt{}$. However, some instance of the violation pattern under \ccvt{} is not possible under \scct{} and \scct{} admits some class of violations that is not possible under \ccvt{}. This reflects the fact that these consistency models are incomparable in general. 

The following theorem gives the characterization of minimal violations under \scct{} which is pictured in Fig.~\ref{fig:cmrobtraces}.  Roughly, a program is not robust iff it admits a violation that either contains two concurrent transactions that write to the same variable, or it is a restriction of the pattern $\tau_{\ccvt{}2}$ admitted by \ccvt{} where the last delayed transaction is related only by $\cfo$ to future transactions. The first pattern is not admitted by $\ccvt{}$ because the writes to each variable are executed according to the timestamp order (\scct{} does not satisfy the  $\ccvt{}$ property stated in Lemma \ref{lemma:CcvProperty}).

\begin{thm} \label{theorem:CmMinViol}
A program $\aprog$ is not robust under \scct{} iff there exists a minimal violation in $\tracesconf(\aprog)_{\scct{}}$ of one of the following forms:
\begin{enumerate}[topsep=5pt]
\item  $\tau_{\scct{}1}=\atmacro\cdot \issueact(\apr,\atr) \cdot \storeact(\apr,\atr) \cdot \beta \cdot (\apr',\atr') \cdot \storeact(\apr',\atr)\cdot\stmacro$, where:
\begin{enumerate}[label=(\alph*),topsep=5pt]
\item $\issueact(\apr,\atr)$ is the issue of the first and only delayed transaction;
\item  $\exists\ y.$ s.t. $(\storeact(\apr,\atr),(\apr',\atr')) \in \sto(y)$ and $((\apr',\atr'),\storeact(\apr',\atr)) \in  \sto(y)$ (Lemma \ref{lemma:CmMinForm});
\item $\forall\ a \in \beta.\ (\issueact(\apr,\atr),a) \in \hbo\setminus\viso \mbox{ and } (a,(\apr',\atr')) \in \viso$ (Lemma \ref{lemma:CmMinForm}).
\end{enumerate}
\item $\tau_{\scct{}2}=\atmacro\cdot \issueact(\apr,\atr) \cdot\beta_1\cdot \issueact(\apr_{1},\atr_1) \cdot\beta_2\cdot (\apr',\atr')\cdot  \storeact(\apr',\atr)\cdot\stmacro$, where
\begin{enumerate}[label=(\alph*),topsep=5pt]
\item $\issueact(\apr,\atr)$ and $\issueact(\apr_{1},\atr_1)$ are the issues of the first and last delayed transactions (Lemma~\ref{lemma:CmMinForm});
\item the issues of all delayed transactions are in $\beta_1$ are included in a causality chain that starts with $\issueact(\apr,\atr)$ and ends with $\issueact(\apr_{1},\atr_1)$;
\item for every $a\in\beta_2$, we have that $(\issueact(\apr_1,\atr_1),a) \in \hbo\setminus\viso$ and $(a,(\apr',\atr')) \in \hbo$ (Lemma \ref{lemma:CmMinForm});
\item there exist $a \in \beta_2\cdot (\apr',\atr')$, $x$, and $y$ s.t. $x \neq y$, $(\issueact(\apr_1,\atr_1),a) \in \cfo(x)$, $(a,(\apr',\atr'))\in \hbo{}?$, and $((\apr',\atr'),\storeact(\apr',\atr)) \in \cfo(y)$ (Lemma \ref{lemma:CmMinForm});
\item  all delayed transactions in $\issueact(\apr,\atr) \cdot \beta_1$ do not access the variable $\anaddr$.
\end{enumerate}
\end{enumerate}
\end{thm}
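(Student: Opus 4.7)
The plan is to mirror the proof architecture of Theorem~\ref{theorem:CcvMinViol}, since the two characterizations have the same overall shape. I would establish $\scct{}$ analogues of the four refinement lemmas used for $\ccvt{}$: a MinForm lemma splitting any minimal violation into the two stated patterns, a MinViol lemma eliminating delayed transactions from the tail region, a causality-chain lemma organizing the delayed transactions, and a NoAccessX lemma bounding the variables they access. The starting point is again Lemma~\ref{lemma:LastMacroEvent}, which furnishes a minimal violation in the shape $\tau = \atmacro \cdot \issueact(\apr,\atr) \cdot \beta \cdot (\apr',\atr') \cdot \storeact(\apr',\atr) \cdot \stmacro$ with $\atr$ the first delayed transaction, $(\issueact(\apr,\atr),(\apr',\atr'))\in \hbo$ and $((\apr',\atr'),\storeact(\apr',\atr))\in \hbo^{1}$. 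The right-to-left direction of the theorem is immediate because each pattern exhibits an explicit cycle in the transactional happens-before, so by Theorem~\ref{th:acyclicity} every such trace is non-serializable.

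The main bookkeeping lives in the MinForm analogue. Proceeding by induction on $|\beta|$ as in Lemma~\ref{lemma:CcvMinForm} and splitting on whether $\beta$ contains any event causally after $\issueact(\apr,\atr)$, the terminal edge $((\apr',\atr'),\storeact(\apr',\atr))\in \hbo^{1}$ must be either $\cfo(y)$ or $\sto(y)$. Under $\ccvt{}$ the $\sto(y)$ case was excluded by the timestamp totality of Lemma~\ref{lemma:CcvProperty}, but under $\scct{}$ it is realizable: the event $(\apr',\atr')$ is then itself a store event of $\atr'$ writing $y$, so both $\atr$ and $\atr'$ write $y$. Because $\atr$ is the first delayed transaction, one may pull $\storeact(\apr,\atr)$ adjacent to $\issueact(\apr,\atr)$ without disturbing the happens-before, obtaining pattern $\tau_{\scct{}1}$. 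When the terminal edge is $\cfo(y)$, the analysis replicates that of Lemma~\ref{lemma:CcvMinForm} and produces pattern $\tau_{\scct{}2}$.

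The second genuine divergence is the restriction of the dependency $(\issueact(\apr_1,\atr_1),a)$ in $\tau_{\scct{}2}$ to $\cfo(x)$ alone, dropping the $(\sametro;\sto(x))$ alternative admitted by $\ccvt{}$. Here I would argue, by the same minimization technique used in Lemma~\ref{lemma:NoAccessX}, that any $(\sametro;\sto(x))$ instance exhibits two transactions $\atr_1$ and $\atr_3$ both writing $x$ and reachable through $\sto(x)$ from $\atr_1$'s store; combining the resulting store-order edge with the $\viso$-chain originating at $\issueact(\apr,\atr)$ closes a cycle that already fits pattern $\tau_{\scct{}1}$ after deleting the intermediate $\atr_1$, contradicting minimality. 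The MinViol, causality-chain, and NoAccessX analogues themselves should largely transfer from their $\ccvt{}$ counterparts, since their arguments rely on causal delivery, transaction isolation, and $\hbo$-preserving reordering rather than on the timestamp mechanism specific to $\ccvt{}$.

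The main obstacle will be faithfully re-running the delicate three-case analysis in the causality-chain proof (Lemma~\ref{lemma:CCvcausalitychain}) under $\scct{}$, particularly case (c) in which the unrelated delayed store of some $\atr_0$ lands inside the trailing store sequence $\stmacro$. Under $\ccvt{}$ the argument partially leaned on the ability of a write to be discarded by timestamp; under $\scct{}$ every delivered write is applied, so the substitution extracting a smaller violation must be re-justified purely by reordering of $\hbo$-unrelated events, transaction isolation, and minimality of the number of delays. Completing this re-justification, together with confirming that the ``$e_1$ before $e_2$'' ordering argument inside $\atr'$ still goes through without the timestamp check, is the non-routine core of the proof.
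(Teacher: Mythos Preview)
Your approach is essentially the one the paper takes: reuse the architecture of Theorem~\ref{theorem:CcvMinViol}, prove an $\scct{}$-specific MinForm lemma (this is Lemma~\ref{lemma:CmMinForm}), and observe that the MinViol, causality-chain, and NoAccessX refinements carry over verbatim from Lemmas~\ref{lemma:CcvMinViol}, \ref{lemma:CCvcausalitychain}, and \ref{lemma:NoAccessX}.

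Two remarks. First, your worry about the causality-chain lemma is misplaced: the proof of Lemma~\ref{lemma:CCvcausalitychain} never invokes timestamp discarding. Its case~(c) argument is purely about reordering $\hbo$-unrelated events, removing transactions, and counting delays, so it transfers to $\scct{}$ without change; the paper says exactly this after Lemma~\ref{lemma:CmMinForm}.

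Second, your sketch of how the $(\sametro;\sto(x))$ alternative is eliminated is in the right spirit but misses a subtlety. When the event $a$ with $(\issueact(\apr_1,\atr_1),a)\in\sametro;\sto(x)$ is strictly before $(\apr',\atr')$, one indeed gets a $\sto$-cycle between $\atr_1$ and the transaction of $a$, and can stop delaying $\atr$, contradicting minimality. But when $a=(\apr',\atr')$, deleting $\atr_1$ does not work directly; the paper instead performs a \emph{role reversal}: it re-executes $\atr$ and all transactions causally after it atomically \emph{after} $\issueact(\apr',\atr')\cdot\storeact(\apr',\atr')$, so that $\atr'$ becomes the delayed transaction and the $\sto(x)$-cycle now runs between $\atr'$ and $\atr_1$. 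This yields a violation with strictly fewer delays (both $\atr'$ and $\atr$ shed delay distance), which is the contradiction. Your proposal should account for this sub-case explicitly.
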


\begin{wrapfigure}{r}{0.35\textwidth} 
  \scalebox{0.61}
  {
  \begin{tikzpicture}
  
   \node[shape=rectangle ,draw=none,font=\large] (A) at (0,0)  [] {$\issueact(\apr 1,\atr 1)$};
   \node[shape=rectangle ,draw=none,font=\large] (A1) at (1.9,0)  [] {$\storeact(\apr 1,\atr 1)$};
    \node[shape=rectangle ,draw=none,font=\large] (B) at (4,0)  [] {$(\apr 2,\atr 2)$};
    \node[shape=rectangle ,draw=none,font=\large] (C) at (6,0)  [] {$\storeact(\apr 2,\atr 1)$};
  
    \begin{scope}[ every edge/.style={draw=red,very thick}]
    \path [->] (A1) edge [bend left] node [above,font=\large] {$\sto$} (B);
    \path [->] (B) edge [bend left] node [above,font=\large] {$\sto$} (C);
    \end{scope}
  \end{tikzpicture}}
  \caption{Violation of LU program in Fig.~\ref{fig:rob0}. A $\tau_{\scct{}1}$ violation where $\beta_2=\stmacro=\epsilon$, and $\atr$ and $\atr'$ correspond to $\atr 1$ and $\atr 2$.}
  \label{fig:rob0LUtracecm}
\end{wrapfigure}
The violation pattern $\tau_{\scct{}2}$ is a restriction of the pattern $\tau_{\ccvt{}2}$ under \ccvt{}. For instance, the trace in Fig.\ref{fig:rob1SBtraceccv} is a valid minimal violation of the SB program under \scct{}. The violation pattern $\tau_{\scct{}1}$ implies the existence of a write-write race under \scct{}. Fig.\ref{fig:rob0LUtracecm} shows a minimal violation under \scct{} that corresponds to a write-write race in the LU program. Conversely, if a program $\aprog$ admits a trace $\tau$ which contains a write-write race under \scct{}, then $\aprog$ also admits a trace $\tau'$ where the two transactions $\atr_1$ and $\atr_2$ that caused the write-write race form a cycle in the store order 
(the store events of $\atr_1$ and $\atr_2$ on the two processes $\apr_1$ and $\apr_2$ that issued them can be reordered to occur in opposite orders, i.e., $\storeact(\apr_1,\atr_1)$ before $\storeact(\apr_1,\atr_2)$ and $\storeact(\apr_2,\atr_2)$ before $\storeact(\apr_2,\atr_1)$, which implies that are also in opposite orders w.r.t. the store order).
Thus, $\aprog$ has a trace $\tau'$ with a cycle in the transactional happens-before which means that $\aprog$ is not robust against \scct{}.
Therefore, a program which is robust against \scct{} is also write-write race free under \scct{}.
Since without write-write races, the \scct{} and the \ccvt{} semantics coincide, we get the following the result.

\begin{lem}\label{lemma:CcvCmrobust}
If a program $\aprog$ is robust against \scct{}, then $\aprog$ is robust against \ccvt{}.
\end{lem}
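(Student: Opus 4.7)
The plan is to argue by contrapositive via write-write race freedom, chaining together Theorem~\ref{theorem:wwraces}, Theorem~\ref{theorem:traceswwracesOrig}, and the ``conversely'' argument that appears in the paragraph preceding the lemma. Concretely, assuming $\aprog$ is robust against $\scct{}$, I will first derive that $\aprog$ is write-write race free under $\scct{}$, then transfer this race freedom to $\ccvt{}$, and finally invoke the coincidence of the models for race-free programs to conclude.

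First, I would establish that robustness against $\scct{}$ rules out write-write races under $\scct{}$. Suppose for contradiction that $\aprog$ admits some trace $\tau\in\tracesconf_{\scct{}}(\aprog)$ containing two concurrent transactions $\atr_1$ and $\atr_2$ writing a common variable. Exactly as in the discussion immediately before the lemma (and mirroring the construction used in the left-to-right direction of Theorem~\ref{theorem:traceswwracesOrig}), since $\atr_1$ and $\atr_2$ are concurrent, their issue and store events commute enough to produce a trace $\tau'\in\tracesconf_{\scct{}}(\aprog)$ in which the store events on the two issuing processes $\apr_1$ and $\apr_2$ are ordered oppositely, i.e.\ $(\storeact(\apr_1,\atr_1),\storeact(\apr_1,\atr_2))\in\sto$ and $(\storeact(\apr_2,\atr_2),\storeact(\apr_2,\atr_1))\in\sto$. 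This yields a cycle in the transactional happens-before of $\tau'$, so by Theorem~\ref{th:acyclicity} $\tau'$ is a robustness violation, contradicting robustness against $\scct{}$.

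Second, by Theorem~\ref{theorem:wwraces} (applied with $\textsf{X}=\scct{}$ and $\textsf{Y}\in\{\ccvt{},\wcct{}\}$), write-write race freedom under $\scct{}$ entails write-write race freedom under both $\ccvt{}$ and $\wcct{}$. Theorem~\ref{theorem:traceswwracesOrig} then gives $\executionsconf_{\wcct{}}(\aprog)=\executionsconf_{\ccvt{}}(\aprog)=\executionsconf_{\scct{}}(\aprog)$, and consequently the induced sets of traces are equal as well. Combining this with the hypothesis $\tracesconf_{\scct{}}(\aprog)=\tracesconf_{\serc{}}(\aprog)$ yields $\tracesconf_{\ccvt{}}(\aprog)=\tracesconf_{\serc{}}(\aprog)$, which is exactly robustness of $\aprog$ against $\ccvt{}$.

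The only delicate step is the first one, turning a single $\scct{}$-trace with a write-write race into one exhibiting a store-order cycle; this is essentially the commutation/reordering argument already worked out in the proof of Theorem~\ref{theorem:traceswwracesOrig}, so I would simply reuse it rather than redo the case analysis on the relative positions of the issue and store events of $\atr_1$ and $\atr_2$. Once that step is in hand, the rest is a direct bookkeeping chain through the two earlier theorems and requires no new argument.
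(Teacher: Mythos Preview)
Your proposal is correct and follows essentially the same route as the paper: show that robustness against \scct{} forces write-write race freedom (via the store-order cycle construction), then use the coincidence of the three models for race-free programs to transfer robustness to \ccvt{}. You spell out the intermediate step through Theorem~\ref{theorem:wwraces} more explicitly than the paper does, but the underlying argument is identical.
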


Next, we discuss the proof of Theorem~\ref{theorem:CmMinViol}. The following lemma reveals the two possible minimal violation patterns under causal memory. The characterization of the patterns in this lemma can be refined further using arguments similar to the case of \ccvt{} (see the discussion at the end of this section).

\begin{lem}\label{lemma:CmMinForm}
If $\aprog$ is a program that is not robust under \scct{}, then it must admit a minimal violation $\tau$ that satisfies one of the following:
\begin{enumerate}[topsep=5pt]
  \item $\tau = \atmacro \cdot \issueact(\apr,\atr) \cdot \storeact(\apr,\atr) \cdot \beta \cdot (\apr',\atr') \cdot \storeact(\apr',\atr)  \cdot \stmacro \mbox{ where: }$
\begin{enumerate}[label=(\alph*),topsep=5pt]
\item $\exists\ y.$ s.t. $(\storeact(\apr,\atr),(\apr',\atr')) \in \sto(y)$ and $((\apr',\atr'),\storeact(\apr',\atr)) \in  \sto(y)$;
\item $\forall\ a \in \beta.\ (\issueact(\apr,\atr),a) \in \hbo\setminus\viso \mbox{ and } (a,(\apr',\atr')) \in \viso$. 
\end{enumerate}
\item $\tau = \atmacro \cdot \issueact(\apr,\atr) \cdot \beta_1 \cdot \issueact(\apr_{1},\atr_1)  \cdot \beta_2 \cdot (\apr',\atr')\cdot \storeact(\apr',\atr)  \cdot \stmacro \mbox{ where: }$
\begin{enumerate}[label=(\alph*),topsep=5pt]
\item  $\issueact(\apr_{1},\atr_1)$ is the last issue event from $\{c \in \beta\ |\ (\issueact(\apr,\atr),c) \in \viso\}$ in $\tau$;
\item there exist two variables $x\neq y$, $a$ in $\beta_2\cdot (\apr',\atr')$, and $b=(\apr',\atr')$ such that $(\issueact(\apr_1,\atr_1),a) \in \cfo(x)$, $(b,\storeact(\apr',\atr)) \in  \cfo(y)$, and $(a,b)\in\hbo{}?$;
\item $\forall\ a \in \beta_2.\ (\issueact(\apr_1,\atr_1),a) \in \hbo\setminus\viso \mbox{ and }  (a,(\apr',\atr')) \in \hbo$.
\end{enumerate}
\end{enumerate}
\end{lem}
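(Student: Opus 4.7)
The plan is to follow the strategy of Lemma~\ref{lemma:CcvMinForm}, adapted to \scct{}. By Lemma~\ref{lemma:LastMacroEvent}, every minimal violation under \scct{} can be written as
\[
\tau = \atmacro \cdot \issueact(\apr,\atr) \cdot \beta \cdot (\apr',\atr') \cdot \storeact(\apr',\atr) \cdot \stmacro,
\]
with $(\issueact(\apr,\atr),(\apr',\atr')) \in \hbo$ and $((\apr',\atr'), \storeact(\apr',\atr)) \in \hbo^{1}$. The definition of $\hbo^{1}$ gives that the last relation is either $\cfo(y)$ or $\sto(y)$ for some variable $y$. Under \ccvt{} the $\sto$ alternative was ruled out using the total timestamp order (Lemma~\ref{lemma:CcvProperty}); under \scct{} no such constraint exists, so both possibilities must be treated and they will correspond exactly to the two patterns in the statement.

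In the case $((\apr',\atr'), \storeact(\apr',\atr)) \in \sto(y)$, both $\atr$ and $\atr'$ write $y$, and their store events on the two processes $\apr$ and $\apr'$ appear in opposite orders, yielding a store-order cycle on $y$. Since $\atr$ is the first delayed transaction, no event of $\beta$ is causally after $\storeact(\apr,\atr)$. I would use Lemma~\ref{lemma:Reordering} to extract $\storeact(\apr,\atr)$ from $\stmacro$ and insert it immediately after $\issueact(\apr,\atr)$, preserving the happens-before and giving $(\storeact(\apr,\atr),(\apr',\atr')) \in \sto(y)$. Applying the same reordering lemma to prune $\beta$, retaining only events $a$ with $(\issueact(\apr,\atr),a)\in \hbo\setminus\viso$ and $(a,(\apr',\atr'))\in \viso$, produces case~(1) with the stated constraints.

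In the case $((\apr',\atr'), \storeact(\apr',\atr)) \in \cfo(y)$, I would mirror the induction on $|\beta|$ used in the proof of Lemma~\ref{lemma:CcvMinForm}. The induction tracks the set $\sigma = \{c \in \beta : (\issueact(\apr,\atr),c) \in \viso\}$, picks $\issueact(\apr_1,\atr_1)$ to be the last issue event in $\sigma$, and uses Lemma~\ref{lemma:Reordering} to shift non-causal events around. It produces a witness $a$ and a variable $x$ with $(\issueact(\apr_1,\atr_1),a) \in \cfo(x) \cup (\sametro;\sto(x))$ and $(a,(\apr',\atr'))\in\hbo{}?$. The subcase $x = y$ contradicts minimality by the same deletion argument as in the \ccvt{} proof (either $\atr_1$ or $\atr$ can be dropped while keeping the cycle), leaving $x \neq y$ and the shape required by case~(2).

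The only genuine deviation from the \ccvt{} argument, and the delicate step, is the $(\sametro;\sto(x))$ alternative: Lemma~\ref{lemma:CcvMinForm} keeps it in the statement, while case~(2) here insists on $\cfo(x)$ alone. The plan is to show that whenever the induction would produce a $(\sametro;\sto(x))$ witness, the two writes on $x$ (from $\atr_1$ and from the transaction containing $a$) must be concurrent in $\tau$, so that their store events at the processes issuing them can be reordered via Lemma~\ref{lemma:Reordering} to form a store-order cycle on $x$. This yields a violation of pattern~(1) whose number of delays is no larger than that of $\tau$. This reduction, which relies precisely on \scct{}'s lack of a global arbitration order, is the main obstacle; modulo it, the rest is a mechanical transcription of Lemma~\ref{lemma:CcvMinForm}.
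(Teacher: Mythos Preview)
Your proposal is correct and follows essentially the same route as the paper: start from Lemma~\ref{lemma:LastMacroEvent}, split on whether the final edge is $\sto(y)$ or $\cfo(y)$, and in the $\cfo(y)$ branch replay the case analysis of Lemma~\ref{lemma:CcvMinForm} while additionally eliminating the $(\sametro;\sto(x))$ alternative by exploiting that \scct{} lets concurrent writers induce opposite store orders at their respective processes. Two small corrections relative to your sketch: the $x=y$ subcase does not always ``contradict minimality''---when $\atr_1=\atr$ it directly produces pattern~(1) rather than a deletion argument---and the paper's elimination of $(\sametro;\sto(x))$ when $\atr_1\neq\atr$ splits further on whether $a=(\apr',\atr')$, in the latter case rebuilding a flipped trace in which $\atr'$ becomes the delayed transaction so as to force strictly fewer delays and reach a contradiction.
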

\begin{proof}
The proof will contain many arguments which are similar to those used in the proof of Lemma~\ref{lemma:CcvMinForm}. 
Let $\tau = \atmacro  \cdot \issueact(\apr,\atr) \cdot \beta  \cdot (\apr',\atr')\cdot \storeact(\apr',\atr)\cdot\stmacro$ be a minimal violation under \scct{} (cf. Lemma~\ref{lemma:LastMacroEvent}). We  prove that there exists a minimal violation trace $\tau'$ obtained from $\tau$ that satisfies ($1$) or ($2$). Similar to Lemma~\ref{lemma:CcvMinForm}, we get that there must exist $y$ s.t. $((\apr',\atr'), \storeact(\apr', \atr))\in \cfo(y)\cup\sto(y)$. 



We consider two cases:  i) $((\apr',\atr'), \storeact(\apr', \atr))\in \sto(y)$, and ii)  $((\apr',\atr'), \storeact(\apr', \atr))\in \cfo(y)$.  If $((\apr',\atr'), \storeact(\apr', \atr))\in \sto(y)$, then by reordering the store event $\storeact(\apr,\atr)\in \stmacro$ to occur just after the corresponding issue and removing all events in $\beta$ (and all related stores in $\stmacro$) that are not causally ordered before $(\apr',\atr')$ (since they do not contribute to the transactional happens-before cycle) we obtain a trace $\tau' = \atmacro  \cdot \issueact(\apr,\atr) \cdot \storeact(\apr,\atr) \cdot \beta' \cdot (\apr',\atr')\cdot \storeact(\apr',\atr)\cdot\stmacro'$ that is also a minimal violation and where $(\storeact(\apr,\atr),(\apr',\atr'))\in \sto(y)$ and $((\apr',\atr'), \storeact(\apr', \atr))\in \sto(y)$. The trace $\tau'$ satisfies the first case of the lemma. 

Now assume that $((\apr',\atr'), \storeact(\apr', \atr))\in \cfo(y)$, and let $\sigma=\{c \in \beta\ |\ (\issueact(\apr,\atr),c) \in \viso\}$. We consider the following three cases.

First, assume that $\sigma$ is empty. As in the proof of Lemma~\ref{lemma:CcvMinForm}, we obtain that there exist $a\in\beta\cdot (\apr',\atr')$ and $x$ s.t. $(\issueact(\apr,\atr),a) \in \cfo(x)$ and $(a, (\apr',\atr')) \in \hbo?$. If $x=y$ then both $\atr$ and the transaction $\atr_2$ by a process $\apr_2$ of the event $a$ write to $x$. Similar to before we can reorder the store event $\storeact(\apr,\atr)\in \stmacro$ to occur just after the corresponding issue and remove all issue events in $\beta\cdot (\apr',\atr')$ that occur after the issue event of $\atr_2$ and all their related stores. Also, we remove all events in $\beta$ that are not causally ordered before the issue event of $\atr_2$. We obtain $\tau' = \atmacro  \cdot \issueact(\apr,\atr) \cdot \storeact(\apr,\atr) \cdot \beta' (\apr_2,\atr_2)\cdot \storeact(\apr_2,\atr)\cdot\stmacro'$. In $\tau'$ the events of $\atr_2$ are assembled together, $\storeact(\apr_2,\atr)\in \stmacro$ is reordered to occur just after $(\apr_2,\atr_2)$, and $(\storeact(\apr,\atr),(\apr_2,\atr_2))\in \sto(y)$ and $((\apr_2,\atr_2), \storeact(\apr_2, \atr))\in \sto(y)$. Thus, $\tau'$ is a minimal violation and it satisfies the first case of the lemma. If $x\neq y$ then we get the second case of the lemma. 

Second, assume that $\sigma$ is not empty and all the elements of $\sigma$ are store events. As in the proof of Lemma~\ref{lemma:CcvMinForm}, we obtain that there exist $x$ and an event $a \in  \beta \cdot (\apr',\atr')$ that is not a store event of $\atr$ s.t. $(\issueact(\apr,\atr),a) \in (\sametro;\sto(x))\cup\cfo(x)$ and $(a, (\apr',\atr')) \in \hbo?$. If $(\issueact(\apr,\atr),a) \in (\sametro;\sto(x))$ or $x=y$ then both $\atr$ and the transaction $\atr_2$ by a process $\apr_2$ of the event $a$ write to $x$. Using the same procedure as in the previous paragraph we can obtain $\tau' = \atmacro  \cdot \issueact(\apr,\atr) \cdot \storeact(\apr,\atr) \cdot \beta' \cdot (\apr_2,\atr_2)\cdot \storeact(\apr_2,\atr)\cdot\stmacro'$ that satisfies the first case of the lemma. Similarly, if $(\issueact(\apr,\atr),a) \in \cfo(x)$ and $x \neq y$ then we get the second case of the lemma. 

Third, assume that $\sigma$ is not empty and $\issueact(\apr_{1},\atr_1)$ is the last issue event in $\sigma$, i.e., $\beta = \beta_{1}  \cdot \issueact(\apr_{1},\atr_1) \cdot  \beta_{2} \cdot (\apr',\atr')$. As in the proof of Lemma~\ref{lemma:CcvMinForm}, we obtain that there exist $x$ and an event $a \in  \beta_2 \cdot (\apr',\atr')$ that is not a store event of $\atr_1$ s.t. $(\issueact(\apr_1,\atr_1),a) \in (\sametro;\sto(x))\cup\cfo(x)$ and $(a, (\apr',\atr')) \in \hbo?$. 
If $x=y$ then both $\atr$ and the transaction $\atr_2$ by a process $\apr_2$ of the event $a$ write to $x$. Using the same procedure as before we can obtain a trace $\tau' = \atmacro  \cdot \issueact(\apr,\atr) \cdot \storeact(\apr,\atr) \cdot \beta_1' \cdot \beta_2' \cdot (\apr_2,\atr_2)\cdot \storeact(\apr_2,\atr)\cdot\stmacro'$ that is a minimal violation. $\tau'$ has less delays than $\tau$ since the store of $\atr$ was not delayed after $\issueact(\apr_{1},\atr_1)$. This contradicts the fact that 
$\tau$ is a minimal violation. Assume now that $x\neq y$. We assume w.l.o.g. that all events in $\beta_{2}$ do not read values that any transaction with an issue event in $\issueact(\apr,\atr) \cdot beta_{1}  \cdot \issueact(\apr_{1},\atr_1)$ overwrites. 
If $(\issueact(\apr_1,\atr_1),a) \in (\sametro;\sto(x))$ and $a \neq (\apr',\atr')$ then we can remove all issue events in $\beta_2 \cdot (\apr',\atr')$ that occur after the issue event of $\atr_2$ including $(\apr',\atr')$ and assemble together the events of $\atr_2$. We obtain that $(\storeact(\apr_1,\atr_1),(\apr_2,\atr_2)) \in \sto(x)$ and $((\apr_2,\atr_2),\storeact(\apr_2,\atr_1)) \in \sto(x)$ where we do not need to delay the transaction $\atr$ and obtain $\tau' = \atmacro  \cdot (\apr,\atr) \cdot \beta_1' \cdot \issueact(\apr_{1},\atr_1)\cdot\storeact(\apr_1,\atr_1) \cdot \beta_2' \cdot (\apr_2,\atr_2)\cdot \storeact(\apr_2,\atr_1)\cdot\stmacro'$ that is a violation and has less delays than $\tau$. This contradicts the fact that $\tau$ is a minimal violation. If $(\issueact(\apr_1,\atr_1),a) \in (\sametro;\sto(x))$ and $a = (\apr',\atr')$ (i.e., $\atr' = \atr_2$) then we construct $\tau'$ such that all transactions that have issue events in $\sigma$ and $\atr$ are executed atomically after all the events in $(\beta_1 \setminus \sigma)\cdot \beta_2 \cdot \issueact(\apr',\atr') \cdot \storeact(\apr',\atr')$ are executed first, i.e., $\tau' = \atmacro  \cdot \beta_{11} \cdot \beta_2 \cdot \issueact(\apr',\atr') \cdot \storeact(\apr',\atr') \cdot  (\apr,\atr) \cdot \beta_{12} \cdot  \beta' \cdot (\apr_1,\atr_1)\cdot \storeact(\apr_1,\atr')\cdot\stmacro'$. $\tau'$ is a robustness violation since $(\storeact(\apr',\atr'),(\apr_1,\atr_1)) \in \sto(x)$ and $((\apr_1,\atr_1),\storeact(\apr_1,\atr')) \in \sto(x)$. Also, $\tau'$ has less delays than $\tau$ since $\atr'$ was not delayed after a causally dependent event other than its store events and $\atr$ is no longer delayed after the issue event of $\atr_1$. This contradicts the fact that $\tau$ is a minimal violation. Finally, the only remaining possibility is $(\issueact(\apr_1,\atr_1),a) \in \cfo(x)$ where $x\neq y$ which corresponds to the second case of the lemma.
\end{proof}

We use $\tracecmone$ and $\tracecmtwo$ to denote the class of minimal violations that satisfy the first and second case in Lemma~\ref{lemma:CmMinForm}, respectively. To show that for a non robust program, we can always find a minimal violation in either 
$\tracecmone$ or $\tracecmtwo$ where $\beta$ and $\beta_2$ do not contain delayed transactions we can use the same proof arguments as in Lemma~\ref{lemma:CcvMinViol}. For minimal violations in $\tracecmtwo$ where $\atr$ and $\atr_1$ are distinct transactions, the two properties that issue events of all delayed transactions form a causality chain and that delayed transactions in $\issueact(\apr,\atr) \cdot \beta_1$ do not access the shared variable $\anaddr$ can also be proved in the same manner as in Lemmas~\ref{lemma:CCvcausalitychain} and~\ref{lemma:NoAccessX}, respectively.

\section{Robustness Violations Under Weak Causal Consistency}\label{sec:MVWCC}

If a program is robust against \scct{}, then it must not contain a write-write race under \scct{} (note that this is not true for \ccvt{}). Therefore, by Theorem~\ref{theorem:traceswwracesOrig}, a program which is robust against \scct{} has the same set of traces under both \scct{} and \wcct{}, which implies that it is also robust against \wcct{}. Conversely, since \wcct{} is weaker than \scct{} (i.e., $\tracesconf_{\scct{}}(\aprog)\subseteq \tracesconf_{\wcct{}}(\aprog)$ for any $\aprog$), if a program is robust against \wcct{} then it is robust against \scct{}. Thus, we obtain the following result.

\begin{thm}\label{theorem:CCMinViol}
A program $\aprog$ is robust against \wcct{} iff it is robust against \scct{}.
\end{thm}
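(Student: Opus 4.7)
The plan is to prove both directions separately, leaning entirely on results already established in the paper rather than re-doing any trace-level combinatorial analysis.

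For the easier direction, I would show that robustness against \wcct{} implies robustness against \scct{}. Since every \scct{} execution satisfies the transition rules of \wcct{} (the \wcct{} semantics is strictly more permissive, only weakening how local stores are maintained and how reads resolve against them), we have the inclusion $\tracesconf_{\scct{}}(\aprog)\subseteq \tracesconf_{\wcct{}}(\aprog)$. If $\aprog$ is robust against \wcct{}, then $\tracesconf_{\wcct{}}(\aprog)=\tracesconf_{\serc{}}(\aprog)$, so $\tracesconf_{\scct{}}(\aprog)\subseteq \tracesconf_{\serc{}}(\aprog)$. The reverse inclusion $\tracesconf_{\serc{}}(\aprog)\subseteq \tracesconf_{\scct{}}(\aprog)$ is immediate because every serializable execution (each transaction immediately delivered to all processes) is a legal execution of $[\aprog]_{\scct{}}$ satisfying transaction isolation and causal delivery. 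Hence $\aprog$ is robust against \scct{}.

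For the harder direction, robustness against \scct{} implies robustness against \wcct{}, I would route through write-write race freedom and Theorem~\ref{theorem:traceswwracesOrig}. First I would observe that if $\aprog$ is robust against \scct{}, then $\aprog$ cannot have a write-write race under \scct{}. Indeed, if two concurrent transactions $\atr_1,\atr_2$ issued by processes $\apr_1,\apr_2$ both write to a variable $x$, then using causal delivery together with the fact that \scct{} imposes no arbitration order we can schedule the corresponding store events in opposite orders at the two processes, producing a trace that contains $(\storeact(\apr_1,\atr_1),\storeact(\apr_1,\atr_2)) \in \sto(x)$ and $(\storeact(\apr_2,\atr_2),\storeact(\apr_2,\atr_1)) \in \sto(x)$. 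This is precisely the shape of a $\tau_{\scct{}1}$-style violation: its transactional happens-before contains a cycle between $\atr_1$ and $\atr_2$, contradicting robustness via Theorem~\ref{th:acyclicity}. This argument is sketched in the paragraph following Theorem~\ref{theorem:CmMinViol}; I would simply formalize it here.

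Given this, $\aprog$ is write-write race free under \scct{}, and by Theorem~\ref{theorem:wwraces} it is write-write race free under \wcct{} and \ccvt{} as well. Theorem~\ref{theorem:traceswwracesOrig} then gives $\tracesconf_{\wcct{}}(\aprog) = \tracesconf_{\scct{}}(\aprog)$. Combining with the assumed robustness, $\tracesconf_{\wcct{}}(\aprog) = \tracesconf_{\scct{}}(\aprog) = \tracesconf_{\serc{}}(\aprog)$, so $\aprog$ is robust against \wcct{}.

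The only subtle step, and hence the main obstacle, is justifying that robustness against \scct{} precludes \scct{} write-write races. The argument above relies on the ability to reschedule the $\storeact$ events of two concurrent writers to appear in opposite orders at the two sites; this uses causal delivery plus the absence of a total arbitration order in \scct{}, and corresponds exactly to the intuition behind the $\tau_{\scct{}1}$ pattern. Everything else is an assembly of previously established inclusions and equivalences.
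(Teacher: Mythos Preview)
Your proposal is correct and follows essentially the same route as the paper: the easy direction uses the inclusion $\tracesconf_{\scct{}}(\aprog)\subseteq \tracesconf_{\wcct{}}(\aprog)$, and the harder direction argues that \scct{}-robustness forces write-write race freedom (via the $\tau_{\scct{}1}$-style cycle), then invokes the equivalence of the three models for race-free programs. The only difference is that you explicitly thread in Theorem~\ref{theorem:wwraces} to transport race-freedom from \scct{} to the other models before applying Theorem~\ref{theorem:traceswwracesOrig}, whereas the paper leaves this implicit; this is a cosmetic elaboration, not a different argument.
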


\section{Reduction to SC Reachability}\label{sec:Instr}

We describe a reduction of robustness checking to a reachability problem in a program executing under the serializability semantics, which can be simulated on top of standard sequential consistency (SC) by considering that each transaction is an atomic section (guarded by a fixed global lock).
Essentially, given a program $\aprog$ and a semantics $\textsf{X} \in \{\ccvt{},\ \scct{},\ \wcct{}\}$, we define an instrumentation of $\aprog$ such that $\aprog$ is not robust against $\textsf{X}$ iff the instrumentation reaches an error state under the serializability semantics. The instrumentation uses auxiliary variables in order to simulate the robustness violations (in particular, the delayed transactions) satisfying the patterns given in Fig.~\ref{fig:ccvrobtraces} and Fig.~\ref{fig:cmrobtraces}.
We will focus our presentation on the second violation pattern of \ccvt{} (which is similar to the second violation pattern of \scct{}):
$\tau_{\ccvt{}2} =\atmacro\cdot \issueact(\apr,\atr)\cdot \beta_1\cdot \issueact(\apr_{1},\atr_1) \cdot\beta_2\cdot (\apr',\atr')\cdot \storeact(\apr',\atr) \cdot \stmacro$. 

The process $\apr$ that delayed the first transaction $\atr$ is called the \emph{Attacker}. The other processes delaying transactions in $\beta_1\cdot \issueact(\apr_{1},\atr_{1})$ are called \emph{Visibility Helpers}. Recall that all the delayed transactions must be causally ordered after $\issueact(\apr,\atr)$. The processes that execute transactions in $\beta_2\cdot (\apr',\atr')$ and contribute to the happens-before path between $\issueact(\apr_{1},\atr_{1})$ and $\storeact(\apr',\atr)$ are called \emph{Happens-Before Helpers}. A happens-before helper cannot be the attacker or a visibility helper since this would contradict the causal delivery guarantee provided by causal consistency (a transaction of a happens-before helper is not delayed, so visible immediately to all processes, and it cannot follow a delayed transaction). $\stmacro$ contains the stores of the delayed transactions in $\issueact(\apr,\atr)\cdot\beta_1\cdot \issueact(\apr_{1},\atr_{1})$.
It is important to notice that we may have $\atr=\atr_1$. In this case, $\beta_1 =\epsilon$ and the only delayed transaction is $\atr$. Also, all delayed transactions in $\beta_1$ including $\atr_1$ may be issued by the same process as $\atr$. In all of these cases, the set of Visibility Helpers is empty.

The instrumentation uses two copies of the set of shared variables in the original program. We use primed variables $\anaddr'$ to denote the second copy. When a process becomes the attacker or a visibility helper, it will write only to the second copy that is visible only to these processes (and remains invisible to the other processes including the happens-before helpers). The writes made by other processes including the happens-before helpers are made visible to all processes, i.e., they are applied on both copies of every shared variable.

To establish the causality chains of the delayed transactions issued by the attacker and the visibility helpers, we look whether a transaction can extend the causality chain started by the first delayed transaction issued by the attacker. This is to ensure that all such transactions are causally related to the first delayed transaction (of the attacker). In order for a transaction to ``join'' the causality chain, it has to satisfy one of the following conditions:
\begin{itemize}
\item the transaction is issued by a process that has already another transaction in the causality chain. Thus, we ensure the continuity of the causality chain through program order;
\item the transaction is reading from a variable that was updated by a previous transaction in the causality chain. Hence, we ensure the continuity of the  causality chain through the write-read relation.
\end{itemize}
We introduce a flag for each shared variable to mark the fact that it was updated by a previous transaction in the causality chain. These flags are used by the instrumentation to establish whether a transaction ``joins'' a causality chain.
Enforcing a happens-before path starting in the last delayed transaction, using transactions of the happens-before helpers, can be done in the same way. Compared to causality chains, there are two more cases in which a transaction can extend a happens-before path:
\begin{itemize}
\item the transaction writes to a shared variable that was read by a previous transaction in the happens-before path. Hence, we ensure the continuity of the happens-before path through the read-write relation;
\item the transaction writes to a shared variable that was updated by a previous transaction in the happens-before path. Hence, we ensure the continuity of the happens-before path  through write-write order.
\end{itemize}
Thus, we extend the shared variables flags used for causality chains in order to record if a variable was read or written by a previous transaction (in this case, a previous transaction in the happens-before path). 
Overall, the instrumentation uses a flag $\anaddr.event$ or $\anaddr'.event$ for each (copy of a) shared variable,
that stores the type of the last access (read or write) to the variable. 
Initially, these flags and other flags used by the instrumentation as explained below are initialized to null ($\perp$).

In general, whether a process is an attacker, visibility helper, or happens-before helper is not enforced syntactically by the instrumentation, and can vary from execution to execution. The role of a process in an execution is set \emph{non-deterministically} during the execution using some additional process-local flags. Thus, during an execution, each process chooses to set to $\mytrue$ at most one of the flags $p.a$,  $p.vh$, and $p.hbh$, implying that the process becomes an attacker, visibility helper, or happens-before helper, respectively. At most one process can be an attacker, i.e., set $p.a$ to $\mytrue$.

\begin{figure}[t]
 \scriptsize
 \hspace*{-0.7cm}
 \begin{minipage}[l]{0.45\textwidth}
\begin{eqnarray}
&&\semidler{\thetransition{\alab_1}{\alab_2}{\beginact}} = \notag\\
&&\textbf{// Typical execution of begin}\notag\\
&&\thetransitionInstrumented{\alab_1}{\alab_{x1}}{\thecondition{\hb{} = \perp \land (p.a \neq \perp \lor  a_{\attacktransaction} = \perp)}}\ \ \ \ \ \  \label{Equation:AttackerBegin}\\
&&\thetransitionInstrumented{\alab_{x1}}{\alab_2}{\beginact}\notag\\
&&\textbf{// Begin of first delayed transaction}\notag\\
&&\thetransitionInstrumented{\alab_1}{\alab_{x2}}{\thecondition{\hb{} = \perp \land \  a_{\attacktransaction} = \perp}}\label{Equation:AttackerBeginDelay}\\
&&\thetransitionInstrumented{\alab_{x2}}{\alab_{x3}}{\beginact}\notag\\
&&\thetransitionInstrumented{\alab_{x3}}{\alab_{x4}}{\theassign{p.a}{\ajoin}}\notag\\
&&\thetransitionInstrumented{\alab_{x4}}{\alab_{x5}}{\theforeachassign{\anaddr}{\mathbb{V}}{\anaddr'}{\anaddr}}\notag\\
&&\thetransitionInstrumented{\alab_{x5}}{\alab_2}{\theassign{a_{\attacktransaction}}{\mytrue}}\label{Equation:AttackerTransaction}\\[1.5mm]
&&\semidler{\thetransition{\alab_1}{\alab_2}{\theload{\areg}{\anaddr}}}  =\notag\\
&&\textbf{// Read before delaying transactions}\notag\\
&&\thetransitionInstrumented{\alab_1}{\alab_{x1}}{\thecondition{ a_{\attacktransaction} = \perp}}\notag\\
&&\thetransitionInstrumented{\alab_{x1}}{\alab_{2}}{\theload{\areg}{\anaddr}}\notag\\
&&\textbf{// Read in delayed transactions}\notag\\
&&\thetransitionInstrumented{\alab_1}{\alab_{x2}}{\thecondition{ a_{\attacktransaction} \neq \perp \land \apr.a\neq \perp}}\notag\\
&&\thetransitionInstrumented{\alab_{x2}}{\alab_{x3}}{\theload{\areg}{\anaddr'}}\notag\\
&&\thetransitionInstrumented{\alab_{x3}}{\alab_{2}}{\theassign{\anaddr'.event}{\loadacc}}\notag\\
&&\textbf{// Special read in last delayed transaction}\notag\\
&&\thetransitionInstrumented{\alab_{1}}{\alab_{x4}}{\thecondition{\anaddr'.event = \perp \land \ \apr.a\neq \perp }}\notag\\
&&\thetransitionInstrumented{\alab_{x4}}{\alab_{x5}}{\theload{\areg}{\anaddr'}}\notag\\
&&\thetransitionInstrumented{\alab_{x5}}{\alab_{x6}}{\thestore{\hb{}}{\mytrue}}\label{Equation:AttackerLastLoad}\\
&&\thetransitionInstrumented{\alab_{x6}}{\alab_{2}}{\theassign{\anaddr.event}{\loadacc}}\label{Equation:AttackerLastLoad1}
\end{eqnarray}
 \end{minipage}
 \hfill
 \begin{minipage}[r]{0.55\textwidth}
 \begin{eqnarray}
&&\semidler{\thetransition{\alab_1}{\alab_2}{\thestore{\anaddr}{e}}} = \notag\\
&&\textbf{// Write before delaying transactions}\notag\\
&&\thetransitionInstrumented{\alab_1}{\alab_{x1}}{\thecondition{ a_{\attacktransaction} = \perp}}\notag\\
&&\thetransitionInstrumented{\alab_{x1}}{\alab_{2}}{\thestore{\anaddr}{e}}\notag\\
&&\textbf{// Write in delayed transactions}\notag\\
&&\thetransitionInstrumented{\alab_1}{\alab_{x2}}{\thecondition{ a_{\attacktransaction} \neq \perp \land \apr.a\neq \perp}}\notag\\
&&\thetransitionInstrumented{\alab_{x2}}{\alab_{x3}}{\thestore{\anaddr'}{e}}\label{Equation:AttackerDelayStore}\\
&&\thetransitionInstrumented{\alab_{x3}}{\alab_{2}}{\theassign{\anaddr'.event}{\storeacc}}\label{Equation:AttackerDelayStore0}\\
&&\textbf{// Special write in first delayed transaction}\notag\\
&&\thetransitionInstrumented{\alab_1}{\alab_{x4}}{\thecondition{a_{\attackstore} = \anaddr.event = \perp \land \ \apr.a\neq \perp}}\notag\\
&&\thetransitionInstrumented{\alab_{x4}}{\alab_{x5}}{\thestore{\anaddr'}{e}}\notag\\
&&\thetransitionInstrumented{\alab_{x5}}{\alab_{x6}}{\thestore{a_{\attackstore}}{`\anaddr`}}\label{Equation:AttackerStore}\\
&&\thetransitionInstrumented{\alab_{x6}}{\alab_{2}}{\theassign{\anaddr'.event}{\storeacc}}\notag\\
&&\textbf{// Special write in last delayed transaction}\notag\\
&&\thetransitionInstrumented{\alab_{1}}{\alab_{x7}}{\thecondition{\anaddr'.event = \perp \land\ \apr.a\neq \perp}}\notag\\
&&\thetransitionInstrumented{\alab_{x7}}{\alab_{x8}}{\thestore{\anaddr'}{e}}\notag\\
&&\thetransitionInstrumented{\alab_{x8}}{\alab_{x9}}{\thestore{\hb{}}{\mytrue}}\label{Equation:AttackerLastStore}\\
&&\thetransitionInstrumented{\alab_{x9}}{\alab_{2}}{\theassign{\anaddr.event}{\loadacc}}\label{Equation:AttackerLastStore1}\\[1.5mm]
&&\semidler{\thetransition{\alab_1}{\alab_2}{\commitact}}=\notag\\
&&\thetransitionInstrumentedb{\alab_1}{\thecondition{ \apr.a\neq \perp \land \ a_{\attackstore} = \perp}}\notag\\
&&\thetransitionInstrumented{\alab_{1}}{\alab_{2}}{\commitact}\notag
\end{eqnarray}
 \end{minipage}
\normalsize
\caption{Instrumentation of the Attacker. We use $`\anaddr`$ to denote the name of the shared variable $\anaddr$.}
\label{Figure:Attacker}
\end{figure}

\subsection{Instrumentation of the Attacker}\label{subsec:Attacker}

We provide in Fig. \ref{Figure:Attacker}, the instrumentation of the instructions for the attacker process. Such a process passes through an initial phase where it executes transactions that are visible immediately to all the other processes (i.e., they are not delayed), and then non-deterministically it can choose to delay a transaction. When the attacker randomly chooses the first transaction to start  delaying of transactions, it sets a \emph{global} flag $a_{\attacktransaction}$ to $\mytrue$ in the instruction $\beginact$ (line~\eqref{Equation:AttackerTransaction}). Then, it sets the flag $p.a$ to $\ajoin$ to indicate that the current process is the attacker. During the first delayed transaction, the attacker non-deterministically chooses a write instruction to a shared variable $y$ and stores the name of this variable in the flag  $a_{\attackstore}$ (line \eqref{Equation:AttackerStore}). The values written during delayed transactions are stored in the primed variables and are visible only to the attacker and the visibility helpers. For example, given a variable $z$, all the writes to $z$ from the original program are transformed into writes to the primed version $z'$ (line \eqref{Equation:AttackerDelayStore}). Each time the attacker writes to a variable $z'$, it sets the flag $z'.event$ to $\storeacc$ (line \eqref{Equation:AttackerDelayStore0}) which will allow other processes that read the same variable to join the set of visibility helpers and start delaying their transactions. Once the attacker delays a transaction, it will read only from the primed variables (i.e., $z'$). 

To start the happens-before path, the attacker has to execute a transaction that either reads or writes to a shared variable $x$ that was not accessed by a delayed transaction (i.e., $x'.event = \perp$). In this case, it sets the variable $\hb{}$ to $\mytrue$ (lines \eqref{Equation:AttackerLastLoad} and \eqref{Equation:AttackerLastStore}) to mark the start of the happens before path and the end of the visibility chains, and it sets the flag $x.event$ to $\loadacc$ (lines \eqref{Equation:AttackerLastLoad1} and \eqref{Equation:AttackerLastStore1}). We set $x.event$ to $\loadacc$ even in the case of a write to $x$ in order to simplify the instrumentation of the happens-before helpers (to check that this transaction is related to a transaction of a happens-before helper $p$ through $\sto(x)$ or $\cfo(x)$ it is enough that $p$ writes to $x$ and it ``observers'' the same value $\loadacc$ in $x.event$).
When the flag $\hb{}$ is set to $\mytrue$ the attacker stops executing new transactions.
We can notice that when the $\hb{}$ is set to $\mytrue$, we can no longer execute new transactions from the attacker (all conditions in lines~\eqref{Equation:AttackerBegin} and \eqref{Equation:AttackerBeginDelay} become $\myfalse$).

\begin{figure}[t]
 \scriptsize
 \hspace*{-0.7cm}
 \begin{minipage}[l]{0.435\textwidth}
\begin{eqnarray}
&&\semVhelperorig{\thetransition{\alab_1}{\alab_2}{\beginact}}=\notag\\
&&\textbf{// Before joining visibility helpers}\notag\\
&&\thetransitionInstrumented{\alab_1}{\alab_{x1}}{\thecondition{\hb{} = \perp \land \  (a_{\attacktransaction} = \perp \lor p.vh = \perp)}}\notag\\
&&\thetransitionInstrumented{\alab_{x1}}{\alab_2}{\beginact}\label{Equation:VHBegin}\\
&&\textbf{// Joining visibility helpers}\notag\\
&&\thetransitionInstrumented{\alab_1}{\alab_{x2}}{\thecondition{\hb{} =  p.vh =  p.a = \perp \land \  a_{\attacktransaction} \neq \perp}}\notag\\
&&\thetransitionInstrumented{\alab_{x2}}{\alab_{x3}}{\beginact}\notag\\
&&\thetransitionInstrumented{\alab_{x3}}{\alab_{x4}}{\theassign{p.vh}{\vhatt}}\notag\\
&&\thetransitionInstrumented{\alab_{x4}}{\alab_2}{\theforeachassign{\anaddr'}{\mathbb{V}}{\anaddr'.event'}{\anaddr'.event}}\notag\\
&&\textbf{// After joining visibility helpers}\notag\\
&&\thetransitionInstrumented{\alab_1}{\alab_{x5}}{\thecondition{\hb{} = \perp \land \  a_{\attacktransaction} \neq \perp \land \  p.vh}}\notag\\
&&\thetransitionInstrumented{\alab_{x5}}{\alab_2}{\beginact}\label{Equation:VHTransaction}\\[1.5mm]
&&\semVhelperorig{\thetransition{\alab_1}{\alab_2}{\theload{\areg}{\anaddr}}}=\notag\\
&&\textbf{// Before joining visibility helpers}\notag\\
&&\thetransitionInstrumented{\alab_{1}}{\alab_{x1}}{\thecondition{a_{\attacktransaction} = \perp \lor ( p.vh =  \apr.a = \perp)}}\notag\\
&&\thetransitionInstrumented{\alab_{x1}}{\alab_{2}}{\theload{\areg}{\anaddr}}\label{Equation:VHOriginLoad}\\
&&\textbf{// After joining visibility helpers}\notag\\
&&\thetransitionInstrumented{\alab_1}{\alab_{x2}}{\thecondition{ p.vh \neq \perp}}\notag\\
&&\thetransitionInstrumented{\alab_{x2}}{\alab_{x3}}{\theload{\areg}{\anaddr'}}\label{Equation:VHDelayLoad}\\
&&\thetransitionInstrumented{\alab_{x3}}{\alab_{x4}}{\thecondition{\anaddr'.event' = \storeacc \land \  \neg p.vh}}\notag\\
&&\thetransitionInstrumented{\alab_{x4}}{\alab_{2}}{\theassign{p.vh}{\vhjoin}}\label{Equation:VHMeetCondition}\\
&&\thetransitionInstrumented{\alab_{x3}}{\alab_{2}}{\thecondition{\anaddr'.event' \neq \storeacc \lor \  p.vh}}\notag\\
&&\textbf{// Last delayed transaction}\notag\\
&&\thetransitionInstrumented{\alab_{1}}{\alab_{x5}}{\thecondition{\anaddr'.event = \perp \land \  p.vh \neq \perp }}\notag\\
&&\thetransitionInstrumented{\alab_{x5}}{\alab_{x6}}{\thestore{\hb{}}{\mytrue}}\label{Equation:VHLastLoad}\\
&&\thetransitionInstrumented{\alab_{x6}}{\alab_{x7}}{\theassign{\anaddr.event}{\loadacc}}\label{Equation:VHLastLoad1}\\
&&\thetransitionInstrumented{\alab_{x7}}{\alab_{2}}{\theload{\areg}{\anaddr'}}\notag
\end{eqnarray}
 \end{minipage}
 \hfill 
 \begin{minipage}[r]{0.545\textwidth}
    \begin{eqnarray}
    &&\semVhelperorig{\thetransition{\alab_1}{\alab_2}{\thestore{\anaddr}{\anexpr}}}=\notag\\
    &&\textbf{// Before attacker delays transactions}\notag\\
    &&\thetransitionInstrumented{\alab_{1}}{\alab_{x1}}{\thecondition{a_{\attacktransaction} = \perp}}\notag\\
    &&\thetransitionInstrumented{\alab_{x1}}{\alab_{2}}{\thestore{\anaddr}{e}}\ \ \label{Equation:VHOriginStore}\\
    &&\textbf{// Before joining visibility helpers}\notag\\
    &&\thetransitionInstrumented{\alab_1}{\alab_{x2}}{\thecondition{ a_{\attacktransaction} \neq \perp \land p.vh = \apr.a = \perp}}\notag\\
    &&\thetransitionInstrumented{\alab_{x2}}{\alab_{x3}}{\thestore{\anaddr'}{e}}\notag\\
    &&\thetransitionInstrumented{\alab_{x3}}{\alab_{2}}{\thestore{\anaddr}{e}}\notag\\
    &&\textbf{// After joining visibility helpers}\notag\\
    &&\thetransitionInstrumented{\alab_1}{\alab_{x4}}{\thecondition{p.vh \neq \perp}}\notag\\
    &&\thetransitionInstrumented{\alab_{x4}}{\alab_{x5}}{\thestore{\anaddr'}{e}}\label{Equation:VHDelayStore}\\
    &&\thetransitionInstrumented{\alab_{x5}}{\alab_{x6}}{\theassign{\anaddr'.event}{\storeacc}}\label{Equation:VHDelayStore1}\\
    &&\thetransitionInstrumented{\alab_{x6}}{\alab_{2}}{\theassign{\anaddr'.event'}{\perp}}\notag\\
    &&\textbf{// Last delayed transaction}\notag\\
    &&\thetransitionInstrumented{\alab_{1}}{\alab_{x7}}{\thecondition{\anaddr'.event = \perp \land \  p.vh \neq \perp }}\notag\\
    &&\thetransitionInstrumented{\alab_{x7}}{\alab_{x8}}{\thestore{\hb{}}{\mytrue}}\label{Equation:VHLastStore}\\
    &&\thetransitionInstrumented{\alab_{x8}}{\alab_{x9}}{\theassign{\anaddr.event}{\loadacc}}\label{Equation:VHLastStore1}\\
    &&\thetransitionInstrumented{\alab_{x9}}{\alab_{2}}{\thestore{\anaddr'}{e}}\notag\\[1.5mm]
    &&\semVhelperorig{\thetransition{\alab_1}{\alab_2}{\commitact}}=\notag\\
    &&\textbf{// Before joining visibility helpers}\notag\\
    &&\thetransitionInstrumented{\alab_1}{\alab_{x1}}{\thecondition{ a_{\attacktransaction} = \perp \lor (a_{\attacktransaction} \neq \perp \land \  p.vh = \perp)}}\notag\\
    &&\thetransitionInstrumented{\alab_{x1}}{\alab_2}{\commitact}\notag\\
    &&\textbf{// After joining visibility helpers}\notag\\
    &&\thetransitionInstrumented{\alab_1}{\alab_{x2}}{\thecondition{ a_{\attacktransaction} \neq \perp \land \  p.vh}}\notag\\
    &&\thetransitionInstrumented{\alab_{x2}}{\alab_2}{\commitact}\notag\\
    &&\textbf{// Failed to join visibility helpers}\notag\\
    &&\thetransitionInstrumentedb{\alab_1}{\thecondition{ a_{\attacktransaction} \neq \perp \land \  \neg p.vh}}\label{Equation:VHBLOCK}
    \end{eqnarray}
    \end{minipage}
\normalsize
\caption{Instrumentation of the Visibility Helpers.}
\label{Figure:VHelpersInstrumentation}
\end{figure}

\subsection{Instrumentation of the Visibility Helpers}\label{subsec:VHelpers}

Fig. \ref{Figure:VHelpersInstrumentation} lists the instrumentation of the instructions of a process that belongs to the set of visibility helpers. Such a process passes through an initial phase where it executes the original code instructions (lines \eqref{Equation:VHOriginStore} and \eqref{Equation:VHOriginLoad}) until the flag $a_{\attacktransaction}$ is set to $\mytrue$ by the attacker. Then, it continues the execution of its original instructions but, whenever it stores a value it writes it to both the shared variable $z$ and the primed variable $z'$ so it is visible to all processes. Non deterministically it chooses a first transaction to delay, at which point it joins the set of visibility helpers. It sets the flag $p.vh$ to $\vhatt$ signaling its desire to join the visibility helpers, and it chooses a transaction (the $\beginact$ of this transaction is shown in line~\eqref{Equation:VHTransaction}) through which the process will join the set of visibility helpers. The process directly starts delaying its writes, i.e., writing to primed variables, and reading only from delayed writes, i.e., from primed variables, and behaving the same as the attacker. In order to check that it can extend the sequence of causal dependencies (required by the causal chain definition), it takes a snapshot of the $\_.event$ fields at the beginning of the transaction and stores it to $\_.event'$ fields (line $\alab_{x4}$ in the instrumentation of $\beginact$). This snapshot is necessary to check that it reads from writes made in other transactions (ignoring the writes in the current transaction). When a process choses a first transaction to delay (during the \plog{begin} instruction), it has made a pledge that during this transaction it will read from a variable that was updated by a another delayed transaction from either the attacker or some other visibility helper. This is to ensure that this transaction extends the visibility chain. Hence, the local process flag $p.vh$ will be set to $\vhjoin$ when the process meets its pledge (line \eqref{Equation:VHMeetCondition}). If the process does not keep its pledge (i.e., $p.vh$ is equal to $\vhatt$) at the end of the transaction (i.e., during the \plog{end} instruction) we block the execution. Thus, when executing the $\commitact$ instruction of the underlying transaction we check whether the flag $p.vh$ is null, if so we block the execution (line~\eqref{Equation:VHBLOCK}).

When a process joins the visibility helpers, it delays all writes and reads only from the primed variables (lines \eqref{Equation:VHDelayStore} and \eqref{Equation:VHDelayLoad}).
Similar to the attacker, a process in the visibility helpers delays a write to a shared variable $z$ by writing to $z'$, it sets the flag $z'.event$ to $\storeacc$ (line \eqref{Equation:VHDelayStore1}).
In order for a process in the visibility helpers to start the happens-before path, it has to either read or write a shared variable $x$ that was not accessed by a delayed transaction (i.e., $x'.event = \perp$). In this case we set the flag $\hb{}$ to $\mytrue$ (lines \eqref{Equation:VHLastStore} and \eqref{Equation:VHLastLoad}) to mark the start of the happens before path and the end of the visibility chains and set the flag $x.event$ to $\loadacc$ (lines \eqref{Equation:VHLastStore1} and \eqref{Equation:VHLastLoad1}). 
When the flag $\hb{}$ is set to $\mytrue$, all processes in the set of visibility helpers stop issuing new transactions because all conditions for executing the $\beginact$ instruction become $\myfalse$.


\begin{figure}[t]
 \scriptsize
 \hspace*{-0.7cm}
\begin{minipage}[l]{0.44\textwidth}
\begin{eqnarray}
&&\semHbhelperorig{\thetransition{\alab_1}{\alab_2}{\beginact}}=\notag\\
&&\textbf{// Before joining happens-before helpers}\notag\\
&&\thetransitionInstrumented{\alab_1}{\alab_{x1}}{\thecondition{\hb{} = p.vh =  p.a = \perp}}\notag\\
&&\thetransitionInstrumented{\alab_{x1}}{\alab_2}{\beginact}\notag\\
&&\textbf{// Joining happens-before helpers}\notag\\
&&\thetransitionInstrumented{\alab_1}{\alab_{x2}}{\thecondition{\hb{} \neq \perp \land\   p.hbh =  p.vh =  p.a = \perp}}\notag\\
&&\thetransitionInstrumented{\alab_{x2}}{\alab_{x3}}{\beginact}\label{Equation:HBHTransaction}\\
&&\thetransitionInstrumented{\alab_{x3}}{\alab_2}{\theforeachassign{\anaddr}{\mathbb{V}}{\anaddr.event'}{\anaddr.event}}\notag\\
&&\textbf{// After joining happens-before helpers}\notag\\
&&\thetransitionInstrumented{\alab_1}{\alab_{x4}}{\thecondition{\hb{} \neq \perp \land \   p.hbh \neq \perp}}\notag\\
&&\thetransitionInstrumented{\alab_{x4}}{\alab_2}{\beginact}\notag\\[1.5mm]
&&\semHbhelperorig{\thetransition{\alab_1}{\alab_2}{\thestore{\anaddr}{\anexpr}}}=\notag\\[1mm]
&&\textbf{// Before the first delayed transaction}\notag\\
&&\thetransitionInstrumented{\alab_{1}}{\alab_{x1}}{\thecondition{\hb{} = \perp \land \  a_{\attacktransaction} = \perp}}\notag\\
&&\thetransitionInstrumented{\alab_{x1}}{\alab_{2}}{\thestore{\anaddr}{\anexpr}}\notag\\
&&\textbf{// After the first delayed transaction}\notag\\
&&\thetransitionInstrumented{\alab_{1}}{\alab_{x2}}{\thecondition{\hb{} = p.vh = p.a = \perp \land \  a_{\attacktransaction} \neq \perp}}\notag\\
&&\thetransitionInstrumented{\alab_{x2}}{\alab_{x3}}{\thestore{\anaddr'}{\anexpr}}\label{Equation:HBHStore1}\\
&&\thetransitionInstrumented{\alab_{x3}}{\alab_{2}}{\thestore{\anaddr}{\anexpr}}\label{Equation:HBHStore2}\\
&&\textbf{// After the last delayed transaction}\notag\\
&&\thetransitionInstrumented{\alab_1}{\alab_{x4}}{\thecondition{ \hb{} \neq \perp \land\ p.vh =  p.a = \perp}}\notag\\
&&\thetransitionInstrumented{\alab_{x4}}{\alab_{x5}}{\thestore{\anaddr}{\anexpr}}\notag\\
&&\thetransitionInstrumented{\alab_{x5}}{\alab_{x6}}{\theassign{\anaddr.event}{\storeacc}}\label{Equation:HBHFlagStore}\\
&&\thetransitionInstrumented{\alab_{x6}}{\alab_{x7}}{\thecondition{\anaddr.event' \neq \perp \land \  p.hbh = \perp}}\notag\\
&&\thetransitionInstrumented{\alab_{x7}}{\alab_{2}}{\theassign{p.hbh}{\hbhjoin}}\label{Equation:HBHMeetConditionStore}\\
&&\thetransitionInstrumented{\alab_{x6}}{\alab_{2}}{\thecondition{\anaddr.event' = \perp \lor \  p.hbh \neq \perp}}\notag
\end{eqnarray}
 \end{minipage}
 \hfill
\begin{minipage}[r]{0.555\textwidth}
\begin{eqnarray}
&&\semHbhelperorig{\thetransition{\alab_1}{\alab_2}{\theload{\areg}{\anaddr}}}=\notag\\[1mm]
&&\textbf{// Before the last delayed transaction}\notag\\
&&\thetransitionInstrumented{\alab_{1}}{\alab_{x1}}{\thecondition{\hb{} = \perp \land\ p.vh = p.a = \perp}}\notag\\
&&\thetransitionInstrumented{\alab_{x1}}{\alab_{2}}{\theload{\areg}{\anaddr}}\label{Equation:HBHLoad0}\\
&&\textbf{// After the last delayed transaction}\notag\\
&&\thetransitionInstrumented{\alab_1}{\alab_{x2}}{\thecondition{ \hb{} \neq \perp \land\ p.vh =  p.a = \perp}}\notag\\
&&\thetransitionInstrumented{\alab_{x2}}{\alab_{x3}}{\theload{\areg}{\anaddr}}\notag\\
&&\thetransitionInstrumented{\alab_{x3}}{\alab_{x4}}{\thecondition{\anaddr.event' = \storeacc \land\ p.hbh = \perp}}\notag\\
&&\thetransitionInstrumented{\alab_{x4}}{\alab_{2}}{\theassign{p.hbh}{\hbhjoin}}\label{Equation:HBHMeetConditionLoad}\\
&&\thetransitionInstrumented{\alab_{x3}}{\alab_{x5}}{\thecondition{\anaddr.event = \perp}}\notag\\
&&\thetransitionInstrumented{\alab_{x5}}{\alab_{2}}{\theassign{\anaddr.event}{\loadacc}}\label{Equation:HBHFlagLoad}\\
&&\thetransitionInstrumented{\alab_{x3}}{\alab_{2}}{\thecondition{\anaddr.event \neq \perp \lor \  p.hbh \neq \perp}}\notag\\[1.5mm]
&&\semHbhelperorig{\thetransition{\alab_1}{\alab_2}{\commitact}}=\notag\\
&&\textbf{// Before joining happens-before helpers}\notag\\
&&\thetransitionInstrumented{\alab_1}{\alab_{x1}}{\thecondition{\hb{} = p.vh =  p.a = \perp}}\notag\\
&&\thetransitionInstrumented{\alab_{x1}}{\alab_2}{\commitact}\notag\\
&&\textbf{// After joining happens-before helpers}\notag\\
&&\thetransitionInstrumented{\alab_1}{\alab_{x2}}{\thecondition{\hb{} \neq \perp \land \  p.hbh \neq \perp}}\notag\\
&&\thetransitionInstrumented{\alab_{x2}}{\alab_{x3}}{\commitact}\notag\\
&&\thetransitionInstrumented{\alab_{x3}}{\alab_{x4}}{\theload{\tilde r}{a_{\attackstore}}}\label{Equation:HBHFINISH1}\\
&&\thetransitionInstrumented{\alab_{x4}}{\alab_{x5}}{\theassign{\tilde r}{\tilde r .event}}\label{Equation:HBHFINISH2}\\
&&\thetransitionInstrumentedf{\alab_{x5}}{\thecondition{\tilde r \neq \perp}}\label{Equation:HBHFINISH3}\\
&&\thetransitionInstrumented{\alab_{x5}}{\alab_{2}}{\thecondition{\tilde r = \perp}}\notag\\
&&\textbf{// Failed to join happens-before helpers}\notag\\
&&\thetransitionInstrumentednotogo{\alab_1}{\thecondition{\hb{} \neq \perp \land\ p.hbh = p.vh = p.a = \perp}}\notag\\
&&\thetransitionInstrumentedbfinal{\ \ \ \ \ \ \ \ \ \ \ \ \ \ \ \ \ \ \ \ \ \ \  \ \ \ \ \ \ \ \  \  \ \ \ \ }\label{Equation:HBHBLOCK}
\end{eqnarray}
 \end{minipage}
\normalsize
\caption{Instrumentation of Happens-Before Helpers.}
\label{Figure:HBHelpersInstrumentation}
\end{figure}

\subsection{Instrumentation of the Happens-Before Helpers}\label{subsec:HBHelpers}

The remaining processes, which are not the attacker or a visibility helper, can become happens-before helpers. Fig. \ref{Figure:HBHelpersInstrumentation} lists the instrumentation of the instructions of a happens-before helper process. Similar to above, when the flag $a_{\attacktransaction}$ is set to $\mytrue$ by the attacker, other processes enter a phase where they continue executing their instructions, however, when they store a value they write it in both the shared variable $z$ and the primed variable $z'$ (lines \eqref{Equation:HBHStore1} and \eqref{Equation:HBHStore2}). However, they only read from the original shared variables (line \eqref{Equation:HBHLoad0}). Once the flag $\hbo$ is set to $\hbhjoin$, a process that cannot be the attacker (i.e., the flag $p.a$ is null) or a visibility helper (i.e., the flag $p.vh$ is null) chooses non-deterministically a transaction $\atr$ (the $\beginact$ of this transaction is shown in line~\eqref{Equation:HBHTransaction}) through which it wants to join the set of happens-before helpers, i.e., continue the happens-before path created by the existing happens-before helpers. Similar to visibility helpers, when a process choses the transaction $\atr$, it makes a pledge (while executing the \plog{begin} instruction) that during this transaction it will either read a variable updated by another happens-before helper or write to a variable that was accessed (read or written) by another happens-before helper (every process that executes a transaction after $\hb{}$ is set to $\mytrue$ makes this pledge). When the pledge is met, the process sets the flag $p.hbh$ to $\hbhjoin$ (lines \eqref{Equation:HBHMeetConditionLoad} and \eqref{Equation:HBHMeetConditionStore}). The execution is blocked if a process does not keep its pledge (i.e., the flag $p.hbh$ is null) at the end of the transaction (line~\eqref{Equation:HBHBLOCK}). We use a flag $x.event$ for each variable $x$ to record the type (read $\loadacc$ or write $\storeacc$) of the last access made by a happens-before helper (lines \eqref{Equation:HBHFlagLoad} and \eqref{Equation:HBHFlagStore}).
Moreover, once $\hb{}$ is set to $\mytrue$ (i.e., there are no more delayed transactions), the process can write and read only the original shared variables, since the primed versions are no longer in use. A particular case is when the transaction $\atr$ is from the first process trying to join the happens-before helpers, in which the transaction must contain a read accessing the variable $x$ that was read or written to by a transaction from the attacker of a visibility helper.

The happens-before helpers continue executing their instructions, until one of them reads from the shared variable $y$ whose name was stored in $a_{\attackstore}$. This establishes a happens-before path between the last delayed transaction and a ``fictitious'' store event corresponding to the first delayed transaction that could be executed just after this read of $y$. The execution does not have to contain this store event explicitly since it is always enabled. Therefore, at the end of every transaction, the instrumentation checks whether the transaction read $y$. If it is the case, then the execution stops and goes to an error state to indicate that this is a robustness violation.
The happens-before helpers processes continue executing their instructions, until one of them executes a load that reads from the shared variable $y$ that was stored in $a_{\attackstore}$ which implies the existence of a happens-before cycle.
Thus, when executing the instruction $\commitact$ at the end of every transaction, we have a conditional check to detect if we have a load or a write  accessing the variable $y$ (lines~\eqref{Equation:HBHFINISH1}, \eqref{Equation:HBHFINISH2}, and \eqref{Equation:HBHFINISH3}).
When the check detects that the variable $y$ was accessed, the execution goes to an error state (line~\eqref{Equation:HBHFINISH3}) to indicate that it has produced a robustness violation.

In Fig. \ref{Figure:InstrExmp}, we show an excerpt of the instrumentations of the two transactions of the $\mathsf{SB}$ program. 
In particular, we only give the instructions of the instrumented $\mathsf{SB}$ that are reached during the execution that leads to an error state. The attacker instrumentation is applied to the transaction $\atr 1$ of $\apr 1$ and the happens-before helpers instrumentation is applied to the transaction $\atr 2$ of $\apr 2$. The first conflict order from $\atr 1$ to $\atr 2$ (shown in Fig.~\ref{fig:ccvrobtraces}) is simulated by the fact that at line \ref{Equation:RW12}, $y.event' = \loadacc$ (see lines \ref{Equation:RW10} and \ref{Equation:RW11}). Also, the second conflict order from $\atr 2$ to $\atr 1$ is simulated by the fact that at line \ref{Equation:RW22} we reach the error state where $a_{\attackstore}.event = x.event = \loadacc$ (see lines \ref{Equation:RW20} and \ref{Equation:RW21}). 


\begin{figure}[t]
 \scriptsize
 \hspace*{-0.75cm}
 \begin{minipage}[l]{0.41\textwidth}
    \begin{eqnarray}
      &&\semidler{\thetransition{\alab_1}{\alab_2}{\beginact}} = \notag\\[1mm]
      &&\thetransitionInstrumented{\alab_1}{\alab_{b2}}{\thecondition{\hb{} = \perp \land \  a_{\attacktransaction} = \perp}}\notag\\
      &&\thetransitionInstrumented{\alab_{b2}}{\alab_{b3}}{\beginact}\notag\\
      &&\thetransitionInstrumented{\alab_{b3}}{\alab_{b4}}{\theassign{p1.a}{\ajoin}}\notag\\
      &&\thetransitionInstrumented{\alab_{b4}}{\alab_{b5}}{\theforeachassign{z}{\mathbb{V}}{z'}{z}}\notag\\
      &&\thetransitionInstrumented{\alab_{b5}}{\alab_2}{\theassign{a_{\attacktransaction}}{\mytrue}}\notag\\[1.5mm]
      &&\semidler{\thetransition{\alab_2}{\alab_3}{\thestore{x}{1}}} = \notag\\[1mm]
      &&\thetransitionInstrumented{\alab_2}{\alab_{s4}}{\thecondition{a_{\attackstore} = x.event = \perp \land\ p1.a\neq \perp}}\notag\\
      &&\thetransitionInstrumented{\alab_{s4}}{\alab_{s5}}{\thestore{x'}{1}}\notag\\
      &&\thetransitionInstrumented{\alab_{s5}}{\alab_{s6}}{\thestore{a_{\attackstore}}{`x`}}\label{Equation:RW20}\\
      &&\thetransitionInstrumented{\alab_{s6}}{\alab_{3}}{\theassign{x'.event}{\storeacc}}\notag\\[1.5mm]
      &&\semidler{\thetransition{\alab_3}{\alab_4}{\theload{\areg 1}{y}}}  =\notag\\[1mm]
      &&\thetransitionInstrumented{\alab_{3}}{\alab_{l4}}{\thecondition{y'.event = \perp \land\ a_{\attacktransaction} \neq \perp }}\notag\\
      &&\thetransitionInstrumented{\alab_{l4}}{\alab_{l5}}{\theload{\areg 1}{y'}}\notag\\
      &&\thetransitionInstrumented{\alab_{l5}}{\alab_{l6}}{\thestore{\hb{}}{\mytrue}}\notag\\
      &&\thetransitionInstrumented{\alab_{l6}}{\alab_{4}}{\theassign{y.event}{\loadacc}}\label{Equation:RW10}\\[1.5mm]
      &&\semidler{\thetransition{\alab_4}{\alab_5}{\commitact}}=\notag\\[1mm]
      &&\thetransitionInstrumented{\alab_{4}}{\alab_{5}}{\commitact}\notag
    \end{eqnarray}
     \end{minipage}
     \hfill
    \begin{minipage}[r]{0.585\textwidth}
    \begin{eqnarray}
    &&\semHbhelperorig{\thetransition{\alab_1}{\alab_2}{\beginact}}=\notag\\
    &&\thetransitionInstrumentedThree{\alab_1}{\alab_{b2}}{\thecondition{\hb{} \neq \perp \land\ p2.hbh = p2.vh = p2.a = \perp}}\notag\\
    &&\thetransitionInstrumented{\alab_{b2}}{\alab_{b3}}{\beginact}\notag\\
    &&\thetransitionInstrumentedTwo{\alab_{b3}}{\alab_2}{\thestore{x.event'}{x.event}}{\thestore{y.event'}{y.event}}\label{Equation:RW11}\\[1.5mm]
    &&\semHbhelperorig{\thetransition{\alab_3}{\alab_4}{\thestore{y}{1}}}=\notag\\[1mm]
    &&\thetransitionInstrumented{\alab_3}{\alab_{s4}}{\thecondition{ \hb{} \neq \perp \land\ p2.vh =  p2.a = \perp}}\notag\\
    &&\thetransitionInstrumented{\alab_{s4}}{\alab_{s5}}{\thestore{y}{1}}\notag\\
    &&\thetransitionInstrumented{\alab_{s5}}{\alab_{s6}}{\theassign{y.event}{\storeacc}}\notag\\
    &&\thetransitionInstrumented{\alab_{s6}}{\alab_{s7}}{\thecondition{y.event' \neq \perp \land \  p2.hbh = \perp}}\label{Equation:RW12}\\
    &&\thetransitionInstrumented{\alab_{s7}}{\alab_{4}}{\theassign{p2.hbh}{\hbhjoin}}\notag\\[1.5mm]
    &&\semHbhelperorig{\thetransition{\alab_2}{\alab_3}{\theload{\areg2}{x}}}=\notag\\[1mm]
    &&\thetransitionInstrumented{\alab_2}{\alab_{l2}}{\thecondition{ \hb{} \neq \perp \land\ p.vh =  p.a = \perp}}\notag\\
    &&\thetransitionInstrumented{\alab_{l2}}{\alab_{l3}}{\theload{\areg2}{x}}\notag\\
    &&\thetransitionInstrumented{\alab_{l3}}{\alab_{l5}}{\thecondition{x.event = \perp}}\notag\\
    &&\thetransitionInstrumented{\alab_{l5}}{\alab_{3}}{\theassign{x.event}{\loadacc}}\label{Equation:RW21}\\[1.5mm]
    &&\semHbhelperorig{\thetransition{\alab_4}{\alab_5}{\commitact}}=\notag\\
    &&\thetransitionInstrumented{\alab_4}{\alab_{e2}}{\thecondition{\hb{} \neq \perp \land \  p2.hbh \neq \perp}}\notag\\
    &&\thetransitionInstrumented{\alab_{e2}}{\alab_{e3}}{\commitact}\notag\\
    &&\thetransitionInstrumented{\alab_{e3}}{\alab_{e4}}{\theload{\tilde r}{a_{\attackstore}}}\notag\\
    &&\thetransitionInstrumented{\alab_{e4}}{\alab_{e5}}{\theassign{\tilde r}{\tilde r .event}}\notag\\
    &&\thetransitionInstrumentedf{\alab_{e5}}{\thecondition{\tilde r \neq \perp}}\label{Equation:RW22}%
    \end{eqnarray}
     \end{minipage}
\normalsize
\caption{Instrumentation of $\mathsf{SB}$ program in Fig. \ref{fig:rob1}.}
\label{Figure:InstrExmp}
\end{figure} 

\subsection{Correctness}
As we have already mentioned, the role of a process in an execution is chosen non-deterministically at runtime. Therefore, the final instrumentation of a given program $\aprog$, denoted by $\semTwo{\aprog}$, is obtained by replacing each labeled instruction $\langle linst\rangle$ with the concatenation of the instrumentations corresponding to the attacker, the visibility helpers, and the happens-before helpers, i.e., $\semTwo{\langle linst\rangle} ::= \semidler{\langle linst\rangle} \hspace{0.17cm}   \semVhelperorig{\langle linst\rangle} \hspace{0.17cm}   \semHbhelperorig{\langle linst\rangle}$. 
The instrumented program $\semTwo{\aprog}$ reaches the error state iff $\aprog$ admits a violation of the pattern $\tau_{\ccvt{}2}$. Let $\semOne{\aprog}$ be the instrumented program that reaches an error state iff $\aprog$ admits a violation of the pattern $\tau_{\ccvt{}1}$. The instrumentation $\semOne{\ }$ does not include the visibility helpers since only a single transaction is delayed in $\tau_{\ccvt{}1}$, and it can be obtained in the same manner as $\semTwo{\ }$. The following theorem states the correctness of the instrumentation. 

\begin{thm}\label{th:final}
A program $\aprog$ is not robust against \ccvt{} iff either $\semOne{\aprog}$ or $\semTwo{\aprog}$ reaches the error state.
\end{thm}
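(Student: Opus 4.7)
The plan is to prove Theorem~\ref{th:final} by reducing it to the characterization of minimal violations given in Theorem~\ref{theorem:CcvMinViol}. Since $\aprog$ is non-robust under \ccvt{} iff it admits a minimal violation of the form $\tau_{\ccvt{}1}$ or $\tau_{\ccvt{}2}$, it suffices to prove two correspondences: $\semOne{\aprog}$ reaches the error state under serializability iff $\aprog$ admits a $\tau_{\ccvt{}1}$-shaped minimal violation, and $\semTwo{\aprog}$ reaches the error state iff $\aprog$ admits a $\tau_{\ccvt{}2}$-shaped minimal violation. Since $\semOne{\cdot}$ is the restriction of $\semTwo{\cdot}$ where the visibility helpers instrumentation is inactive (only one transaction is delayed), I would give the full argument for $\semTwo{\cdot}$ and remark that the $\semOne{\cdot}$ case follows by specialization.

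For the \emph{completeness} direction, given a minimal violation $\tau_{\ccvt{}2}=\atmacro\cdot \issueact(\apr,\atr)\cdot \beta_1\cdot \issueact(\apr_{1},\atr_1) \cdot\beta_2\cdot (\apr',\atr')\cdot \storeact(\apr',\atr) \cdot \stmacro$ of $\aprog$, I would construct a serializable execution $\pi$ of $\semTwo{\aprog}$ reaching the error state as follows. First execute in $\pi$ the macro events of $\atmacro$ using the ``typical'' branches of the instrumentation. Then let process $\apr$ take the attacker branch at $\issueact(\apr,\atr)$, setting $p.a$, $a_{\attacktransaction}$, and (on the chosen conflicting write to some $y$) $a_{\attackstore}:=\,`y`$. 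Execute the delayed transactions in $\beta_1\cdot\issueact(\apr_{1},\atr_1)$ in the original order: by Theorem~\ref{theorem:CcvMinViol}(2b), these form a causality chain starting at $\issueact(\apr,\atr)$, so each such process, on its first delayed transaction, either is $\apr$ itself or reads a primed variable previously written by a delayed transaction (the causality chain edges), thereby satisfying the visibility-helper pledge at line~\eqref{Equation:VHMeetCondition}. After the last delayed transaction issues its special read/write on a variable outside the chain, $\hb{}$ is set to $\mytrue$. Then execute the transactions in $\beta_2\cdot(\apr',\atr')$ as happens-before helpers in the order they appear; again by Theorem~\ref{theorem:CcvMinViol}(2c–d), each such transaction witnesses a happens-before edge (through $\cfo$, $\sto$, or $\po$) to a previous transaction in the chain, satisfying the pledge. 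Finally, $(\apr',\atr')$ accesses the variable $y=a_{\attackstore}$ through $\cfo(y)$, so the commit check at lines~\eqref{Equation:HBHFINISH1}--\eqref{Equation:HBHFINISH3} drives the execution to the error state.

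For the \emph{soundness} direction, given a serializable execution $\pi$ of $\semTwo{\aprog}$ reaching the error state, I would extract a \ccvt{} execution $\rho$ of $\aprog$ and show its trace is a robustness violation. The flags identify the roles: the unique process with $p.a\neq\perp$ is the attacker, those with $p.vh=\vhjoin$ are the visibility helpers, and those with $p.hbh=\hbhjoin$ are the happens-before helpers. I construct $\rho$ by reordering $\pi$: take the instructions executed before $a_{\attacktransaction}$ is set as the prefix $\atmacro$; treat each primed write $\anaddr':=e$ as a write to $\anaddr$ in a delayed transaction; execute the transactions of the attacker and visibility helpers issuing them in the order of $\pi$ but delaying their stores so that the store logs are delivered only at the end; then execute the happens-before helpers' transactions (which do not contribute primed writes). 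Transaction isolation and causal delivery follow because the primed-variable discipline ensures that a happens-before helper never reads from a delayed transaction, and because visibility helpers only ``join'' by reading an already-delivered primed write, which under \ccvt{} corresponds to receiving the corresponding log. The pledge conditions verified by $\semTwo{\cdot}$ guarantee the causality chain and happens-before chain required by Theorem~\ref{theorem:CcvMinViol}(2), and reaching the error state certifies the closing $\cfo(y)$ edge from the last happens-before helper transaction to the delayed store $\storeact(\apr',\atr)$, closing the transactional happens-before cycle.

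The main obstacle will be the soundness direction: carefully arguing that the reordering of the serializable execution $\pi$ into a candidate \ccvt{} execution $\rho$ indeed yields a trace satisfying causal delivery and the specific $\tau_{\ccvt{}2}$ shape, in particular that the $\sto$ orientation on the ``last'' $y$-edge is what \ccvt{}'s timestamp order requires (cf.~Lemma~\ref{lemma:CcvProperty}). A related subtlety is that the instrumentation does not materialize the delivery of the delayed transaction to happens-before helpers; I must show that deferring this delivery to the end is always consistent with \ccvt{}, using the fact that after $\hb{}$ is set, no happens-before helper's transaction causally precedes any delayed transaction (a process that tried to become both a helper after reading a primed write and a happens-before helper would fail the pledge check at line~\eqref{Equation:VHBLOCK} or \eqref{Equation:HBHBLOCK}). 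Once these invariants are established, the matching to the pattern in Theorem~\ref{theorem:CcvMinViol} is routine.
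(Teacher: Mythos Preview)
Your proposal is correct and follows essentially the same approach as the paper: the paper's own proof is a brief sketch that describes exactly the bijection you spell out---mapping an error-reaching execution of the instrumentation to a \ccvt{} violation by stripping auxiliary variables, turning primed writes into delayed writes delivered only to visibility helpers, and appending the missing store events; and conversely, given a minimal violation of shape $\tau_{\ccvt{}2}$, assigning $\apr$ as attacker, the processes issuing in $\beta_1\cdot\issueact(\apr_1,\atr_1)$ as visibility helpers, and the processes on the happens-before path in $\beta_2\cdot(\apr',\atr')$ as happens-before helpers. Your proposal is in fact more detailed than the paper's sketch, and the subtleties you flag (consistency of the deferred deliveries with causal delivery, and orientation of the final $\sto$/$\cfo$ edge) are precisely the points the paper leaves implicit.
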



The proof of this theorem relies on the explanations given above. One can define a bijection between executions of the instrumentation that reach an error state and executions of the original program that satisfy the constraints in one of the two violation patterns. The former can be rewritten to the latter by roughly, removing all accesses to the auxiliary variables used by the instrumentation, replacing the writes to shared variable copies by writes to the original variables, delivering delayed transactions only to visibility helpers, and appending store events for all the delayed transactions. For the reverse, given a robustness violation $\tau = \atmacro\cdot \issueact(\apr,\atr)\cdot \beta_1\cdot \issueact(\apr_{1},\atr_1) \cdot\beta_2\cdot (\apr',\atr')\cdot \storeact(\apr',\atr) \cdot \stmacro$ of type $\tau_{\ccvt{}2}$, we can build an execution of the instrumentation that reaches an error state, where $\apr$ is the attacker, the processes delaying transactions in $\beta_1\cdot  \issueact(\apr_1,\atr_1)$ are visibility helpers, and the processes that issue transactions between $\issueact(\apr_1,\atr_1)$ and $\storeact(\apr',\atr)$ and that are part of the happens-before path between these two events are the happens-before helpers.

The following result states the complexity of checking robustness for finite-state programs\footnote{That is, programs where the number of variables and the data domain are bounded.} against one of the three variations of causal consistency considered in this work (we use causal consistency as a generic name to refer to all of them). The upper bound is a direct consequence of Theorem~\ref{th:final} and of previous results concerning the reachability problem in concurrent programs running over SC, with a fixed~\cite{DBLP:conf/focs/Kozen77} or parametric number of processes~\cite{DBLP:journals/tcs/Rackoff78}. 
For the lower bound, given an instance of the reachability problem under sequential consistency, denoted by $(\aprog,\ell)$\footnote{That is, whether the program $\aprog$ reaches the control location $\ell$ under SC.}, we construct a program $\aprog'$ where each statement $s$ of $\aprog$ is a different transaction (guarded by a global lock), and where reaching the location $\ell$ enables the execution of a ``gadget'' that corresponds to the $\mathsf{SB}$ program in Figure~\ref{fig:rob1}. Executing each statement under a global lock ensures that every execution of $\aprog'$ under causal consistency is serializable, and faithfully represents an execution of the original $\aprog$ under sequential consistency. Moreover, $\aprog$ reaches $\ell$ iff $\aprog'$ contains a robustness violation, which is due to the execution of $\mathsf{SB}$.

\begin{cor}\label{theorem:InstRobus}
Checking robustness of finite-state programs against causal consistency is PSPACE-complete when the number of processes is fixed and EXPSPACE-complete, otherwise.
\end{cor}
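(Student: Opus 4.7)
The plan is to split the corollary into an upper bound and a matching lower bound for each of the two regimes (fixed versus parametric number of processes), using Theorem~\ref{th:final} in one direction and a direct encoding of SC reachability in the other.

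For the upper bound, I would observe that the instrumentation $\semOne{\aprog}$ and $\semTwo{\aprog}$ constructed in Section~\ref{sec:Instr} has size polynomial in $\aprog$: it adds a constant number of auxiliary boolean/flag variables per shared variable, a constant number of per-process flags, the two global flags $a_{\attacktransaction},a_{\attackstore},\hb{}$, and replaces each labeled instruction by a bounded gadget. Since the original program is finite-state, so is the instrumentation. By Theorem~\ref{th:final}, non-robustness reduces to control-state reachability in the instrumentation running under serializability, i.e., where each transaction executes as an atomic section under SC. Reachability for finite-state concurrent programs with a bounded global lock mechanism is PSPACE-complete when the number of processes is fixed (Kozen~\cite{DBLP:conf/focs/Kozen77}) and EXPSPACE-complete when the number of processes is a parameter (reducing to coverability in Petri nets via Rackoff~\cite{DBLP:journals/tcs/Rackoff78}), giving the upper bounds.

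For the lower bound, I would reduce the corresponding SC reachability problem to robustness. Given an instance $(\aprog,\ell)$, I construct $\aprog'$ by (i) wrapping each individual statement of $\aprog$ as its own transaction, which by the semantics of transactions executes in isolation in every trace of $\aprog'$ under any $\textsf{X}\in\{\wcct,\scct,\ccvt\}$, and (ii) appending, at the control location $\ell$ of one designated process, a two-process $\mathsf{SB}$ gadget over fresh shared variables, activated by a guard that is enabled only when $\ell$ has been reached. Because every transaction of the $\aprog$-part of $\aprog'$ is a singleton access, its trace projected on the $\aprog$-part has the same happens-before structure as an SC execution of $\aprog$, so that part is serializable on its own. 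The $\mathsf{SB}$ gadget, on the other hand, is known to induce a cyclic transactional happens-before under each of \wcct, \scct, and \ccvt (cf.\ Fig.~\ref{fig:rob1SBtraceccv}), and it is enabled exactly when $\ell$ is reached. Hence $\aprog$ reaches $\ell$ under SC iff $\aprog'$ admits a non-serializable trace under $\textsf{X}$, which gives PSPACE-hardness in the fixed-process case and EXPSPACE-hardness in the parametric case by choosing the corresponding hard instances of SC reachability.

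The main technical obstacle will be ensuring that the $\mathsf{SB}$ gadget is the \emph{unique} source of non-robustness in $\aprog'$, i.e., that the wrapping of each original statement as a singleton transaction truly prevents any ``spurious'' cycle in the transactional happens-before of the $\aprog$-part. Because each singleton transaction is atomic and therefore immediately delivered in any trace under $\textsf{X}$ with an acyclic projected happens-before, this reduces to checking that, for the variables and processes used in the $\aprog$-part, every trace projects to a transactional happens-before that is acyclic, which is immediate from Theorem~\ref{th:acyclicity} once we verify that no event of the $\aprog$-part conflicts with the fresh variables used inside the $\mathsf{SB}$ gadget (they are disjoint by construction) and that the guard enabling the gadget is itself wrapped as a singleton transaction. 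With this separation argument in place, the two reductions compose with Theorem~\ref{th:final} to yield the stated PSPACE- and EXPSPACE-completeness.
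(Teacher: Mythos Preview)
Your upper-bound argument matches the paper's: Theorem~\ref{th:final} reduces robustness to reachability in a polynomially larger finite-state program under serializability, and you invoke the same complexity results of Kozen and Rackoff.

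The lower bound, however, has a genuine gap. You propose to wrap each statement of $\aprog$ as a \emph{singleton} transaction and argue that the $\aprog$-part of $\aprog'$ is then automatically robust, so that the only possible violation comes from the $\mathsf{SB}$ gadget. This is not correct: transaction isolation guarantees atomicity of each individual transaction, but it does \emph{not} guarantee that the collection of singleton transactions is serializable under causal consistency. A concrete counterexample is the program
\[
\apr_1:\ [x:=1];\ [r_1:=y] \quad \| \quad \apr_2:\ [y:=1];\ [r_2:=x],
\]
which is exactly $\mathsf{SB}$ with each access in its own transaction; it is non-robust against all three causal variants even though every transaction is a single access. Hence, if the input $\aprog$ already contains such a pattern on its own variables, your $\aprog'$ would be non-robust regardless of whether $\ell$ is reachable, and the reduction breaks. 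Your appeal to Theorem~\ref{th:acyclicity} is circular here: that theorem characterizes serializability via acyclicity of the transactional happens-before, but it gives you no reason to believe the happens-before of the $\aprog$-part is acyclic in the first place.

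The paper closes this gap by additionally guarding every transaction of the $\aprog$-part with a \emph{global lock}. The lock forces every pair of transactions in the $\aprog$-part to be causally ordered (each one reads the lock value written by the previous holder), so the $\aprog$-part is trivially serializable under any causal semantics and faithfully simulates $\aprog$ under SC. Only then can one conclude that a robustness violation of $\aprog'$ must come from the $\mathsf{SB}$ gadget, i.e., that $\ell$ was reached. To repair your argument you need an analogous mechanism that makes all transactions of the $\aprog$-part pairwise causally related.
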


\begin{rem}
The reduction to reachability does not manipulate transaction identifiers and it is insensitive to the number of transactions executed by one process. Thus, all our results extend to processes that include unbounded loops of transactions. This includes programs where each process can call a statically known set of transactions (with parameters) an arbitrary number of times. 
\end{rem}
\section{Related Work}

Causal consistency is one of the oldest consistency models for distributed systems \cite{DBLP:journals/cacm/Lamport78}. Formal definitions of several variants of causal consistency, suitable for different types of applications, have been introduced recently~\cite{DBLP:conf/popl/BurckhardtGYZ14,DBLP:journals/ftpl/Burckhardt14,DBLP:conf/ppopp/PerrinMJ16,DBLP:conf/popl/BouajjaniEGH17}. The definitions in this paper are inspired from these works and coincide with those given in \cite{DBLP:conf/popl/BouajjaniEGH17}. In that paper, the authors address the decidability and the complexity of verifying that an implementation of a storage system is causally consistent (i.e., all its computations, for every client, are causally consistent).

While our paper focuses on {\em trace-based} robustness, \emph{state-based robustness} requires that a program is robust if the set of all its reachable states under the weak semantics is the same as its set of reachable states under the strong semantics. While state-robustness is the necessary and sufficient concept for preserving state-invariants, its verification, which amounts in computing the set of reachable states under the weak semantics, is in general a hard problem.
The decidability and the complexity of this problem has been investigated in the context of relaxed memory models such as TSO and Power, and it has been shown that it is either decidable but highly complex (non-primitive recursive), or undecidable \cite{DBLP:conf/popl/AtigBBM10,DBLP:conf/esop/AtigBBM12}. As far as we know, the decidability and complexity of this problem has not been investigated for causal consistency.
Automatic procedures for approximate reachability/invariant checking have been proposed using either abstractions or bounded analyses, e.g., \cite{DBLP:conf/cav/AtigBP11,DBLP:conf/cav/AlglaveKT13,DBLP:journals/cl/DanMVY17,DBLP:conf/tacas/AbdullaABN17}. Proof methods have also been developed for verifying invariants in the context of weakly consistent models such as \cite{DBLP:conf/icalp/LahavV15,DBLP:conf/popl/GotsmanYFNS16,DBLP:conf/eurosys/NajafzadehGYFS16,DBLP:conf/popl/AlglaveC17}. These methods, however, do not provide decision procedures.

Decidability and complexity of trace-based robustness has been investigated for the TSO and Power memory models \cite{DBLP:conf/icalp/BouajjaniMM11, DBLP:conf/esop/BouajjaniDM13,DBLP:conf/icalp/DerevenetcM14}. The work we present in this paper borrows the idea of using minimal violation characterizations for building an instrumentation allowing to obtain a reduction of the robustness checking problem to the reachability checking problem over SC.
However, applying this approach to the case of causal consistency is not straightforward and requires different proof techniques. Dealing with causal consistency is far more tricky and difficult than dealing with TSO, and requires
coming up with radically different arguments and proofs, for (1) characterizing in a finite manner the set of violations, (2) showing that this characterization is sound and complete, and (3) using effectively this characterization in the definition of the reduction to the reachability problem.

As far as we know, our work is the first one that establishes results on the decidability and complexity issues of the robustness problem in the context of causal consistency, and taking into account transactions. The existing work on the verification of robustness for distributed systems consider essentially trace-based concepts of robustness and provide either over- or under-approximate analyses for checking it. In \cite{DBLP:conf/concur/0002G16,DBLP:conf/popl/BrutschyD0V17,DBLP:conf/pldi/BrutschyD0V18,DBLP:journals/jacm/CeroneG18}, static analysis techniques are proposed based on computing an abstraction of the set of computations that is used in searching for robustness violations. These approaches may return false alarms due to the abstractions they consider.
In particular, \cite{DBLP:conf/concur/0002G16} shows that a trace under causal convergence is not admitted by the serializability semantics iff it contains a (transactional) happens-before cycle with a $\cfo$ dependency, and another $\cfo$ or $\sto$ dependency. This characterization alone is not sufficient to prove our result concerning robustness checking. Our result relies on a characterization of more refined robustness violations and relies on different proof arguments.
In  \cite{DBLP:conf/concur/NagarJ18} a sound (but not complete) bounded analysis for detecting robustness violation is proposed. Our approach is technically different, is precise, and provides a decision procedure for checking robustness when the program is finite-state.
\section{Conclusion}

We have studied three variations of transactional causal consistency, showing that they are equivalent for programs without write-write races. 
We have shown that the problem of verifying that a transactional program is robust against causal consistency can be reduced, modulo a linear-size instrumentation, to a reachability problem in a transactional program running over a sequentially consistent shared memory.
This reduction leads to the first decidability result concerning the problem of checking robustness against a weak transactional consistency model. Furthermore, this reduction opens the door to the use of existing methods and tools for the analysis and verification of SC concurrent programs, in order to reason about weakly-consistent transactional programs. It can be used for the design of a large spectrum of static/dynamic tools for testing/verifying robustness against causal consistency. 

Our notion of robustness relies on a particular interpretation of behaviors as traces recording all happens-before dependencies. This is stronger than a more immediate notion of \emph{state-based} robustness that requires equality of sets of reachable states, which means that it could produce false alarms, i.e., robustness violations that are not also violations of the intended program specification. This trade-off is similar in spirit to data races being used as an approximation of concurrency errors (since data races are easier to detect, compared to violations of arbitrary specifications).

An interesting direction for future work is looking at the robustness problem in the context of \emph{hybrid} consistency models where some of the transactions in the program can be declared serializable. These models include synchronization primitives similar to lock acquire/release which allow to enforce a serialization order between some transactions. Such mechanisms can be used as a ``repair'' mechanism in order to make programs robust.



\bibliographystyle{plain}
\bibliography{draft}


\end{document}